\documentclass[11pt]{article}
\usepackage{titlesec}
\usepackage{ntheorem}
\titlespacing*{\section}
{0pt} % Left margin
{1pt} % Space before the section
{1pt}  % Space after the section

\titlespacing*{\subsection}
{0pt} % Left margin
{1pt} % Space before the subsection
{1pt}  % Space after the subsection

\titlespacing*{\subsubsection}
{0pt} % Left margin
{0pt} % Space before the subsubsection
{0pt}  % Space after the subsubsection

\usepackage{pdfpages}
\usepackage{subcaption}
\usepackage{colortbl}
\usepackage{booktabs}
\usepackage{bbm}
\usepackage[normalem]{ulem}
\usepackage{ragged2e}

\usepackage{charter}
\usepackage{pgf}
\usepackage{tikz}
\usetikzlibrary{arrows,automata}
\usepackage{booktabs} 
\usepackage[latin1]{inputenc}
\usepackage{verbatim}
\usepackage{subcaption}

\usepackage{xcolor}

\usepackage{booktabs}
\usepackage{bbm}

\usepackage{natbib}

\colorlet{drkblue}{blue!61.8!black}

\usepackage[bookmarks=false]{hyperref}
\hypersetup{
     pdfborder = {0 0 0},
  colorlinks=true,
 citecolor=drkblue,
linkcolor=drkblue,
urlcolor=drkblue,
pdfstartview={XYZ null null 0.80}
}

\usepackage[hyperpageref]{backref}

\usepackage[left=2.5cm,top=3.5cm,bottom=3.5cm,right=3cm]{geometry}

\usetikzlibrary{math}

\usepackage{amssymb,amsmath,xcolor}

\newtheorem{theorem}{Theorem}

\newtheorem{lemma}{Lemma}[section]
\newtheorem{definition}{Definition}

\newtheorem{remark}{Remark}

\newtheorem{example}{Example}
\newtheorem{axiom}{Axiom}

\theorembodyfont{\rmfamily}
\newdimen\dummy
\dummy=\oddsidemargin
\usetikzlibrary{math}
\usepackage{appendix}

\usepackage{amssymb,amsmath,xcolor}

\newtheorem{prop}{Proposition}[section]

\theorembodyfont{\rmfamily}
\theorembodyfont{\rmfamily}
\newenvironment{proof}[1][Proof:]{\noindent\textbf{#1} }

\theorembodyfont{\rmfamily}
\newdimen\dummy
\dummy=\oddsidemargin
\newcommand{\xyzt}{\{x,y,z,t\}}
\newcommand{\xyz}{\{x,y,z\}}

\newcommand{\yz}{\{y,z\}}
\newcommand{\xz}{\{x,z\}}
\newcommand{\xy}{\{x,y\}}

\newcommand{\x}{\{x\}}
\newcommand{\y}{\{y\}}

\DeclareMathOperator*{\argmax}{argmax} 
\DeclareMathOperator*{\argmin}{argmin}

\title{Decision Making under Dual-System Thinking\thanks{We thank Christopher Chambers, Ruichao Pan, Emel Filiz-Ozbay, Erkut Ozbay, Daniel Vincent, and Siming Ye for helpful discussions and comments.}}

 \author{Yusufcan Masatlioglu\thanks{\protect\linespread{1}\protect\selectfont  Department of Economics, University of Maryland, 3114 Tydings Hall, 7343 Preinkert Dr., College Park, MD 20742. Email: \texttt{yusufcan@umd.edu.}}
\and Tri Phu Vu\thanks{\protect\linespread{1}\protect\selectfont  Department of Economics, University of Maryland, 3114 Tydings Hall, 7343 Preinkert Dr., College Park, MD 20742. Email: \texttt{tvuphu@umd.edu.}}}
\date{May 25, 2025}

\begin{document}

\maketitle
\begin{abstract}
This paper introduces the Dual-System Thinking (DST) model, a decision-theoretic framework that integrates psychological dual-process theories into economic modeling. A single cognitive weight parameter governs the relative influence of the automatic and deliberate cognitive systems. Even the simplest form of DST exhibits distinct behavioral patterns, suggesting that the psychological insights of dual-system theory offer a distinct and valuable approach to modeling choice behavior. Empirically, we show that the model can accommodate several empirical findings and outperform well-known models in discrete choice analysis across various contexts. We also apply the model to study optimal list design and rationality in stochastic environments. 

\end{abstract}

\textbf{Keywords:} Thinking fast, Thinking slow, Dual-process theory, System 1, System 2, Multinomial Logit

\newpage

\section{Introduction}

Over the past fifty years, dual-system theories have emerged as a cornerstone in psychology, offering a unifying framework for understanding decision-making, reasoning, and social cognition \citep{wason1974dual,evans1989bias, reber1989implicit, epstein1994integration, evans1996rationality, sloman1996empirical, stanovich1999rational}. While these theories vary in detail, they share a foundational premise: human cognition is governed by two distinct systems-one fast, automatic, intuitive, and unconscious (System 1), and the other slow, analytical, deliberate, and conscious (System 2). Following \cite{stanovich1999rational}, we adopt this terminology in this paper.\footnote{Alternative terminologies include Type 1 vs. Type 2, heuristic vs. analytic, associative vs. rule-based, automatic vs. reflective, and intuitive vs. deliberate thinking.}

The influence of dual-system theories has expanded far beyond cognitive psychology, permeating domains as diverse as economics \citep{kahneman2003maps,ilut2023economic}, intelligence \citep{kaufman2011intelligence}, creativity \citep{Barr02012015}, advertising \citep{buttner2014dual}, misinformation \citep{bago2020fake}, interview bias \citep{derous2016your}, and even artificial intelligence \citep{bonnefon2020machine,li2025system}. The influence of these theories is not confined to academia. Notably, institutions like the National Academy of Medicine and the World Bank have incorporated these insights into policy recommendations \citep{ball2015improving,worldbank2015wdr}, underscoring their real-world relevance.

While dual-system models are intuitive, it remains an open question whether they yield observed behaviors that are empirically distinct from those of existing models. One might expect that with sufficient complexity, any model could generate distinct predictions. In contrast, this paper demonstrates that even a simple, parsimonious decision-theoretic model incorporating dual-system thinking predicts behavioral patterns that diverge meaningfully from those produced by existing parametric models. This suggests that the psychological insights of dual-system theory are not only descriptively plausible but also normatively and predictively relevant, offering a distinct and valuable approach to modeling choice behavior.

Any decision-theoretic choice model that incorporates dual-system theories must address three fundamental components: i) Modeling System 1 -- How should automatic, intuitive processing be represented? ii) Modeling System 2 -- How should deliberate, analytical reasoning be captured? iii) Integration Mechanism -- How do the two systems interact to produce a final decision? In this paper, guided by the principle of simplicity, we adopt the most natural and intuitive modeling choices -- those that would first come to mind.

Consistent with its characterization as slow, deliberate, and conscious, we model System 2 as a rational decision-maker. When engaged, System 2 performs exhaustive evaluation of all available alternatives, carefully weighing costs and benefits to arrive at an optimal choice. This follows the classical economic framework of utility maximization, where behavior is governed by optimization of a strict utility function $u$. The rational choice paradigm serves dual roles here: as a normative benchmark for optimal decisions and as a descriptive model of deliberate reasoning. We later relax the assumption of System 2 being perfectly rational.

In contrast, System 1 operates through fast, automatic processes that are inherently variable and sometimes prone to both systematic errors (bias) and random errors (noise). We formalize System 1 using the Luce choice rule \citep{luce1959individual}, which provides a psychologically plausible representation of intuitive decision-making through two key features.  First, probabilistic choice reflects automaticity and variability, capturing random errors in automatic processing (e.g., attentional lapses, noise in perception) and stochasticity in intuitive judgments (consistent with psychological evidence on heuristics). On this point, \cite{kahneman2021noise} provide empirical evidence arguing that many operations and heuristics that System 1 relies on extensively cause ``predictable, directional errors ... as well as noise," highlighting the stochastic nature of choice under System 1. Second, it aligns with System 1's reliance on associative memory, affective responses, and salience-driven attention. Unlike System 2's utility maximization, System 1 can be context-dependent and non-rational. 

The Luce model's probabilistic, context-sensitive, and parsimonious structure makes it an ideal formalization of System 1, complementing System 2's rational utility maximization. By combining the two, the dual-process model can explain both optimal and biased decisions within a unified framework. In the Luce model, choice probabilities are proportional to the attractiveness of each alternative, represented by the function $w$, which captures utility, salience, familiarity, etc. The stochastic nature of choice in the Luce model is driven by alternative-specific noise terms $\varepsilon$ that follow an i.i.d. Gumbel distribution.\footnote{In economics, this distribution typically has cdf $G(\varepsilon) = \exp(-\exp(-\varepsilon))$ and is also referred to as Type I extreme value distribution.} We allow the attractiveness function $w$ (System 1) to differ from the utility function $u$ (System 2), capturing potential conflicts between intuitive and deliberative processes. We also discuss the case where $u$ and $w$ induce the same ordering among
alternatives.

Finally, there are two main theories on how these two systems interact. In the parallel-competitive account, two systems generate independent judgments, with one eventually dominating behavior \citep{sloman1996empirical,smith2000dual}. In the default-interventionist account, System 1 processes produce rapid intuitions, which System 2 processes minimally monitor--either endorsing or overriding them through deliberate reasoning \citep{Evans2007A,kahneman2011thinking,evans2013dual}.  We unify these perspectives through a simple parameter $\alpha \in (0,1)$, representing the probability that System 2 controls the final decision. In other words, System 2 is active with probability $\alpha$.\footnote{We restrict attention to $\alpha \in (0,1)$ so that both cognitive systems are active. The extreme cases $\alpha=1$ (always deliberative) and $\alpha=0$ (always automatic) have been extensively analyzed in the literature, and excluding them has virtually no effect on the model's identification or characterization. Moreover, as shown in Section \ref{sec:identification}, the observed data allow these boundary cases to be readily distinguished.}

We are now ready to define our model formally. The choice can be written as an $\alpha-$mixture of the decisions of Systems 1 and 2. Formally, the choice probability of $x$ in $S$ is given by 

\[
\rho(x,S)=(1-\alpha)  \underbrace{\frac{w(x)}{\sum\limits_{y\in S}w(y)}}_{\text{System 1}} +\alpha  \underbrace{\vphantom{\frac{w(x)}{\sum\limits_{y\in S}w(y)}} \mathbbm{1}_{\{\text{$\arg\max\limits_{y \in S} u(y)$\}}}(x)}_{\text{System 2}}
\]

This parsimonious yet powerful formulation captures both optimal (System 2) and biased, random (System 1) decisions within a unified choice framework. It links the abstract cognitive architecture to observable choice behavior. We call this model Dual-System Thinking (DST).

In addition to its simplicity, the DST model enjoys other desirable properties that facilitate theoretical and empirical analysis. First, the model has a unique representation. In terms of identification, Section \ref{sec:identification} demonstrates that all unobserved primitives of the model, $(\alpha,u,w)$, can be uniquely recovered based on observed choice behaviors in only binary and tripleton menus, thus minimizing data requirements.\footnote{More precisely, utility is ordinal, and hence $u$ is identified only up to strictly increasing transformations. In the main text, we represent $u$ by the linear order it induces, which is unique. Given the normalization the summation of all $w$s is $1$, $w$ is uniquely identified as well.} These uniqueness and identification results facilitate welfare analysis and comparative statics, while also allowing an outside analyst to make out-of-sample predictions (Section \ref{sec:comparative statics}).

Second, the DST model has nice behavioral foundations. Section \ref{sec:characterization} shows that the model is characterized by five simple behavioral postulates. These axioms link the unobserved primitives of the model, $(\alpha,u,w)$, to observed behavior, $\rho$. Our characterization is valid with limited data, as often found in empirical settings. Through the characterization of the DST model, we also uncover a new characterization for the Luce model. This illustrates that our model is distinct from other generalizations of the Luce model.

Third, the DST model can explain well-known empirical patterns and is useful for empirical estimation in discrete choice settings. Section \ref{sec:anomaly} demonstrates how the model accommodates violations of stochastic transitivity \citep{Rieskamp_Busemeyer_Mellers_2006_JEL}, persistence of market shares (illustrated by the famous red bus/blue bus example in \cite{Debreu1960}), and nonzero market shares following a large number of product introductions \citep{bajari2001discrete}. Section \ref{sec:empirical_estimation} shows that estimating the model by maximum likelihood is straightforward both technically and computationally. As an application, we evaluate the DST model on the intertemporal choice data in \cite{roelofsma2000intransitive}, and show that the DST model outperforms various well-known models in discrete choice analysis, including logit, nested logit, independent probit, and covariance probit.

Finally, when additional elements are introduced into the choice environment, many desirable properties of the model are preserved despite the resulting increase in complexity. For example, in an environment with random product availability investigated in \cite{Brady-Rehbeck_2016_ECMA}, the model remains uniquely identified with its behavioral foundations largely unchanged (Section \ref{sec:DST-PA}). Furthermore, the model can accommodate violations of Regularity in this setting, which allows it to explain other empirical patterns such as attraction effects \citep{Huber:1982}.

 Section \ref{sec:application} demonstrates the empirical promise of our framework for firms' decision making by applying the model to an optimal list design problem. In recent years, online platforms such as Amazon, Target, and Apple's App Store are increasingly assuming a dual role, functioning both as marketplaces for third-party sellers and as sellers by offering their own products on their marketplaces \citep{hagiu2022should,padilla2022self,farronato2023self,etro2024commerce}. Empirical studies have indicated that these platforms often manipulate customer search results to promote their own items \citep{chen2019steering,farronato2023self}. This practice, commonly referred to as self-preferencing, has triggered heated policy debates in the United States and European countries due to concerns about antitrust violations and potential negative effects on consumer welfare. Motivated by this phenomenon, we study a platform facing a population of customers with dual-system thinking that aims to design an optimal list of products. Our main finding is that the optimal list may not perfectly align with the platform's priority order. Consequently, products that provide moderate benefits to both the platform and customers may occupy the top positions in the optimal list. This result is primarily driven by potential conflicts between intuitive and deliberative processes in customer choice: the intuitive process tends to favor products at or near the top of the list, while the deliberative process searches exhaustively for the best option.

In our framework, we model System 2 as a rational decision-maker. As this system is engaged in the decision-making process with probability $\alpha$, this parameter measures the degree to which the observed behavior aligns with the deterministic choices of a preference-maximizing agent. Consequently, $\alpha$ provides a natural and conservative measure of stochastic rationality in our setting. In Section \ref{sec:rationality}, we compare this measure with the elegant swaps index introduced in \cite{apesteguia2015measure}, which provides a model-free measure of stochastic rationality. We first identify a simple sufficient condition under which these two measures align with each other and produce the same ordering of rationality. We then argue that the swaps index may indicate that choice behavior becomes less rational even as the likelihood of making rational decisions increases. Our analysis suggests that using a model-free index of stochastic rationality, like the swaps index, can be misleading under certain circumstances.

We aim to develop a parsimonious model of decision-making that incorporates the psychological notion of dual-system thinking. Our modeling choices are motivated by the intrinsic properties of the two systems and their implications for choice behavior. Section \ref{sec:micro_foundation} shows that the DST model can be alternatively interpreted as the solution to a constrained optimization problem. Following other models of deliberate randomization \citep{machina1985stochastic}, the decision maker in our framework intentionally contemplates to determine the likelihood that each option will be chosen. The decision maker is boundedly rational and operates
under dual-system thinking. Drawing on the cognitive miser and default-interventionist theories from psychology, System 1 is activated first and seeks to minimize an expected cost. Each option has an attractiveness level, which may reflect its ease of selection, familiarity, utility, etc. The inverse of an option's attractiveness serves as a metric for the selection cost in our framework. System 2 intervenes when System 1's outcome is unsatisfactory. Once it intervenes, System 2 decides on the magnitude of improvement it will enact upon System 1's outcome. The intervention by System 2 acts as a constraint on System 1's optimization problem, and the magnitude of improvement chosen by System 2 naturally gives rise to the degree of intervention in our model (parameter $\alpha$).

Other sections of the paper are organized as follows. Section \ref{sec:model} introduces our dual-system thinking model (DST), providing identification, characterization, and comparative statics. Section \ref{sec:extensions} generalizes our baseline model to accommodate random product variation and menu-dependent cognitive weights. Section \ref{sec:compare} reviews the literature and compares DST to related models. Section \ref{sec:conclusion} concludes. Additional results and all the proofs are in the appendices.

%%%%%%%%%%%%%%%%%%%%%%%%%%%%%%%%%%%%%%%%%%%%%%%%%%%%%%%%%%%%%
%%%%%%%%%%%%%%%%% Models %%%%%%%%%%%%%%%%%%%%%%%%%%%%%%%%%%%
%%%%%%%%%%%%%%%%%%%%%%%%%%%%%%%%%%%%%%%%%%%%%%%%%%%%%%%%%%%%%

\section{Model}\label{sec:model}
Let $X$ be a finite non-empty set of alternatives. We assume that there are at least three distinct elements in $X$. Let $\mathcal{X}$ be the set of all nonempty subsets of $X$. We refer to each element of $\mathcal{X}$ as a choice set or a menu. For notational simplicity, given a set $S$ and an element $x$, we write $S\cup x$ instead of $S\cup \{x\}$ and $S\setminus x$ instead of $S\setminus \{x\}$. Let the sets of non-negative and positive real numbers be $\mathbb{R}_+$ and $\mathbb{R}_{++}$, respectively.

Let $\mathcal{D}\subseteq \mathcal{X}$ be a collection of menus for which the researcher has data. Our model can accommodate situations in which data are scarce (i.e., $\mathcal{D}\ne \mathcal{X}$), as is often the case in empirical settings. When data are limited, following \cite{Manzini-Mariotti_2014_ECMA}, we assume that $\mathcal{D}$ satisfies the following richness conditions: (1) if $S\in \mathcal{D}$ then $T\in \mathcal{D}$ for all non-empty $T\subseteq S$ and (2) $\{x, y, z\} \in \mathcal{D}$ for all distinct $x,y,z\in X$. The richness conditions imply that the researcher has data from all binary and tripleton menus. A random choice function (RCF) is a mapping $\rho\colon X\times \mathcal{D}\rightarrow [0,1]$ such that $ \sum_{x\in S} \rho(x,S)=1$ for all $S\in \mathcal{D}$; $\rho(x,S) \in [0,1]$ for all $x\in S\in \mathcal{D}$; and $\rho(x,S)=0$ for all $x\notin S$. Here, $\rho(x,S)$ represents the choice probability of $x$ in $S$. Define $\rho(S',S)= \sum_{x\in S'}\rho(x,S)$ as the probability that the chosen alternative belongs to $S'\subseteq S$ when the menu is $S$.

Let  $\succ$ be a linear order over $X$, which is complete, transitive, and antisymmetric. Let $b_{\succ}(S)$ and $l_{\succ}(S)$ denote the best and worst options in $S$ according to $\succ$. We also use the terms $\succ$-best and $\succ$-worst options in $S$ as synonyms for $b_{\succ}(S)$ and $l_{\succ}(S)$, respectively.

To build our model, we introduce three key ingredients. First, the \emph{cognitive weight} $\alpha$, which captures how much influence the deliberate-thinking system has relative to heuristic thinking. Second, the \emph{strict utility function} $u(\cdot)$, which guides choices when decisions are made through deliberate reasoning. Third, the \emph{Luce weight} $w(\cdot)$, which determines choices under fast, intuitive thinking. Because utility is only ordinal---that is, it is identified only up to monotonic transformations---we represent it by a linear order $\succ$, reflecting the strict ranking of alternatives. With these primitives in place, we can now formally define the model.

\begin{definition} A random choice function $\rho$ has a dual-system thinking (DST) representation if there exists a linear order $\succ$ over $X$, a constant $\alpha \in (0,1)$, and a weight function $w\colon X\to \mathbb{R}_{++}$ such that for all $x\in S\in \mathcal{D}$,
\[
\rho(x,S)=(1-\alpha)  \frac{w(x)}{\sum\limits_{y\in S}w(y)} +\alpha \mathbbm{1}(\text{$x$ is $\succ$-best in S)} \ \ \ \tag_{\textbf{DST}}
\]

\end{definition}
Every $(\alpha,\succ,w)$ tuple induces an RCF, which is denoted by $\rho_{\alpha,\succ,w}$. We call $\rho$ a DST with representation $(\alpha,\succ,w)$ if $\rho=\rho_{\alpha,\succ,w}$.  We also say $\succ$ represents $\rho$ if there exist $\alpha$ and $w$ such that $\rho=\rho_{\alpha,\succ,w}$. 

An important special case of our model arises when System 1 and System 2 are aligned,  in the sense that they induce the same ranking over alternatives.  Formally, two cognitive systems are consistent if the preference $\succ$ and the Luce weight $w$ agree on the ranking of every pair $(x,y)$:
\[
\tag_{\textbf{Consistency}} x\succ y \quad \text{ if and only if } \quad w(x)>w(y).
\]
This form of consistency is plausible in certain environments, as System 1, despite making decisions heuristically and intuitively, can be powerful and accurate on average \citep[p.1450]{kahneman2003maps}. 

The DST model allows for more general situations in which the two cognitive systems are not aligned with each other. That is, $x$ may be preferred to $y$ under deliberate thinking, but the weight of $x$ is smaller than the weight of $y$ under heuristic thinking ($w(x)<w(y)$). This scenario can happen when the decision maker selects options based on how salient they are or how easily they come to mind under intuitive thinking. In this case, $w(x)$ may measure the salience of $x$, which in general may deviate from its utility.

When the cognitive weight $\alpha$ goes to 1, the DST approximates the utility maximization model. In contrast, when $\alpha\to 0$, the DST approximates the Luce model. When the Luce weights are the same for all alternatives, DST corresponds to a special case of trembling hand models where choice under intuitive thinking is uniform \citep{harless1994predictive}.

Throughout the paper, we write $w(S)$ for $\sum_{x\in S}w(x)$, and, without loss of generality, we assume that $w(X)=1$. Besides the dual-process thinking interpretation, we show below that the DST model can also be applied to other environments with different interpretations.

\begin{example}[Imperfect recall memory] In a repeated-choice environment, the DM may have experience with all available alternatives in the past. The DM relies on memories to make choices. With probability $\alpha$, the decision maker successfully retrieves the relevant memory and chooses the best available option. With probability $1-\alpha$, the DM unsuccessfully retrieves the relevant memory and randomizes via a Luce rule.
\end{example}

\begin{example}[Objective recommendation] Suppose a ranking (a linear order) $\succ$ over alternatives is recommended to the DM. The recommended information is objective and the DM follows the suggested ranking with probability $\alpha$. When she trusts the ranking, she acts according to the ranking and chooses the best-ranked option. With probability $1-\alpha$, she does not take the ranking into account and acts according to her Luce randomization rule. 
\end{example}

\begin{example}[Dual selves] The decision maker has two selves. One self always chooses a single alternative by deterministic preference maximization. The other self prefers a mixture of alternatives to a single alternative \citep{machina1985stochastic} and does so via a Luce rule.
\end{example}

Before we move to the next section, we state the uniqueness property of DST. Suppose $\rho$ has a DST representation. Proposition \ref{prop:identification} states that such a representation is unique. 
\begin{prop}\label{prop:identification}
    Suppose $\rho$ has a DST representation $(\alpha,\succ,w)$. Then $(\alpha,\succ,w)$ is unique.
\end{prop}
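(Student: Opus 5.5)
The plan is to show that every primitive can be read off from $\rho$ restricted to binary and tripleton menus, which always lie in $\mathcal{D}$. Throughout, write $r_S(a,b)=\rho(a,S)/\rho(b,S)$ for $a,b\in S$; all choice probabilities are strictly positive because $w>0$ and $\alpha<1$. From (DST) we get two formulas. If neither $a$ nor $b$ is $\succ$-best in $S$, then $r_S(a,b)=w(a)/w(b)$. If $a$ is $\succ$-best in $S$, then
\[
r_S(a,b)=\frac{w(a)}{w(b)}+\frac{\alpha\, w(S)}{(1-\alpha)\,w(b)} .
\]
The first step compares a tripleton $S=\{a,b,c\}$ with the binary menu $\{a,b\}$. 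Suppose $a$ is best in $S$, and hence also best in $\{a,b\}$. Since $w(S)>w(a)+w(b)$, the formula gives $r_S(a,b)>r_{\{a,b\}}(a,b)$. Suppose instead that $c$ is best in $S$. Then $r_S(a,b)=w(a)/w(b)$, while $r_{\{a,b\}}(a,b)$ is strictly larger than $w(a)/w(b)$ if $a\succ b$ and strictly smaller if $b\succ a$.

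The second step recovers $\succ$. In a triple $S$, the $\succ$-best element $x$ satisfies $r_S(x,b)>r_{\{x,b\}}(x,b)$ for both other elements $b$. No other element $y$ can satisfy this, because $r_S(y,x)<r_{\{x,y\}}(y,x)$ by the first step applied to the pair $(x,y)$. Hence the best element of every triple is determined by $\rho$. The remaining two elements $a,b$ of $S$ are both non-best, so $r_S(a,b)=w(a)/w(b)$. Their order is then determined by the sign comparison: $a\succ b$ iff $r_{\{a,b\}}(a,b)>r_S(a,b)$. Since $|X|\ge 3$, every pair lies in some triple, so $\succ$ is determined by $\rho$. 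Consequently, any two representations share the same linear order.

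The third step recovers $\alpha$ and $w$. Let $t$ be the $\succ$-top element of $X$. For any $y,z\ne t$, the triple $\{t,y,z\}$ identifies $w(y)/w(z)$ as $r_{\{t,y,z\}}(y,z)$. Take two such elements $y\succ z$, which exist because $|X|\ge3$. Then $\rho(z,\{y,z\})=(1-\alpha)\,w(z)/(w(y)+w(z))$, and with the ratio known this pins down $\alpha$. With $\alpha$ known, for any $y\ne t$ the equation $\rho(y,\{t,y\})=(1-\alpha)\,w(y)/(w(t)+w(y))$ pins down $w(t)/w(y)$. Now all ratios of $w$ are identified, and the normalization $w(X)=1$ fixes $w$ uniquely.

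The main obstacle is the second step. A single binary menu cannot separate the deliberative boost $\alpha$ from the Luce weights, so the argument must exploit the fact that non-best elements of a triple reveal pure Luce ratios. The strict inequality $w(S)>w(a)+w(b)$ is exactly what makes the best element of each triple detectable.
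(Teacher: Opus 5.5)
Your proposal is correct and follows essentially the paper's route. Comparing $r_S(a,b)$ with $r_{\{a,b\}}(a,b)$ is exactly reading the sign of $D_{ab,c}$, so your triple-by-triple recovery of the order is the revealed relation $\succ_R$ of Proposition~\ref{prop:identification_1}. Your recovery of $\alpha$ and $w$ is the algebra of Appendix~\ref{sec:Appendix_identification_DST}, except that you anchor the weights at the $\succ$-top element through the binaries $\{t,y\}$ where the paper anchors at the $\succ$-worst element. You also carry out the sign verification the paper calls routine, and by arguing directly that any representing order is pinned down by $\rho$, you sidestep checking that $\succ_R$ is well defined across different triples.
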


To appreciate the unique identification of our model, note that there are three sources of variation in a DST. First, we can vary the relative importance of the two cognitive systems by varying $\alpha$. Second, we can change the preference of the decision maker. Note that the number of possible preference orderings is large, even when there are few options.\footnote{For instance, there are $6,24,120,$ and $720$ potential preferences for $3,4,5,$ and $6$ distinct options, respectively.} Third, variations can occur in the weight function that governs choice under intuitive thinking. Proposition \ref{prop:identification} consolidates these three types of variations and asserts that the representation is unique.

%%%%%%%%%%%%%%%%%%%%%%%%%%%%%%%%%%%%%%%%%%%%%%%%%%%%%%%%%%%
%%%%%%%%%%%%%%%%%%% Identification section%%%%%%%%%%%%%%%%
%%%%%%%%%%%%%%%%%%%%%%%%%%%%%%%%%%%%%%%%%%%%%%%%%%%%%%%%%%%
\subsection{Identification and Out-of-Sample Prediction}\label{sec:identification}
We investigate how to infer the model's primitives from observed data in this subsection. Proposition \ref{prop:identification} establishes that \((\alpha,\succ,w)\) is, in principle, identifiable, but it is silent on how to recover these parameters in practice. We show below that the parameters are uniquely identified by using only choice data from binary and tripleton menus, thereby minimizing data requirements.

\smallskip
\noindent\textbf{Identifying the preference.} A distinctive feature of our model lies in how the underlying preference is identified from choice data. In many models, identification of preference between two alternatives are intrinsically pairwise. Our identification strategy departs from this template. Because both cognitive systems are simultaneously active, the pairwise construction by itself cannot disentangle the deliberate ranking, due to the interaction between two systems. To recover the ranking between $x$ and $y$, one must observe how the relative odds of $x$ against $y$ respond to the introduction of a third alternative $z$, and this response must be read off jointly across all three pairings within $\{x,y,z\}$.

This three-way structure is what makes our identification novel and more involved than the usual pairwise construction. The identification strategy relies on the absolute difference in the relative choice probabilities between tripleton and binary menus. For arbitrary pairwise distinct $x,y,z$, let $D_{xy,z}$ be the difference in the relative choice probability between $x$ and $y$ when adding $z$ to the binary menu $\{x,y\}$. Formally,
\[
D_{xy,z} \;=\; \frac{\rho(x,\{x,y,z\})}{\rho(y,\{x,y,z\})}
              \;-\; \frac{\rho(x,\{x,y\})}{\rho(y,\{x,y\})}.
\]
The sign of $D_{xy,z}$ indicates how the presence of $z$ affects the relative odds of $x$ versus $y$:  
\begin{itemize}
    \item If $D_{xy,z} > 0$, then adding $z$ improves the choice probability of $x$ relative to $y$.  
    \item If $D_{xy,z} < 0$, then adding $z$ improves the choice probability of $y$ relative to $x$.  
\end{itemize}
By construction, $D_{xy,z}  D_{yx,z} \le 0$. Note also that in the Luce model $D_{xy,z}= 0$ for all $x,y,z$, whereas in our model $D_{xy,z} \ne 0$ under any preference ordering.\footnote{We assume that $\alpha\in (0,1)$ in a DST so that both systems will be active. Our framework can also capture the extreme cases of $\alpha$ by looking at $D_{xy,z}$. If $\alpha=1$, $D_{xy,z}$ is not well-defined for some $x,y,z$ because certain choice probabilities are zero. For $\alpha=0$ (the Luce model), $D_{xy,z}$ is well-defined and equals zero for all $x,y,z$.} 

\medskip

By observing the sign pattern of the three differences $(D_{xy,z}, D_{yz,x}, D_{zx,y})$, we can identify the complete strict ranking among $x,y,$ and $z$. In our model, the improvement relation cannot contain a cycle, i.e.\ the cases $(+,+,+)$ or $(-,-,-)$ are ruled out. Hence, there are six admissible sign patterns, each of which corresponds to one particular strict preference ordering as shown in the following table:

\begin{center}
\begin{tabular}{ccccl}
\toprule
 $D_{xy,z}$ & $D_{yz,x}$ & $D_{zx,y}$ & $\;\;\;\;$ & Implied Ranking \\
\midrule
$+$ & $-$ & $-$ & & $x \succ_R y \succ_R z$ \\
$-$ & $+$ & $-$ & & $y \succ_R z \succ_R x$ \\
$-$ & $-$ & $+$ & & $z \succ_R x \succ_R y$ \\
$+$ & $+$ & $-$ & & $x \succ_R z \succ_R y$ \\
$-$ & $+$ & $+$ & & $y \succ_R x \succ_R z$ \\
$+$ & $-$ & $+$ & & $z \succ_R y \succ_R x$ \\
\bottomrule
\end{tabular}
\end{center}

Motivated by this observation, for pairwise distinct $x,y,z$, we define the revealed ranking relation $\succ_R$ as
\begin{equation}\label{eq:revealed_preference_definition}
x \succ_R y \, \text{ and } \, y \succ_R z \, \text{ and } \, x \succ_R z \quad \text{if} \quad 
D_{xy,z} > 0, \;\; D_{yz,x} < 0, \text{ and } D_{zx,y} < 0.
\end{equation}

In our model, the relative odds of the best alternative always improve in the presence of a new alternative. Hence, $x$ being preferred to both $y$ and $z$ implies that $D_{xy,z} > 0$ and $D_{xz,y} > 0$ (equivalently $D_{zx,y} < 0$ ). On the other hand, the presence of the best alternative improves the relative odds of the worst alternative; hence, $x$ is preferred to $y$ and $y$ is preferred to $z$ imply $D_{yz,x} < 0$. One interesting observation is that our revealed preferences are based on three options as opposed to binary comparisons. Hence, one needs to show that $\succ_R$ is a well-defined linear order. The next result shows that $\succ_R$ is indeed well-defined and represents $\rho$ if $\rho$ admits a DST representation.

\begin{prop}\label{prop:identification_1} \it{(Revealed Preference)} Suppose $\rho$ has a DST representation. Then $\succ_R$ represents $\rho$.  
\end{prop}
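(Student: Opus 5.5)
The plan is to compute the signs of $D_{xy,z}$, $D_{yz,x}$, $D_{zx,y}$ explicitly under a DST representation $(\alpha,\succ,w)$, and then show that the relation $\succ_R$ defined in \eqref{eq:revealed_preference_definition} coincides with $\succ$. Once $\succ_R=\succ$ is established, $\succ_R$ is automatically a well-defined linear order, and it represents $\rho$ via the same $\alpha$ and $w$.

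Fix distinct $x,y,z$ and suppose, after relabeling, that $a\succ b\succ c$ is the true ranking on $\{a,b,c\}=\{x,y,z\}$. Write $W=w(a)+w(b)+w(c)$. All choice probabilities are strictly positive because $w>0$ and $\alpha<1$, so every $D$ is well defined. There are three cases to compute.

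\emph{Odds of $a$ against $b$ when $c$ is added.} The binary odds are
\[
\frac{(1-\alpha)w(a)/(w(a)+w(b))+\alpha}{(1-\alpha)w(b)/(w(a)+w(b))}
=\frac{w(a)}{w(b)}+\frac{\alpha(w(a)+w(b))}{(1-\alpha)w(b)},
\]
while the ternary odds equal $\frac{w(a)}{w(b)}+\frac{\alpha W}{(1-\alpha)w(b)}$. Their difference is $\frac{\alpha w(c)}{(1-\alpha)w(b)}>0$.

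\emph{Odds of $a$ against $c$ when $b$ is added.} The same computation gives a difference of $\frac{\alpha w(b)}{(1-\alpha)w(c)}>0$.

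\emph{Odds of $b$ against $c$ when $a$ is added.} In the binary menu $b$ is best, so the odds are $\frac{w(b)}{w(c)}+\frac{\alpha(w(b)+w(c))}{(1-\alpha)w(c)}$. In the ternary menu neither $b$ nor $c$ is best, so the odds are $\frac{w(b)}{w(c)}$. The difference is $-\frac{\alpha(w(b)+w(c))}{(1-\alpha)w(c)}<0$.

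These computations give three rules. The best element gains odds against each of the others, so $D_{ab,c}>0$ and $D_{ac,b}>0$, and since $D_{ca,b}$ is the negative of the reciprocal-type counterpart, $D_{ca,b}<0$. The middle element loses odds against the worst when the best is added, so $D_{bc,a}<0$. Using $D_{uv,t}D_{vu,t}\le 0$, with strict signs here because all terms are nonzero, the sign of $D_{vu,t}$ is opposite to that of $D_{uv,t}$. This lets the cyclic triple $(D_{xy,z},D_{yz,x},D_{zx,y})$ be read off for each of the six orderings. Checking them against the table confirms that each ordering produces exactly its listed sign pattern, and that the cyclic patterns $(+,+,+)$ and $(-,-,-)$ never arise.

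The main obstacle is well-definedness. Definition \eqref{eq:revealed_preference_definition} assigns the ranking of a pair $x,y$ through a chosen third element $z$, so different choices of $z$ could in principle conflict. The computation resolves this. For any $z$, the triple $\{x,y,z\}$ triggers the defining pattern (under the appropriate labeling from the table) exactly when that labeling matches the true $\succ$ on $\{x,y,z\}$, since the map from orderings to sign patterns is injective. Hence $x\succ_R y$ if and only if $x\succ y$, independently of $z$; the assumption $|X|\ge3$ guarantees that some $z$ exists. Therefore $\succ_R=\succ$ is a linear order, and $\rho=\rho_{\alpha,\succ_R,w}$. By Proposition \ref{prop:identification}, this is the unique representing order.
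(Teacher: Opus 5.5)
Your proposal is correct and follows the paper's route: compute from the functional form that the true ordering $a\succ b\succ c$ of each triple yields the defining pattern $D_{ab,c}>0$, $D_{bc,a}<0$, $D_{ca,b}<0$, so $\succ_R$ agrees with $\succ$ triple by triple. Your check that no other labeling triggers the pattern (the map from orderings to sign patterns is injective) makes explicit the converse inclusion, which the paper leaves implicit. The closing appeal to Proposition \ref{prop:identification} is unnecessary, and it is circular relative to the paper, which derives uniqueness of the order from this proposition; your argument already shows directly that any representing order must equal $\succ_R$.
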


Proposition \ref{prop:identification_1} allows us to identify the underlying preference of a DST model. With the preference recovered, the task of identifying $\alpha$ and $w$ becomes straightforward through simple algebra. Appendix \ref{sec:Appendix_identification_DST} provides explicit formulas of $\alpha$ and $w$ using choices from binary and tripleton menus.

\medskip
\noindent\textbf{Out-of-sample prediction}. A useful consequence of our identification result is that the analyst, having observed choices only from binary and tripleton menus, can recover the primitives $(\alpha,\succ,w)$ and thereby predict choice behavior on every larger menu, which need not belong to the data $\mathcal{D}$. We now use this implication to make a cautionary point. A natural, model-free instinct is to read welfare directly off observed choice frequencies: an alternative that is chosen rarely must be undesirable, and one that is chosen rarely on \emph{every} observed menu must be the worst. We show that this instinct is not merely imprecise but can be exactly inverted. Thus, a welfare ranking based on choice probabilities can be misleading.

Suppose the analyst observes choices from certain binary and tripleton menus and would like to use the observed data to predict choices from $X=\xyzt$, which is not observed. As an example, suppose the data satisfy the following conditions (b1)-(b3), which govern choices from binaries, and condition (t1), which specifies choices from tripletons:
\begin{itemize}
    \item[(b1)] $x$ is chosen with probability 2/5 in any pair including $x$.
    \item[(b2)] $y$ is chosen with probability 3/5 in any pair including $y$.
    \item[(b3)] $z$ is chosen with probability 3/5 in any pair including $z$ but not $y$.    
    \item [(t1)] $x$ is chosen with probability 11/35 in either $\{x,y,z\}$ or $\{x,z,t\}$. In each menu, the other two members are each chosen with probability 12/35. 
\end{itemize}
Note that in every scenario for which data are available, the decision maker chooses $x$ with the lowest probability. A model-free approach based on choice frequencies would unambiguously conclude that $x$ is the least preferred option. Consequently, it would predict that $x$ must also be chosen with the smallest probability in the grand set.\footnote{Some model-based approaches also share this property. The Luce model and the Additive Perturbed Utility (APU) model of \cite{Fudenberg_Iijima_Strzalecki_2015_ECMA}, for instance, require that if $\rho(x,S)\le \rho(y,S)$ for some $S\supseteq\xy $ then $\rho(x,S')\le \rho(y,S')$ for all menus $S'\supseteq\xy $.} As we show below, these simplistic interpretations of observed data can be deeply misleading.

Assume that the decision maker follows our DST model. Applying the identification strategies from our model, the analyst obtains $\alpha=1/5$, the revealed preference of $x\succ_R y \succ_R z \succ_R t$, and the Luce weights of $w(x)=1/10, w(y)=w(z)=w(t)=3/10$. Hence, far from being the least preferred, $x$ is identified as the best option, yielding a starkly different conclusion compared to the conventional model-free approach. Using these parameters, the analyst can also predict that, from $X$, $\rho(x,X)=7/25$ and $\rho(y,X)=\rho(z,X)=\rho(t,X)=6/25$. Consequently, the alternative with the lowest choice probability in every smaller menu where it is available (i.e., $x$) is predicted to have the highest choice probability in the grand set. This surprising prediction arises from the inconsistency between the two systems: $x$ is the best option under System 2, but its weight under System 1 is smallest. An analyst who fails to account for this inconsistency and follows a model-free approach would not only misidentify the best alternative but also make qualitatively incorrect out-of-sample predictions. Our model provides an analytical framework for disentangling true preference from biases, thus reducing potential errors in both prediction and welfare analysis.

%%%%%%%%%%%%%%%%%%%%%%%%%%%%%%%%%%%%%%%%%%%%%%%%%%%%%%%%%%%
%%%%%%%%%%%%%%%%%%% Characterization %%%%%%%%%%%%%%%%%%%%%
%%%%%%%%%%%%%%%%%%%%%%%%%%%%%%%%%%%%%%%%%%%%%%%%%%%%%%%%%%%

\subsection{Behavioral Characterization}\label{sec:characterization}
We now discuss the behavioral implications of our model. The behavioral implications are the bridge between the abstract functional form and observed behavior. The first axiom says that the probability of choosing a feasible option is always non-zero.\footnote{This assumption will make $D_{xy,z}$ well-defined for all $x,y,$ and $z$, which enables us to define the revealed preferences.} 

\begin{axiom}[Positivity]\label{axiom:positivity} $\rho(x,S)>0$ for all $(x,S)$ such that $x\in S\in \mathcal{D}$.
\end{axiom}

The second axiom formalizes the rationality of the DM in terms of the revealed preference defined in $(\ref{eq:revealed_preference_definition})$. Proposition \ref{prop:identification_1} states that the revealed preference corresponds to the underlying preference ordering in the DST. Hence, Axiom \ref{axiom:rationality} requires that this revealed preference constitutes a linear order over alternatives in $X$.

\begin{axiom}[Rationality]\label{axiom:rationality} $\succ_R$ is complete, transitive, and antisymmetric.
\end{axiom}

 The third axiom is a well-known condition in the stochastic choice literature. It states that choice
probabilities strictly decrease as competition among alternatives increases. 
\begin{axiom}[Strict Regularity]\label{ax:Strict_regularity} $\rho(x,S) < \rho(x,S \setminus y)$ for all $x,y\in S, x\ne y$ and $S\in \mathcal{D}$. 
\end{axiom}

In the DST model, choice by System 1 satisfies strict regularity, whereas choice by System 2 satisfies only regularity, which is obtained by replacing the strict inequality in Axiom \ref{ax:Strict_regularity} with a weak inequality. Since System 1 is activated with a positive probability, the DST model satisfies strict regularity.

To state our next axiom, we ask a simple counterfactual question: when a competitor $y$ is removed from a menu $S$, by how much does the choice probability of a remaining option $x$ rise? Intuitively, the answer should track how popular $y$ was to begin with. Removing a frequently chosen alternative frees up a large mass of probability to be redistributed among the survivors, while removing a rarely chosen one barely moves anything. We therefore expect the gain in the choice probability of $x$ to scale with $\rho(y,S)$.

We first provide a measure for the gain to $x$ not in absolute terms but relative to $x$'s own post-removal probability,
\[
A(x,y,S) = \frac{\rho(x,S \setminus y) - \rho(x,S)}{\rho(x,S \setminus y)},
\qquad x\neq y,\ x,y\in S\in\mathcal{D}.
\]
Normalizing by $\rho(x,S\setminus y)$ rather than reporting the raw difference $\rho(x,S\setminus y)-\rho(x,S)$ is what frees the measure from context: the absolute amount of probability flowing to $x$ depends on how many other competitors remain to share in the redistribution, whereas the \emph{rate} $A(x,y,S)$ does not. The quantity is a discrete analogue of a substitution effect---it records how much probability ``flows'' from $y$ to $x$ once $y$ is gone.

Our next axiom imposes that $A(x,y,S)$ is proportional to $\rho(y,S)$ as long as neither of $x$ or $y$ are the best options. Probability mass attached to the best option is partly the product of System 2's $\alpha$-bonus rather than System 1's heuristic competition, so flows into or out of the best option (denoted by $b_{\succ_R}(S)$) are governed by a different mechanism and do not obey the same proportionality. Our axiom accordingly restricts attention to the substitution patterns among non-best alternatives, where only the heuristic system is at work.
\begin{axiom}[Constant Proportional Gain]\label{axiom: Constant Gain}
$A(x,y,S) \propto \rho(y,S)$ for all $(x,y,S)$ with $x,y\in S\in\mathcal{D}$, $x\neq y$, and $x,y\neq b_{\succ_R}(S)$, where the constant of proportionality is uniform over all such $(x,y,S)$.
\end{axiom}
The axiom says that, except the best option, the rate at which $x$ gains from the removal of $y$ is a fixed multiple of how often $y$ was chosen. The Luce model is the special case in which that constant equals one: there $A(x,y,S)=\rho(y,S)$ exactly, a direct consequence of Luce's Choice Axiom. In our model the constant is instead determined by the cognitive weight $\alpha$, and its departure from one is precisely the trace left by the deliberate system on the heuristic substitution pattern.

For the last axiom, suppose the revealed preference satisfies $x\succ_R y\succ_R z$. Our model then satisfies
\begin{equation}\label{eq:relationship}
  A(z,x,\xyz)+A(z,y,\xyz)+A(y,z,\xyz)=1.
\end{equation}
Each term on the left-hand side of equation \eqref{eq:relationship} measures the gain to the \emph{worst} remaining option after one alternative is deleted. Specifically, removing $x$ from $\xyz$ leaves $\yz$, in which $z$ is $\succ_R$-worst, so $A(z,x,\xyz)$ is the gain to the worst survivor. Similarly, removing $y$ again leaves $z$ as the worst survivor, and removing $z$ leaves $y$ as the worst. Equation \eqref{eq:relationship} thus says that the gains accruing to the worst-ranked survivor, summed over which alternative is deleted, total exactly one.

The content of equation \eqref{eq:relationship} is easiest to see by contrast with the Luce model. Because $A(x,y,S)=\rho(y,S)$ under Luce's Choice Axiom, the analogous sum equals one there as well for any order. Within the Luce model the identity is therefore order-free. Our model breaks this invariance: the sum equals one when the worst survivor is taken with respect to the revealed preference $\succ_R$, and in general fails for any other order. Hence equation \eqref{eq:relationship} can be seen as an identifying restriction: among all orders, it singles out the one that represents $\rho$. The same identity extends to menus of any size, yielding our final axiom. In Axiom \ref{axiom:5}, recall that $l_{\succ_R}(S\setminus x)$ denotes the worst option in $S\setminus x$ with respect to $\succ_R$.

\begin{axiom}\label{axiom:5}
$\displaystyle \sum_{x\in S} A\bigl(l_{\succ_R}(S\setminus x),\,x,\,S\bigr)=1$ for all menus $S\in\mathcal{D}$ with $|S|\ge 3$.
\end{axiom}
We are now ready to characterize our model.

\begin{theorem}\label{thm:characterization} $\rho$ has a DST representation if and only if it satisfies Axioms \ref{axiom:positivity}-\ref{axiom:5}.
\end{theorem}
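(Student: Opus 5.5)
The plan is to prove necessity by direct computation and sufficiency by building $(\alpha,\succ,w)$ from binary and tripleton data and then extending to larger menus by induction on $|S|$. The revealed order $\succ_R$ will serve as the candidate preference.

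\textbf{Necessity.} Let $\rho=\rho_{\alpha,\succ,w}$.
\begin{itemize}
\item Positivity and Strict Regularity are immediate, because $w>0$ and System~1 satisfies strict regularity.
\item Rationality follows from Proposition~\ref{prop:identification_1}, since $\succ_R=\succ$ is a linear order.
\item For Constant Proportional Gain, take $x,y\in S$ with neither equal to $b=b_\succ(S)$. Then $b$ is still best in $S\setminus y$, so $\rho(x,S\setminus y)=(1-\alpha)w(x)/w(S\setminus y)$. This gives $A(x,y,S)=1-w(S\setminus y)/w(S)=w(y)/w(S)=\rho(y,S)/(1-\alpha)$, with uniform constant $c=1/(1-\alpha)$.
\item For Axiom~\ref{axiom:5}, let $|S|\ge 3$.
 \begin{itemize}
 \item If $x\neq b$, the worst element of $S\setminus x$ is not $b$, and the computation above gives $A=w(x)/w(S)$.
 \item If $x=b$, then $S\setminus b$ has at least two elements. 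Its worst element $l$ is therefore not its new best element, and $l$ is also not best in $S$. Hence $A(l,b,S)=1-w(S\setminus b)/w(S)=w(b)/w(S)$.
 \end{itemize}
 Summing over $x\in S$ gives $w(S)/w(S)=1$.
\end{itemize}

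\textbf{Sufficiency, step 1 (preference and $\alpha$).} Set $\succ:=\succ_R$, which is a linear order by Axiom~\ref{axiom:rationality}. Let $c$ be the constant in Axiom~\ref{axiom: Constant Gain}.
\begin{itemize}
\item On a tripleton $x\succ y\succ z$, the only non-best pair is $(y,z)$. Axiom~\ref{axiom: Constant Gain} then yields $A(y,z,\xyz)=c\,\rho(z,\xyz)$ and $A(z,y,\xyz)=c\,\rho(y,\xyz)$.
\item Since $A=1-\rho(\cdot,S)/\rho(\cdot,S\setminus\cdot)$, these two relations say that $\rho(y,\xyz)/\rho(y,\{x,y\})$ and $\rho(z,\xyz)/\rho(z,\{x,z\})$ are pinned down in a Luce-like way.
\item Strict Regularity gives $A>0$, hence $c>0$.
\item Substituting into equation~\eqref{eq:relationship} should force $c\,(\rho(y,S)+\rho(z,S))+A(z,x,S)=1$. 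Together with $\rho(y,S)+\rho(z,S)<1$, I expect this to give $c>1$.
\end{itemize}
Define $\alpha:=1-1/c\in(0,1)$.

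\textbf{Sufficiency, step 2 (weights).} Define $w$ through odds among non-best alternatives. The Constant Proportional Gain relations for the pair $(x,y)$ and for $(y,x)$, combined, should imply that $\rho(x,S)/\rho(y,S)$ equals $\rho(x,S\setminus t)/\rho(y,S\setminus t)$ whenever $x,y$ are non-best in both menus. Chaining through tripletons, this makes the ratio $w(x)/w(y):=\rho(x,S)/\rho(y,S)$ well defined for every pair that is jointly non-best in some menu.
\begin{itemize}
\item This covers all pairs except those involving the top element $x^*$ of $X$.
\item For the top element, a binary menu $\{x^*,y\}$ forces $w(x^*)$ through $\rho(y,\{x^*,y\})=(1-\alpha)w(y)/(w(x^*)+w(y))$.
\end{itemize}
Normalize so that $w(X)=1$.

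\textbf{Main obstacle: consistency and extension.} The hardest step is showing that these definitions are mutually consistent. Specifically, the weight assigned to an element via binaries in which it is best must agree with the weight it carries as a non-best element in larger menus, and the resulting $\alpha$ must be the same everywhere. My first attempt is an induction on $|S|$ with this hypothesis: $\rho(\cdot,T)$ has the DST form for every $T\subsetneq S$.
\begin{itemize}
\item For $x$ non-best in $S$, the definition of $A$ and Axiom~\ref{axiom: Constant Gain} express $\rho(x,S)$ through $\rho(x,S\setminus y)$ and $\rho(y,S)$. Solving this linear system should give $\rho(x,S)=(1-\alpha)w(x)/w(S)$.
\item Axiom~\ref{axiom:5} is the extra equation needed to tie the best element's probability. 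Its term for removing $b$ involves $A(l,b,S)$, which, by the induction hypothesis applied to $S\setminus b$, must equal $w(b)/w(S)$. This pins $\rho(b,S)=1-\rho(S\setminus b,S)=(1-\alpha)w(b)/w(S)+\alpha$.
\end{itemize}
The base case is the tripletons, which have to be handled by explicit algebra using the sign pattern defining $\succ_R$. This base case is where I expect most of the work to lie.
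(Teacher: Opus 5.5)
Your necessity argument is correct. The skeleton of your sufficiency proof is the paper's Step 3: you take $\succ_R$ as the preference, set $c=1/(1-\alpha)$, and induct on $|S|$ by solving the linear system from the Constant Proportional Gain equations for $(x,y)$ and $(y,x)$. The gap is that what you defer as the ``main obstacle'' is where the key idea lives, and you give Axiom 5 the wrong job.

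In your induction step Axiom 5 is redundant: once each non-best $x$ has $\rho(x,S)=(1-\alpha)w(x)/w(S)$, the best element's probability follows by adding up. Axiom 5 is needed at the bottom of the induction, to make all binary menus fit one $(\alpha,w)$. Constant Proportional Gain never involves deleting the best element. With $X=\{x,y,z\}$ and $x\succ_R y\succ_R z$, the binary $\rho(z,\{y,z\})$ appears in no equality except Axiom 5. More generally, tripletons $\{a,y,z\}$ and $\{a',y,z\}$ with different tops are not linked by any admissible deletion, and $\mathcal{D}$ may contain nothing larger. So your ``chaining through tripletons'' is unsupported. Moreover, the equations for $(x,y)$ and $(y,x)$ do not make $\rho(x,\cdot)/\rho(y,\cdot)$ invariant when a third element $t$ is deleted. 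Those for $(x,t)$ and $(y,t)$ do, but only for non-best $t$.

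The missing step is the paper's Lemma ($\gamma_D=\gamma$). Axiom 5 together with the Constant Gain equations on $D=\{y_1,\dots,y_d\}$, $y_1\succ_R\cdots\succ_R y_d$, gives
\[
\rho(y_d,D)=c\,\rho(y_d,D\setminus y_1)\,\rho(D\setminus y_1,D).
\]
You wrote its tripleton instance, $c(\rho(y,S)+\rho(z,S))+A(z,x,S)=1$, but used it only toward $c>1$. Read as $\rho(z,S)=c\,\rho(z,\{y,z\})\,\rho(\{y,z\},S)$, it says the odds of the two non-best elements of a tripleton are fixed by their binary choice and $c$, whatever the top is. The paper then proceeds in three moves:
\begin{itemize}
\item It anchors $w$ at the worst alternative: $w(x_i)/w(x_n)=1/(c\,\rho(x_n,\{x_i,x_n\}))-1$.
\item It applies the Lemma on $\{x_1,x_i,x_n\}$ to get $c\,\rho(x_n,\{x_i,x_n\})<1$, so $w(x_i)>0$ for $2\le i\le n-1$.
\item It combines the Lemma with the two Constant Gain equations on $\{x_i,x_j,x_n\}$ to get $\rho(x_j,\{x_i,x_j\})=c^{-1}w(x_j)/(w(x_i)+w(x_j))$.
\end{itemize}
So the base case is the binaries, and tripletons then follow from your linear-system step. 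The sign pattern of $\succ_R$ is not used for any of these equalities.

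Your argument for $c>1$ also fails. The facts $c>0$, Strict Regularity and $\rho(y,S)+\rho(z,S)<1$ cannot give it: a Luce rule, with any order imposed, satisfies every relation you use there with $c=1$. What forces $c>1$ is the sign condition built into $\succ_R$. By the identity above, $c=\rho(z,S)/\bigl(\rho(z,\{y,z\})\,\rho(\{y,z\},S)\bigr)$, so $c>1$ is equivalent to $D_{yz,x}<0$. The paper instead uses $D_{x_1x_2,x_3}>0$ once the representation is in place.

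Finally, defining $w(x^*)$ from a binary $\{x^*,y\}$ requires two further checks:
\begin{itemize}
\item It must not depend on $y$. This follows from the two Constant Gain equations on $\{x^*,y,z\}$ once the other ratios are well defined.
\item It must satisfy $w(x^*)>0$, i.e.\ $\rho(y,\{x^*,y\})<1-\alpha$. The paper gets this from Strict Regularity, comparing $\{x_1,x_2,x_3\}$ with $\{x_2,x_3\}$.
\end{itemize}
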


Theorem \ref{thm:characterization} allows the researcher to verify and falsify the DST model by checking simple conditions. This theorem can also be used to characterize a particular case of the DST model under Consistency, i.e., when the preference and Luce weights induce the same ordering among alternatives. In this situation, in addition to Axioms \ref{axiom:positivity}-\ref{axiom:5}, Remark \ref{remark:characterization consistency} states that the characterization requires an additional simple condition:  if $x\succ_R y\succ_R z$ then $\rho(x,\xz)>\max\{\rho(x,\xy),\rho(y,\yz)\}$. This condition is analogous to the strong stochastic transitivity property, except that the antecedent uses the revealed preference relation $\succ_R$ explicitly, rather than the order induced by choice probabilities in binary menus. Under a DST representation, the condition is equivalent to requiring $w(x)>w(y)>w(z)$ whenever $x\succ_R y\succ_R z$, and therefore exactly aligns the Luce weights with the revealed preference.

\begin{remark}\label{remark:characterization consistency} $\rho$ has a DST representation where the two systems are consistent if and only if it satisfies Axioms \ref{axiom:positivity}-\ref{axiom:5} and $\rho(x,\xz)>\max\{\rho(x,\xy),\rho(y,\yz)\}$ whenever $x\succ_R y\succ_R z$.
\end{remark}

The proof of Theorem \ref{thm:characterization} uses induction based on the number of alternatives in a choice set. It also helps uncover a novel characterization of the Luce model. To elaborate, consider a more restrictive version of Axiom \ref{axiom:5}. In Axiom \ref{axiom:5}, the ranking among alternatives is the revealed preference. Axiom \ref{axiom:5} can be extended to Axiom \ref{axiom:Luce-2} that accommodates an arbitrary ranking. Axiom \ref{axiom:Luce-2} states that the total gains of the worst-ranked option sum to $1$ under arbitrary orders; hence, it constitutes a stronger version of Axiom \ref{axiom:5}.

\begingroup
  \setcounter{axiom}{4}      
  \renewcommand{\theaxiom}{\arabic{axiom}$'$}  
\begin{axiom}\label{axiom:Luce-2} For each linear order $\succ$ over $X$, $\displaystyle \sum_{x\in S} A(l_{\succ}(S\setminus x),x,S)=1$ for all menus $S\in\mathcal{D}$ with at least three alternatives.
\end{axiom}
\endgroup

Proposition \ref{prop:characterization-Luce} below states that Axiom \ref{axiom:Luce-2} is both necessary and sufficient for a positive random choice function to admit a Luce representation, thereby providing a novel characterization of the Luce model. 

\begin{prop}\label{prop:characterization-Luce} A positive $\rho$ has a Luce representation in $\mathcal{D}$ if and only if it satisfies Axiom \ref{axiom:Luce-2}.
\end{prop}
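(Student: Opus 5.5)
The plan is to show that Axiom \ref{axiom:Luce-2} is equivalent to the ``substitution identity'' $A(x,y,S)=\rho(y,S)$ for all $x\neq y$ in $S\in\mathcal{D}$. This identity is exactly Luce's Choice Axiom in product form.

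\textbf{Necessity.} Let $\rho(x,S)=w(x)/w(S)$. A direct computation gives
\[
A(x,y,S)=1-\frac{w(x)/w(S)}{w(x)/(w(S)-w(y))}=\frac{w(y)}{w(S)}=\rho(y,S).
\]
Hence, for any linear order $\succ$, the sum in Axiom \ref{axiom:Luce-2} equals $\sum_{x\in S}\rho(x,S)=1$. This holds whatever $l_\succ(S\setminus x)$ is, since each term depends only on the removed element $x$.

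\textbf{Sufficiency, step 1: gains do not depend on the recipient.} Fix $S\in\mathcal{D}$ with $|S|\ge 3$, an element $b\in S$, and two distinct $a,c\in S\setminus b$. Consider two linear orders that both rank $b$ worst in $S$. In the first, $a$ is second-worst in $S$; in the second, $c$ is. For either order, removing any $x\neq b$ leaves $b$ as the worst survivor, while removing $b$ leaves the second-worst. The sums in Axiom \ref{axiom:Luce-2} are therefore
\[
\sum_{x\ne b}A(b,x,S)+A(a,b,S) \quad\text{and}\quad \sum_{x\ne b}A(b,x,S)+A(c,b,S).
\]
Both equal $1$, so $A(a,b,S)=A(c,b,S)$. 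Call this common value $g_b(S)$.

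\textbf{Sufficiency, step 2: pin down $g_b(S)$.} By the definition of $A$, $\rho(a,S)=(1-g_b(S))\,\rho(a,S\setminus b)$ for every $a\neq b$. Positivity makes all denominators legitimate, and $S\setminus b\in\mathcal{D}$ by richness. Summing over $a\in S\setminus b$ gives $1-\rho(b,S)=1-g_b(S)$, so $g_b(S)=\rho(b,S)$. Consequently
\[
\frac{\rho(a,S)}{\rho(a',S)}=\frac{\rho(a,S\setminus b)}{\rho(a',S\setminus b)}
\]
for all distinct $a,a',b\in S$ with $|S|\ge3$. Iterating the deletion down through subsets of $S$, all of which lie in $\mathcal{D}$, shows that within any menu the odds of $a$ against $a'$ equal the binary odds $r(a,a')=\rho(a,\{a,a'\})/\rho(a',\{a,a'\})$.

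\textbf{Sufficiency, step 3: construct $w$.} Fix $x_0\in X$ and set $w(x_0)=1$ and $w(a)=r(a,x_0)$; these values are positive by Positivity. Applying step 2 to the tripleton $\{a,a',x_0\}\in\mathcal{D}$ gives
\[
r(a,a')=\frac{\rho(a,\{a,a',x_0\})}{\rho(a',\{a,a',x_0\})}=\frac{r(a,x_0)}{r(a',x_0)}=\frac{w(a)}{w(a')}.
\]
So within every $S\in\mathcal{D}$, choice odds equal ratios of $w$. Because probabilities in $S$ sum to one, $\rho(x,S)=w(x)/w(S)$, and we can rescale so that $w(X)=1$.

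The only nonroutine step is step 1. The key is choosing two orders that share the same worst element and differ only in the second-worst, which isolates a single term of the sum. Once that term is isolated, the rest is the standard derivation of a Luce representation from the Choice Axiom on a downward-closed domain containing all tripletons.
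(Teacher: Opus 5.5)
Your proof is correct and follows the paper's argument. The key step is the same: compare two orders that share the worst element $b$ of $S$ but differ in the second-worst, so that every term cancels except $A(a,b,S)=A(c,b,S)$; the paper uses exactly this to get $\rho(a,S)/\rho(a,S\setminus b)=\rho(c,S)/\rho(c,S\setminus b)$. Your steps 2--3 spell out, using the richness of $\mathcal{D}$ (downward closure and all tripletons), how this IIA-type identity yields an explicit Luce weight $w$; the paper compresses that passage into the single sentence ``it follows that $\rho$ satisfies IIA and thus has a Luce representation.''
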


Proposition \ref{prop:characterization-Luce} characterizes the Luce model in its classical setting from a different perspective, using limited data. Note that the standard Luce model does not assume any ordering. Proposition \ref{prop:characterization-Luce} shows that one can introduce an arbitrary order in the Luce model without altering its structure. One can then incorporate that order into the properties the Luce model satisfies, and then require them to hold under every possible order. Consequently, Proposition \ref{prop:characterization-Luce} offers a new perspective on the Luce model and its underlying properties.\footnote{See \cite{dougan2021odds} for an alternative characterization of the Luce model using a different approach.}

Together with Theorem \ref{thm:characterization}, Proposition \ref{prop:characterization-Luce} implies that our model can be interpreted as a Luce model with an underlying preference.

%%%%%%%%%%%%%%%%%%%%%%%%%%%%%%%%%%%%%%%%%%%%%%%%%%%%%%%%%%%%%%%%%%%%%
%%%%%%%%%%%%%%%%%%%%%%%% Accomodating Choice Anomalies %%%%%%%%%%%%%%
%%%%%%%%%%%%%%%%%%%%%%%%%%%%%%%%%%%%%%%%%%%%%%%%%%%%%%%%%%%%%%%%%%%%%

\subsection{Comparative Statics}\label{sec:comparative statics}
In this subsection, we study the comparative statics of two DST models characterized by different parameters. To investigate the link between observed choice and the model's primitives, we need an order on random choice functions. Following \cite{ABL2017SCRUM} and \cite{masatlioglu2024growing}, we define a partial order between two random choice functions as follows.

\begin{definition}[First-order stochastic dominance, FOSD]\label{defn:FOSD} Let $\succ$ be a linear order over $X$.  $\rho$ first-order stochastically dominates  $\rho'$ with respect to $\succ$ (write $\rho$ FOSD$_{\succ}$ $\rho'$)  if $\sum\limits_{y:y\succsim x}\rho(y,S)\ge \sum\limits_{y:y\succsim x}\rho'(y,S)$ for all $x\in S\in \mathcal{D}$. Here, $y\succsim x$ means $y=x$ or $y\succ x$.
\end{definition} 

Definition \ref{defn:FOSD} states that $\rho$ first-order stochastically dominates $\rho'$ if, for every choice set, better-ranked options are selected with higher probabilities under $\rho$. Proposition \ref{prop:comparative static-1} below establishes the relationship between two DST models with the same underlying preference.

\begin{prop}\label{prop:comparative static-1} Suppose $\rho$ and $\rho'$ have DST representations $(\alpha,\succ,w)$ and $(\alpha',\succ,w')$, respectively. 
\begin{itemize}
    \item [(i)] Suppose $w=w'$.  Then $\rho$ FOSD$_{\succ}$ $\rho'$ if and only if $\alpha\ge \alpha'$. 
    \item [(ii)] Suppose $\alpha= \alpha'$. Then $\rho$ FOSD$_{\succ}$ $\rho'$ if and only if $\frac{w(x)}{w'(x)}\ge \frac{w(y)}{w'(y)}$ for all pairs $(x,y)$ such that $x\succ y$.    
\end{itemize}
\end{prop}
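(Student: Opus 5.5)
The plan is to compute the upper-tail sums explicitly under a DST representation and compare them term by term. Fix $x\in S\in\mathcal{D}$ and let $U_\succ(x,S)=\{y\in S: y\succsim x\}$ be the upper contour set of $x$ in $S$. Since the $\succ$-best element of $S$ always lies in $U_\succ(x,S)$, the DST formula gives
\[
\sum_{y\succsim x}\rho(y,S)=(1-\alpha)\frac{w(U_\succ(x,S))}{w(S)}+\alpha = 1-(1-\alpha)\Bigl(1-\frac{w(U_\succ(x,S))}{w(S)}\Bigr).
\]
The analogous formula holds for $\rho'$ with $(\alpha',w')$. The whole comparison therefore reduces to comparing these closed forms.

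For part (i), with $w=w'$, write $q=1-w(U_\succ(x,S))/w(S)\in[0,1)$. Then FOSD$_\succ$ is the condition $(1-\alpha)q\le(1-\alpha')q$ for every $(x,S)$. If $\alpha\ge\alpha'$, this holds trivially. For the converse, pick any $x$ that is not $\succ$-best in a binary menu $\{x,y\}$ with $y\succ x$. Then $U_\succ(x,S)=\{x,y\}=S$, which gives $q=0$, so that menu carries no information. I will instead take $x=l_\succ(S)$ complemented differently: use the menu $\{x,y\}$ with $x\succ y$ and evaluate at the lower element $y$, where the upper set is the whole menu. That also gives $q=0$, so I need a menu where $q>0$, namely evaluating at the best element $b$ of $S=\{b,c\}$. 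There $U=\{b\}$ and $q=w(c)/w(S)>0$. Then $(1-\alpha)q\le(1-\alpha')q$ forces $\alpha\ge\alpha'$.

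For part (ii), with a common $\alpha\in(0,1)$, FOSD$_\succ$ is equivalent to
\[
\frac{w(U_\succ(x,S))}{w(S)}\ge\frac{w'(U_\succ(x,S))}{w'(S)}\quad\text{for all }x\in S\in\mathcal D,
\]
that is, the Luce distribution of $w$ on $S$ FOSD's that of $w'$. Writing $U=U_\succ(x,S)$ and $L=S\setminus U$, this becomes $w(U)/w(L)\ge w'(U)/w'(L)$, or equivalently $w(U)w'(L)\ge w'(U)w(L)$ whenever $L\neq\emptyset$.

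For necessity, take the binary menu $\{x,y\}$ with $x\succ y$. Here $U=\{x\}$ and $L=\{y\}$, so the condition gives $w(x)w'(y)\ge w'(x)w(y)$, which is exactly the stated ratio inequality. For sufficiency, assume $r(z)=w(z)/w'(z)$ is weakly $\succ$-increasing. Then for every $u\in U$ and $l\in L$ we have $u\succ l$, hence $w(u)w'(l)\ge w'(u)w(l)$. Summing over $u\in U$ and $l\in L$ yields $w(U)w'(L)\ge w'(U)w(L)$.

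The main obstacle is not computational. It is checking that every step is valid on the restricted domain $\mathcal D$. The richness condition guarantees that all binary menus lie in $\mathcal D$, which is exactly what the necessity arguments use, and the sufficiency arguments hold menu by menu. I would also confirm the identity "best element lies in the upper contour set", so that the $\alpha$ term contributes fully in every case.
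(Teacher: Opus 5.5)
Your proposal is correct and follows essentially the same route as the paper: you write the cumulative sums explicitly and use binary menus, evaluated at the $\succ$-better element, for both necessity directions. For sufficiency in (ii) you sum the pairwise inequalities $w(u)w'(l)\ge w'(u)w(l)$ over the upper and lower sets, and the only cleanup needed is to drop the false starts in the necessity argument for (i) and state directly that you evaluate at $b$ in $\{b,c\}$ with $b\succ c$, where $q=w(c)/w(\{b,c\})>0$.
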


Part (i) of Proposition \ref{prop:comparative static-1} applies to the situation in which $\rho$ and $\rho'$ differ only in cognitive weights. It states that deliberate thinking benefits the decision maker: the more she deliberates $(\alpha\ge \alpha')$, the better choices she makes ($\rho$ FOSD$_{\succ}$ $\rho'$). Intuitively, this result follows from the fact that the DM makes no mistake under System 2. Part (ii) of Proposition \ref{prop:comparative static-1} corresponds to the case where $\rho$ and $\rho'$ differ solely in Luce weights. It indicates that choices are better if the Luce weights satisfy the well-known monotone likelihood ratio property, which in our setting means that preferred options receive higher relative weights. As the intuitive-thinking system in a DST model follows the Luce rule, this result also holds for the comparative statics of two Luce models.\footnote{When $\rho$ and $\rho'$ may differ in both cognitive weights and the Luce weights, it can be shown that $\rho$ FOSD$_{\succ}$ $\rho'$ if $\alpha\ge \alpha'$ and $\frac{w(x)}{w'(x)}\ge \frac{w(y)}{w'(y)}$ for all pairs $(x,y)$ such that $x\succ y$.}

Proposition \ref{prop:comparative static-2} compares two DST models that differ only in preferences. In the absence of a natural ranking among preference orders, the comparative statics become less meaningful as it is unclear how to determine if one preference is better than another. To address this issue, we introduce a benchmark order over $X$. Call this linear order $\succ_b$. This order can be interpreted as a normative preference or a preference of a representative agent. Given $\succ_b$, one can assess how closely any preference order approximates $\succ_b$ and thereby rank preference orders accordingly. One way to measure this proximity is via the single-crossing property \citep{milgrom1994monotone,ABL2017SCRUM}. Specifically, for two preferences $\succ$ and $\succ'$, the ordered collection $\{\succ',\succ\}$ is single-crossing with respect to $\succ_b$ if for every pair $(x,y)$ such that $x\succ_b y$, the relation $x\succ y$ holds whenever $x\succ' y$. Intuitively, this condition implies that $\succ$ is more closely aligned with $\succ_b$ than $\succ'$.

Proposition \ref{prop:comparative static-2} below states that the decision maker makes better choices with respect to the benchmark order when her preference is closer to it and vice versa. 

\begin{prop}\label{prop:comparative static-2} Suppose $\rho$ and $\rho'$ have DST representations $(\alpha,\succ,w)$ and $(\alpha,\succ',w)$, respectively.  Then $\rho$ FOSD$_{\succ_b}$ $\rho'$ if and only if $\{\succ',\succ\}$ is single-crossing with respect to $\succ_b$.
\end{prop}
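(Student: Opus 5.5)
The plan is to reduce FOSD$_{\succ_b}$ to a statement about upper contour sets of $\succ_b$. Fix a menu $S\in\mathcal{D}$ and an $x\in S$, and let $U=\{y\in S: y\succsim_b x\}$. Since $\rho$ and $\rho'$ share $\alpha$ and $w$, the System 1 parts coincide. Hence
\[
\rho(U,S)-\rho'(U,S)=\alpha\bigl[\mathbbm{1}(b_{\succ}(S)\in U)-\mathbbm{1}(b_{\succ'}(S)\in U)\bigr].
\]
Because $\alpha>0$, $\rho$ FOSD$_{\succ_b}$ $\rho'$ holds if and only if the following is true: for every $S\in\mathcal{D}$ and every $\succ_b$-upper contour set $U$ of $S$, $b_{\succ'}(S)\in U$ implies $b_{\succ}(S)\in U$. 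Since upper contour sets are nested, this is the same as requiring $b_{\succ}(S)\succsim_b b_{\succ'}(S)$ for all $S\in\mathcal{D}$. (Take $U$ to be the upper contour set generated by $b_{\succ'}(S)$ for one direction; the converse is immediate.)

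\emph{Sufficiency.} Assume $\{\succ',\succ\}$ is single-crossing with respect to $\succ_b$. Let $a=b_{\succ'}(S)$ and $c=b_{\succ}(S)$, and suppose for contradiction that $a\succ_b c$. Since $a$ is $\succ'$-best in $S$, we have $a\succ' c$ (when $a\neq c$). Single-crossing then gives $a\succ c$, which contradicts $c$ being $\succ$-best. Therefore $c\succsim_b a$.

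\emph{Necessity.} Suppose single-crossing fails. Then there is a pair with $x\succ_b y$, $x\succ' y$, and $y\succ x$. Take the binary menu $S=\{x,y\}\in\mathcal{D}$. Here $b_{\succ'}(S)=x$ and $b_{\succ}(S)=y$, so $b_{\succ'}(S)\succ_b b_{\succ}(S)$. Explicitly, with $U=\{x\}$ we get $\rho(x,S)-\rho'(x,S)=-\alpha<0$, which violates FOSD. The richness of $\mathcal{D}$ guarantees that binary menus are available.

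The main point needing care is the reduction step: checking that the indicator differences over all upper contour sets collapse to a single comparison of best elements. This follows from the nestedness of upper contour sets, and everything else is direct.
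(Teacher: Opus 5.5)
Your proof is correct and is essentially the paper's argument. Necessity uses the binary menu $\{x,y\}$ on which the $\succ'$-best and $\succ$-best elements differ, and sufficiency uses single-crossing to force $b_{\succ}(S)\succsim_b b_{\succ'}(S)$, which is the paper's $q\ge p$. Your identity $\rho(U,S)-\rho'(U,S)=\alpha\bigl[\mathbbm{1}(b_{\succ}(S)\in U)-\mathbbm{1}(b_{\succ'}(S)\in U)\bigr]$ just compresses the paper's case-by-case check of the cumulative sums into one line.
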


%%%%%%%%%%%%%%%%%%%%%%%%%%%%%%%%%%%%%%%%%%%%%%%%%%%%%%%%%%%%%%%%%%%%%
%%%%%%%%%%%%%%%%%%%%%%%% DST as a tool %%%%%%%%%%%%%%%%%%%%%%
%%%%%%%%%%%%%%%%%%%%%%%%%%%%%%%%%%%%%%%%%%%%%%%%%%%%%%%%%%%%%%%%%%%%%

\section{DST as a Tool in Discrete Choice Analysis}
In this section, we demonstrate the empirical usefulness of the DST model as a tool in analyzing discrete choice data. We first illustrate how our model accounts for key empirical patterns in random choice and market demand. We then present empirical estimations and show that the DST model outperforms well-known models in discrete choice analysis, including (multinomial) logit, nested logit and probit, across a wide range of choice contexts.

\subsection{Accommodating Empirical Patterns}\label{sec:anomaly}
Our model is a minor deviation from the Luce model. Yet, its explanatory power surpasses that of the Luce model. Deliberate and automatic thinking play pivotal roles in generating the additional explanatory power of our approach. 

\noindent \textbf{Stochastic intransitivity}. \cite{tversky1969intransitivity}, among others, argue that transitivity may fail in risky environments; \cite{Rieskamp_Busemeyer_Mellers_2006_JEL} provides a detailed review of related empirical evidence where stochastic transitivity is frequently violated. Three well-known properties are weak stochastic transitivity (WST), moderate stochastic transitivity (MST), and strong stochastic transitivity (SST). WST states that for all pairwise distinct $x,y,z \in X,$ $\rho(x,\xy)\ge 1/2$ and $ \rho(y,\yz)\ge 1/2$ implies $\rho(x,\xz)\ge 1/2.$\footnote{MST and SST keep the antecedent part in WST unchanged but replace the consequent part with $\rho(x,\xz)\ge \min\{ \rho(y,\yz),\rho(x,\xy)\}$ and $\rho(x,\xz)\ge \max\{ \rho(y,\yz),\rho(x,\xy)\}$, respectively.} 

It is straightforward to verify that the Luce model rules out any violations of SST--a natural consequence of its highly restrictive structure. What is more surprising is that even more flexible frameworks, such as the Additive Perturbed Utility (APU) model of \cite{Fudenberg_Iijima_Strzalecki_2015_ECMA}, also impose SST. By contrast, our model can be viewed as only a modest departure from Luce: both System 1 and System 2, when considered separately, still satisfy SST. Yet, somewhat unexpectedly, their combination in DST generates additional flexibility, allowing for WST violations and thus richer patterns of choice.

\begin{remark}\label{rem:WST_violated} DST accommodates violations of weak stochastic transitivity.
\end{remark}

To illustrate how DST can accommodate WST violations, consider $\rho$ with DST representation $(\alpha,\succ,w)$, where $\alpha=1/4$, $x\succ y \succ z$, $w(x)=1/6, w(y)=1/3,$ and $w(z)=1/2$. Note that the two cognitive systems induce opposite orderings over the three alternatives. These parameters generate $\rho(x,\xy)= 1/2$, $ \rho(y,\yz)=11/20$, and $\rho(x,\xz)=7/16$. As $7/16<1/2$, this is a violation of weak stochastic transitivity. The main driver of this result is that Systems 1 and 2 are not aligned. One can show that the DST model predicts SST if the preference of System 2 and the Luce weight of System 1 induce the same ordering among alternatives; this result is stated in Remark \ref{rem:SST_satisfied}.

\begin{remark}\label{rem:SST_satisfied} DST satisfies strong stochastic transitivity if the two cognitive systems are consistent.
\end{remark}

\smallskip
\noindent \textbf{Persistence of market share and violation of Luce's IIA}. We now turn to showing how our model can more naturally capture the response of market shares to the introduction of new products. \cite{Debreu1960} introduces the Red-Bus/Blue-Bus paradox as a serious limitation of the Luce framework. The Red-Bus/Blue-Bus paradox shows that the Luce model unrealistically assumes too much substitutability between alternatives. When a new product identical to an existing one is introduced, the model takes share equally from all alternatives instead of only from the closest substitute. We now show that DST eliminates this restrictive prediction of the Luce model. 

Suppose the decision maker chooses, with equal probabilities, a train ($t$) and a red bus ($r$) as a means of transport. Then the Luce model predicts that introducing a blue bus that is a close substitute for the red bus (but not for the train) does not alter the relative likelihood of choosing the train. \cite{Debreu1960} argued that adding the blue bus should increase the odds of choosing $t$ over either bus, thus violating Luce's IIA. We now show that the DST model accommodates any choice probability of the train (when all options are available) between $0.33$ and $0.5$.

To see this, assume, in every binary set, the choice probability ratios in pairwise choices for any two alternatives are equal to 1:
\begin{equation}\label{eq:equal binary choice}
   \frac{\rho(t,\{t,r\})}{\rho(r,\{t,r\})}=\frac{\rho(t,\{t,b\})}{\rho(b,\{t,b\})}=\frac{\rho(b,\{r,b\})}{\rho(r,\{r,b\})}=1 
\end{equation}

 To capture these binary choices, suppose that the preference relation is now a weak order; this is to capture the DM's genuine indifference between the two buses. Assume that all options that tie for the best are selected with equal probabilities under deliberate thinking; the model is unchanged otherwise. Suppose the decision maker follows a DST model with preference $t\succ b\sim r$. Let $w(r)=w(b)$. This, together with $r\sim b$, reflects the DM's genuine indifference between the two buses. We find $\alpha$ and weight function $w$ such that (\ref{eq:equal binary choice}) holds. As we assume the DM picks all options that tie for the best with equal probabilities under slow thinking, it is straightforward that $\frac{\rho(b,\{r,b\})}{\rho(r,\{r,b\})}=1$. Let $0<\alpha<1/2$ and $w(t)/w(r)=1-2\alpha$. Under these parameters, (\ref{eq:equal binary choice}) holds. Additionally, when all options are available, the red bus and blue bus are still chosen with the same probability. Consequently, the odds of choosing the train over either bus increase in the tripleton menu as the choice probability of the train is given by $\rho(t,\{t,r,b\})=1/(3-2\alpha)$, which could be any number between $0.33$ and $0.5$, depending on $\alpha$. If $\alpha$ approaches $1/2$, the choice probability of the train converges to $0.5$; in this case, the likelihood of choosing the train remains the same as in binary menus. If $\alpha$ approaches $0$, we get the same prediction as the Luce model in this setting.\footnote{Our explanation of the red bus/blue bus example presented here is somewhat similar to that in \cite{Manzini-Mariotti_2014_ECMA}; they also require $t\succ b\sim r$ to explain the phenomenon.}

\smallskip
\noindent \textbf{Market shares of dominant products.} As established by \cite{bajari2001discrete}, almost all popular discrete choice models (multinomial logit, nested logit, random coefficients, etc.) predict that the market share of an individual product must approach zero as the number of available products in the market grows large. This prediction leads to the counterintuitive implication that a superior product could see its market share drop to zero following the introduction of numerous inferior alternatives. We demonstrate that the DST model is not subject to this limitation, as a dominant product can maintain a non-negligible market share even as the number of alternatives becomes arbitrarily large.

To illustrate, let $S$ be the set of all existing products in the market. Suppose $S$ contains an inferior product $x$ and a superior product $y$, which is also the best product in $S$. Consider a set $R_n= \{x_1,\dots, x_n \}$ that consists of $n$ replicas of $x$, where each replica $x_i$ has the same utility and Luce weight as $x$. Suppose the replicas in $R_n$ are introduced to the market. As the number of replicas grows large, the market share of $y$ converges to $\alpha$:
\[
\lim_{n\to \infty} \rho(y,S\cup R_n) = \alpha>0.
\]
The result above follows directly from the DST model's structure: when System 2 is engaged in the decision-making process, $y$ is always chosen regardless of the number of inferior replicas in the market. Meanwhile, when System 1 is activated, the likelihood of selecting $y$ approaches zero as the number of replicas grows to infinity.

%%%%%%%%%%%%%%%%%%%%%%%%%%%%%%%%%%%%%%%%%%%%%%%%%%%%%%%%%%
%%%%%%%%%%%%%%%% This is the new section %%%%%%%%%%%%%%%%%%
%%%%%%%%%%%%%%%%%%%%%%%%%%%%%%%%%%%%%%%%%%%%%%%%%%%%%%%%%%
\subsection{Empirical Estimation}\label{sec:empirical_estimation}

In Section \ref{sec:identification} we describe how to identify the model's primitives from observed choice frequencies when the data admit a DST representation. In empirical studies, an analyst is often interested in finding the best-fitting models. This section shows that the analyst can estimate our model easily by using maximum likelihood (MLE).

For a fixed linear order among available alternatives, estimating our model by MLE is arguably no different from estimating a (multinomial) logit model both in terms of technical difficulty and computational cost. Let $f(x,S)\in [0,1]$ denote the observed frequency of choosing $x$ in menu $S\in \mathcal{D}$. Let $\sigma(S)\in [0,1]$ be the probability that menu $S$ is realized. Given linear order $\succ$ among the alternatives, the DST model can be estimated by maximizing the following log-likelihood function:\footnote{In many cases the number of choice observations does not vary across menus. In these situations, $\sigma(S)$ can be removed from the log-likelihood function without changing the optimal solutions.}
\[
\max_{(\alpha,w)} L((\alpha,w))=\sum_{(x,S)} \sigma(S) f(x,S) \log(\rho_{\alpha,\succ,w}(x,S)).
\]
This log-likelihood function is well-behaved and converges rapidly in most cases. Note that the preference ordering in the log-likelihood is exogenous. In principle, the analyst can search over all possible linear orders to identify the best-fit ordering. In practice, to minimize computational cost, the analyst can opt for a plausible ordering guided either by the observed choice frequencies, by theoretical considerations, or by prior findings in the literature. As we will describe, identifying such an ordering is usually straightforward, and in many cases, estimating the DST model using that ordering already yields significant improvements in model-fit relative to (multinomial) logit, nested logit and probit models.

As an application, we analyze the intertemporal choice data in \cite{roelofsma2000intransitive}. In their experimental study, 88 subjects made deterministic choices in six binary-menu questions involving four options $x$, $y$, $z$, and $t$ given by 
\[
x=(7 \text{ dfl}, 1 \text{ week}),y=(8 \text{ dfl}, 2 \text{ weeks}), z=(9 \text{ dfl}, 4 \text{ weeks}), \text{and } t=(10 \text{ dfl}, 7 \text{ weeks}),
\]
where ($a$ \text{ dfl}, $b$ \text{ weeks}) denotes the option that pays $a$ Dutch guilders $b$ weeks after the experiment. Table \ref{tab:intertemporal_choice_1} reports the aggregated choice frequencies.

We estimate logit, nested logit, independent probit, covariance probit and DST models using MLE.\footnote{In independent probit, the error terms are independent and identically distributed: $\varepsilon_i \sim \text{i.i.d}\, N(0,1)$. In covariance probit, the error terms may have different variances and nonzero covariances across alternatives.} When estimating our model, we assume that the preference among options follows the ordering $x\succ y\succ z \succ t$. This ordering naturally arises from a preference for earlier payoff. It is the modal choice pattern in individual choice data and is also consistent with both exponential and hyperbolic discounting utility models across a wide range of parameters \citep{roelofsma2000intransitive}.\footnote{Since there are six binary choice questions, there are $2^6=64$ deterministic choice patterns. The pattern $x\succ y\succ z \succ t$ accounts for $18\%$ of the data. No other deterministic choice pattern accounts for more than 9\% of the data.} The estimated parameters of the DST model are given by 

\[
\hat{\alpha} = 0.625 \quad \text{ and } \quad \hat{w}(x)=0.005,\, \hat{w}(y)=0.209,\, \hat{w}(z)=0.268, \, \hat{w}(t)=0.517. 
\]

Table \ref{tab:intertemporal_choice_1} reports estimated choice probabilities and goodness-of-fit statistics of various models. We use the adjusted $R^2$ and the McFadden pseudo $R^2$ to evaluate model fit. As the empirical choice frequencies violate moderate stochastic transitivity, the logit and independent probit models fit the data poorly. Their generalizations (nested logit and covariance probit) provide a better fit. Our model fits the data very well: its performance is comparable to that of covariance probit despite having only half as many free parameters (4 in DST versus 8 in covariance probit). The DST model also performs slightly better than the order-dependent Luce model (ODLM) of \cite{tserenjigmid2021order}.

\begin{table}[ht!]
    \centering
    \renewcommand{\arraystretch}{1.1}
    \setlength{\tabcolsep}{3pt} 
    \footnotesize
    \setlength{\tabcolsep}{2pt} 
    \begin{tabular}{l|
        >{\centering\arraybackslash}p{1.5cm}%
        >{\centering\arraybackslash}p{1.5cm}%
        >{\centering\arraybackslash}p{1.5cm}%
        >{\centering\arraybackslash}p{1.5cm}%
        >{\centering\arraybackslash}p{1.5cm}%
        >{\centering\arraybackslash}p{1.5cm}|
        ccc}
    \hline\hline
         & \multicolumn{6}{c|}{Choice frequencies} & \multicolumn{3}{c}{Goodness-of-fit measures} \\
    \cline{2-10}
         & $\rho(x,\xy)$ & $\rho(x,\xz)$ & $\rho(x,\{x,t\})$ & $\rho(y,\yz)$ & $\rho(y,\{y,t\})$ & $\rho(z,\{z,t\})$ & Adjusted $R^2$ & Pseudo $R^2$ & \# free para. \\
    \hline
         Data & 0.68 & 0.59 & 0.60 & 0.83 & 0.68 & 0.80 & & & \\
         \hline
         Logit & 0.51 & 0.63 & 0.73 & 0.62 & 0.72 & 0.62 & 0.45 & 0.07 & 3\\
         Ind. probit & 0.51 & 0.62 & 0.73 & 0.62 & 0.72 & 0.62 & 0.44 & 0.07 & 3\\
         Nested logit & 0.68  & 0.67 & 0.68 & 0.67 & 0.68 & 0.80 & 0.87 & 0.12 & 5 \\
         Cov. probit & 0.63 & 0.61 & 0.63 & 0.83 & 0.74& 0.74 & 0.89 & 0.14 & 8 \\
         ODLM & 0.68 & 0.64 & 0.55 & 0.80 & 0.73 & 0.80 & 0.81 & 0.14 & 5 \\
         \hline
         Our model & 0.63 & 0.63 & 0.63 & 0.79 & 0.73 & 0.75 & 0.95 & 0.14 & 4\\ 
    \hline
    \hline
    \end{tabular}
    \caption{Original data and estimated choice frequencies for logit, nested logit, independent probit, covariance probit, DST, and ODLM.}
    \label{tab:intertemporal_choice_1}
    \vspace{0.1cm}
    \parbox{\textwidth}{\footnotesize \justifying
    \noindent \textit{Notes:} Ind. probit: Independent probit; Cov. probit: Covariance probit; \# free para.: The number of free parameters in each model. Choice frequencies and adjusted $R^2$ of ODLM are from \cite{tserenjigmid2021order}. In the nested logit model, we use at most two nests since there are only four alternatives.}
\end{table}

\noindent \textbf{Other choice contexts.} To assess whether the DST model's strong performance is specific to the dataset in \cite{roelofsma2000intransitive}, we evaluate it on a diverse set of choice data described in Appendix \ref{sec:appendix_datasets}. Each of the experimental datasets in Appendix \ref{sec:appendix_datasets} has more observations than alternatives $(|\mathcal{D}|>|X|)$, facilitating stable estimation and meaningful goodness-of-fit evaluation. These datasets vary across three dimensions: (i) the domains in which choices are made, (ii) the number of options and types of menus, and (iii) whether the data satisfy certain rationality properties (regularity and stochastic transitivity). As before, for each dataset, we estimate (multinomial) logit, nested logit, independent probit, covariance probit, and DST models using MLE and compare model fit using the adjusted $R^2$ and the McFadden pseudo $R^2$. Across this heterogeneous collection of datasets, a common picture emerges: the DST model is able to outperform other well-known models in discrete choice estimation. This result highlights the model's broad applicability and usefulness across a wide range of choice environments.

%%%%%%%%%%%%%%%%%%%%%%%%%%%%%%%%%%%%%%%%%%%%%%%%%%%%%%%%%%
%%%%%%%%%%%%%%%% This is the new section %%%%%%%%%%%%%%%%%%
%%%%%%%%%%%%%%%%%%%%%%%%%%%%%%%%%%%%%%%%%%%%%%%%%%%%%%%%%%
\section{DST and Firm Decision-Making: An Application to Optimal List Design}\label{sec:application}

In recent years, there has been a growing body of literature investigating the manipulation
of online behaviors. Several studies have documented that online platforms such as Amazon,
Walmart, Google Shopping, and Apple's App Store can intentionally manipulate their offerings to favor their own products \citep{hagiu2022should,padilla2022self,farronato2023self,motta2023self}. This manipulation involves tactics such as strategically recommending
their own products or prominently featuring their offerings. For example, on Amazon, \cite{farronato2023self} and \cite{waldfogel2024amazon} find that Amazon-labeled products consistently receive higher rankings in consumer search results relative to observably comparable products.

What makes such tactics effective is that they exploit the way consumers actually make choices. Prominent placement increases a product's salience, biasing the fast, heuristic component of decision-making even when consumers' deliberative preferences point elsewhere. This wedge between the two systems is precisely what our model formalizes, making the platform's design problem a natural application. We therefore use our framework to study the optimal list-design problem faced by a platform serving consumers with dual-system thinking. Consider a setting in which a platform wants to present a set $S$ of products to a heterogeneous population of consumers by ordering them in a list. A list $l$ is a bijective function from $S$ to $\{1,\dots,|S|\}$, with $l(x)$ denoting the position of product $x$. 

\smallskip
\noindent \textbf{The customers.} The customers are indexed by $1,2,\dots,K$, each following a DST model. To keep the analysis simple, we assume that (1) all customers observe the same list chosen by the platform, and (2) all customers have the same preference, but they differ in the degree of deliberate thinking and Luce weights. The former assumption is justified when the platform cannot observe customers' characteristics or identities and therefore cannot offer customized lists.\footnote{Alternatively, the platform may possess customers' personal data but be legally prohibited from using it. Recent regulations, such as the European Union's General Data Protection Regulation and the California Consumer Privacy Act, place significant restrictions on the use of personal data.} The common-preference assumption can be interpreted as the preference of a representative consumer.

When System 1 is engaged in the decision-making process, customers make choices based on the perceived salience of items. The perceived salience of a product depends only on its position in the list and is allowed to be heterogeneous. For individual $i$, the salience of product $x$ under list $l$ is $w_i^l(x):= w_i({l(x)})$. We assume that the function $w_i\colon \{1,2,\dots,|S|\}\to \mathbb{R}_{++}$ is strictly decreasing so that salience decreases with position in the list.

Under System 2, customers make decisions based on the perceived utility of items. Each product $x$ has an observed utility $u(x)$. We assume that the list chosen by the platform can potentially impact a product's perceived utility. Instead of merely drawing attention, top positions in the list can act as a signal of unobserved quality, influencing a consumer's evaluation of the product. The perceived utility of product $x$ under list $l$, denoted by $u^l(x)$, is the sum of its observed utility and an unobserved utility component associated with its position:
\[
u^l (x)=u(x)+\beta(l(x)) \quad \text{ for all $x\in S$}.
\]
Here, $\beta\colon \{1,2,\dots,|S|\}\to \mathbb{R}_+$ is a weakly decreasing function representing the position-dependent utility boost. As $\beta(\cdot)$ is weakly decreasing, options listed earlier in the list have weakly higher utility boosts. We further assume that the ordering of perceived utilities is strict regardless of the list $l$ chosen by the platform; this technical assumption simplifies the analysis.

In summary, customer $i$ follows a DST model with representation $(\alpha_i,u^l,w^l_i)$. The platform knows the intrinsic utility ($u$) and how list positions affect System 2's preferences ($u^l$), which are the same for all customers, but does not know $\alpha_i$ or $w^l_i$, which are individual-specific. Additionally, the platform does not observe the distribution of customers, denoted by $(\eta_1,\dots,\eta_K)$. Let $\rho^l$ be the choice of the population when observing list $l$. Mathematically,
\[
\rho^l(x,S)=\sum_{i=1}^K \eta_i\rho_{\alpha_i,u^l,w^l_i}(x,S) \quad \text{for all $x\in S$.}
\]

\smallskip
\noindent \textbf{The platform.} Each product $x \in S$ yields a payoff $\tau_x>0$ to the platform. Depending on the platform's business model and objectives, $\tau_x$ may be interpreted as the platform's margin, commission, revenue, or profit from product $x$. Let $\tau=(\tau_x)_{x\in S}$ denote the payoff vector. We assume that these payoffs are pairwise distinct; that is, $\tau_x\ne \tau_y$ when $x\ne y$. Given payoff vector $\tau$, let $f_\tau(\rho^l)$ denote the platform's utility when choosing list $l$. We assume that the non-parametric utility function $f_\tau$ satisfies the following monotonicity property. Let $\succ_\tau$ be the ordering of products implied by their payoffs, i.e., for two products $x$ and $y$, if the payoff of $x$ is higher than that of $y$, then $x\succ_\tau y$. We assume that
\[
\tag_{Monotonicity} f_\tau(\rho^l) \ge f_\tau(\rho^{l'}) \quad \quad \text{if $\quad$ $\rho^l \, \, \text{FOSD}_{\succ_\tau} \, \, \rho^{l'}$},
\]
with strict inequality when $\rho^l\ne \rho^{l'}$. In the condition above, $\rho^l \,\, \text{FOSD}_{\succ_\tau} \,\, \rho^{l'}$ means that $\rho^l$ assigns weakly greater cumulative probability to higher-payoff products than $\rho^{l'}$ does (see Definition \ref{defn:FOSD} for a formal definition). Monotonicity, therefore, simply states that the platform prefers the list in which products with greater payoffs are selected with higher probabilities. The class of functions satisfying monotonicity is large. It includes, for example, the well-known family of functions with constant elasticity of substitution ($f_\tau(\rho^l)=(\sum_{x\in S} \rho^l(x,S)\tau_x ^{\frac{\delta-1}{\delta}})^{\frac{\delta}{\delta-1}}$, where $\delta>1$ is the elasticity of substitution).\footnote{The expected payoff utility function ($f_\tau(\rho^l)=\sum_{x\in S} \tau_x \rho^l(x,S)$) and Cobb-Douglas utility function ($f_\tau(\rho^l)=\prod_{x\in S} \tau_x^{\rho^l(x,S)})$ are special cases of this family as $\delta\rightarrow \infty$ and $\delta\rightarrow 1$, respectively.} Before presenting the analysis, we formally define an optimal list. 

\begin{definition}A list is optimal if it solves $\max_{l\in \mathcal{L}}f_\tau(\rho^l)$, where $\mathcal{L}$ is the set of all possible lists.
\end{definition} 

\medskip
\noindent \textbf{Result on optimal list.} In what follows, let $l^*$ denote the optimal list. Abusing the notation slightly, we also use $[x,y,z,\dots]$ to denote a list where $x$ is ranked first, $y$ is ranked second, $z$ is ranked third, and so on. Let $[x_1,x_2,\dots,x_{|S|}]$ denote the payoff-ordered list, where $\tau_{x_1}>\tau_{x_2}>\cdots>\tau_{x_{|S|}}$. Let $x_h$, where $h\in \{1,\dots,{|S|}\}$, denote the product with the highest observed utility.

A candidate for the optimal list is the payoff-ordered list. As choice in System 1 depends entirely on product salience and salience values decrease with positions, listing products in descending order of their payoffs maximizes the platform's utility under System 1. Consequently, the payoff-ordered list is optimal for the platform in certain circumstances. One specific circumstance is when the platform's choice of list is irrelevant to the customer's decisions under System 2.\footnote{This can happen when, for instance, $u(x_h)+\beta(|S|)> \max_{k\ne h} u(x_k)+ \beta(1)$. Under this condition, the product with the highest observed utility also has the highest perceived utility under every list chosen by the platform; hence, the platform's choice of list does not affect customers' decisions under System 2. This may arise, for example, when list positions do not affect utilities $(\beta(k)=0$ for all $k$), when they affect all products identically $(\beta(k)$ is a constant for all $k$), or when the position-induced utility boosts are small. In these cases, all customers select the product $x_h$ under System 2 in any list chosen by the platform. Consequently, $l^*=[x_1,x_2,\dots,x_{|S|}]$.}

In more general cases, an optimal list can significantly deviate from the payoff-ordered list. While listing products according to their payoff maximizes the platform's utility under System 1, customers may choose a product with a small payoff under System 2 when seeing this list, thereby hurting the platform's aggregate utility. Consequently, the platform may have an incentive to depart from the payoff-ordered list in order to induce customers to choose a product that yields a higher payoff when System 2 is activated. This potential trade-off between System 1 and System 2 determines which list is optimal.

As the platform's optimal list design is a finite optimization problem, it does not admit a closed-form solution in general. Proposition \ref{prop:optimal_list_1} below presents the structure of an optimal list and shows that an optimal list can be partitioned into three blocks with different properties. Note that Proposition \ref{prop:optimal_list_1} applies to any distribution of customer types and any strict preference profile of customers.

\begin{prop} \label{prop:optimal_list_1} In any optimal list $l^*$, there exists a unique product $x_a$ with $a\le h$ such that
\begin{itemize}
    \item[(i)] \text{[First block]} \[
l^*(x_i)=
\begin{cases}
m,   & \text{if} \quad  i=a, \quad \quad \quad \quad \text{where} \quad  m\le a \\[4pt]
i,   & \text{if} \quad 1\le i \le m-1, \\[4pt]
i+1, & \text{if} \quad m\le i \le a-1.
\end{cases}
\]
    \item [(ii)] \text{[Last block]} $l^*(x)<l^*(y)$ whenever $\tau_x>\tau_y$ and $\min\{l^*(x),l^*(y)\}>l^*(x_h)$. Here, $l^*(x_h)\ge m$.
\end{itemize}
\end{prop}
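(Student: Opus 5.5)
The plan is to reduce everything to pairwise swaps of positions. Two facts drive it. First, under any list $l$ all customers share the perceived utility $u^l$, so System 2 selects the same product for every customer. Call it the \emph{winner} $x^l=\arg\max_{x\in S}\,[u(x)+\beta(l(x))]$. Second, each $w_i$ is strictly decreasing and payoffs are distinct. Hence moving a higher-payoff product to a better slot, and a lower-payoff product to a worse slot, strictly improves each customer's System-1 distribution in the FOSD$_{\succ_\tau}$ sense, by the same monotone-likelihood-ratio logic as Proposition \ref{prop:comparative static-1}(ii). A FOSD improvement in every component gives a FOSD improvement of the mixture $\rho^l$. By strict Monotonicity of $f_\tau$, any swap that (a) strictly improves System 1 and (b) leaves the winner with weakly higher payoff contradicts optimality. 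Set $x_a:=x^{l^*}$ and $m:=l^*(x_a)$; this already makes $x_a$ unique.

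\textbf{Step 1 (key swap lemma).} Let $x,y$ be products with $\tau_x>\tau_y$, $l(x)>l(y)$, and $y\neq x^l$. Let $l'$ swap $x$ and $y$. Under $l'$, only $x$ gains perceived utility, because $\beta$ is weakly decreasing. So the new winner is either the old winner or $x$. In the second case it is $x$, which has payoff strictly above $\tau_y$, but we also need it weakly above the old winner's payoff. I will handle this by checking the cases separately.
\begin{itemize}
\item If $x$ is the old winner, then $x$ remains the winner.
\item If $x$ becomes the new winner, I need $\tau_x\ge\tau_{x^l}$.
\end{itemize}
The second case is the delicate point. To avoid it, I will apply the lemma only in configurations where it is automatic. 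One such configuration is swapping two non-winners that both lie after the winner's competitor set, as in Step 3. The other is swapping with $x=x_i$, $i<a$.

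\textbf{Step 2 ($a\le h$).} Suppose $a>h$, so $\tau_{x_h}>\tau_{x_a}$. Since $x_a$ is the winner, $u(x_a)+\beta(m)>u(x_h)+\beta(l^*(x_h))$, and $u(x_h)>u(x_a)$ forces $m<l^*(x_h)$. Swap $x_a$ and $x_h$. Then $u(x_h)+\beta(m)>u(x_a)+\beta(m)$, which exceeds every other unchanged perceived utility, so $x_h$ becomes the winner. This swap raises the System-2 payoff and moves the higher-payoff product up, contradicting optimality.

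\textbf{Step 3 (last block).} Take $x,y$ with $\min\{l^*(x),l^*(y)\}>l^*(x_h)$. Any such product $z$ satisfies $u(z)+\beta(\cdot)\le u(z)+\beta(l^*(x_h))<u(x_h)+\beta(l^*(x_h))\le$ the winner's utility. So swapping two such products never changes the winner. By Step 1, they must appear in payoff order. The relation $l^*(x_h)\ge m$ follows because $x_h$ is either the winner itself ($a=h$) or was shown in Step 2 to sit below it.

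\textbf{Step 4 (first block).} This is the main obstacle. I must show that positions $1,\dots,a$ are occupied exactly by $x_1,\dots,x_a$, with $x_1,\dots,x_{a-1}$ in payoff order and $x_a$ inserted at slot $m\le a$.

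First, suppose some $x_i$ with $i<a$ lies below a product $y\neq x_a$ with $\tau_y<\tau_{x_i}$. Swapping them raises only $x_i$. The new winner is therefore $x_a$ or $x_i$, both with payoff $\ge\tau_{x_a}$. This is a strict improvement, hence a contradiction.

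Second, suppose $x_i$ ($i<a$) lies below $x_a$. Swapping $x_i$ and $x_a$ changes perceived utilities only at these two slots. The new winner is then $x_i$ or $x_a$ (others were below $u(x_a)+\beta(m)$), again a strict improvement. This yields $m\le a$ together with the ordering $l^*(x_i)=i$ for $i<m$ and $l^*(x_i)=i+1$ for $m\le i\le a-1$.

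Third, it remains to show that no product with payoff below $\tau_{x_a}$ sits above some $x_i$, $i<a$. The first argument in this step already covers that.

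I expect the bookkeeping of which product can become the new winner after a swap to be the only real difficulty. The observation that only the raised product gains perceived utility is what resolves it in each case.
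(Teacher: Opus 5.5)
\paragraph{Assessment.} Your Steps 2 and 3 are sound and match the paper's arguments. In Step 3, note that the derivation of $m<l^*(x_h)$ in Step 2 used only $a\neq h$, so it also covers the case $a<h$.

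\paragraph{The gap: the second argument of Step 4.} There you swap the winner $x_a$ \emph{downward} with some $x_i$, $i<a$, that sits below it. You then assert that the new winner is $x_i$ or $x_a$, because every other product had perceived utility below $u(x_a)+\beta(m)$. That does not follow. After the swap, $x_a$'s perceived utility drops to $u(x_a)+\beta(l^*(x_i))$. If $u(x_i)<u(x_a)$, a third product whose perceived utility lies between these new values and the old maximum becomes the winner, possibly with a low payoff. This is exactly the configuration your own Step 1 excludes through the hypothesis $y\neq x^l$.

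A concrete instance comes from the paper's Example. The optimal list $[x_3,x_1,x_2,x_5,x_4]$ has $x_a=x_3$, $m=1$ and $a=3$. Swapping $x_1$ and $x_3$ gives $[x_1,x_3,x_2,x_5,x_4]$. There $x_4$ becomes the winner, with perceived utility $3.25$ against $3.2$ for $x_3$, and its payoff is $1/3<1/2$. More generally, the proposition itself allows $x_m,\dots,x_{a-1}$ to sit below $x_a$. An argument ruling that out would put all of $x_1,\dots,x_{a-1}$ above $x_a$. That gives $m\ge a$, not the $m\le a$ you claim to derive from it.

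\paragraph{The fix.} What is missing is the opposite swap, which is the paper's argument.
\begin{itemize}
\item Suppose some $y$ with $\tau_y<\tau_{x_a}$ sits above $x_a$, and swap them. Now $x_a$ moves up, so it stays the winner; this is your Step 1, first bullet, with $x=x_a$.
\item System-1 mass shifts from $y$ to $x_a$, which is a strict FOSD improvement and contradicts optimality.
\item Hence only products among $x_1,\dots,x_{a-1}$ lie above $x_a$, so $m\le a$.
\end{itemize}
Combine this with your first argument in Step 4: no product other than $x_a$ with lower payoff sits above any $x_i$, $i<a$. Then every product with index above $a$ lies below all of $x_1,\dots,x_a$. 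So positions $1,\dots,a$ are filled by $x_1,\dots,x_a$, with $x_1,\dots,x_{a-1}$ in payoff order and $x_a$ at slot $m$. This is exactly part (i). Replacing your second argument with this one closes the gap.
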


The first block in an optimal list comprises the first $a$ products with the highest payoffs. In this block, products are ordered by their payoffs except possibly for $x_a$ (part (i) of Proposition \ref{prop:optimal_list_1}). The proof of Proposition \ref{prop:optimal_list_1} identifies $x_a$ as the product with the highest perceived utility under the optimal list. The second (possibly empty) block contains options ranked immediately after the first block until product $x_h$. The ordering within this block may deviate from the payoff ordering. The last block consists of products placed after $x_h$. In this final block, the optimal list again respects the payoff ordering (part (ii) of Proposition \ref{prop:optimal_list_1}).\footnote{One corollary of Proposition \ref{prop:optimal_list_1} is that if $x_a\equiv x_1$ then the optimal list is unique and identical to the payoff-ordered list, i.e., $l^*=[x_1,x_2,\dots,x_{|S|}]$ when $x_a\equiv x_1$.}

Proposition \ref{prop:optimal_list_1} shows that the platform's optimal strategy depends on both the customers' underlying preferences and the platform's payoffs. It also indicates that listing products based on their payoffs may be suboptimal for the platform due to the potential conflict between System 1 and System 2 described above. Example \ref{ex:optimal_list} below illustrates the trade-off between the two systems and highlights three interesting features that may occur in an optimal list. First, a product that provides only a moderate payoff to the platform may occupy the top position. Second, the product with the highest observed utility may be placed at the bottom. Finally, a product that is least valuable to the platform and least appealing to customers in terms of observed utility may nonetheless appear before other products.

\begin{example} \label{ex:optimal_list} Suppose all customers are identical and have $\alpha=1/2$. There are 5 products $x_1,\dots,x_5$ with payoffs $\tau_{x_j}=1-j/6$. Let observed utilities be $u(x_1)=1.5$, $u(x_2)=1$, $u(x_3)=2$, $u(x_4)=2.5$, and $u(x_5)=0.5$. The position-dependent utility boost and the salience value associated with the $j$th position in the list are given by $\beta(j)=3/2-3j/20$ and $w(j)=2/5-j/15$. The platform maximizes the expected payoff, i.e., $f_\tau(\rho^l)=\sum_{i=1}^5 \tau_{x_i} \rho^l(x_i,S)$, where $S=\{x_1,x_2,\dots,x_5\}$.
\begin{itemize}
    \item Under the payoff-ordered list, $[x_1,x_2,x_3,x_4,x_5]$, the platform's utility under System 1 is $11/18$. The product with the highest perceived utility is $x_4$; hence, the platform's utility under System 2 is $\tau_{x_4}=1/3$. The platform's utility under the payoff-ordered list is $17/36$.
    \item The optimal list is $[x_3,x_1,x_2,x_5,x_4]$. Here, $x_3$ occupies the top position despite providing only a moderate payoff to the platform; $x_4$ appears last despite having the highest observed utility; and $x_5$ precedes $x_4$ even though it is least appealing to both the customers and the platform. Under the optimal list, the platform's utility under System 1 is  $17/30$, which is smaller than that under the payoff-ordered list. This decrease in utility, however, is offset by the increase in utility when System 2 is activated. The product with the highest perceived utility under the optimal list is $x_3$; hence, the platform's utility under System 2 is $\tau_{x_3}=1/2$, which is higher than that under the payoff-ordered list. The platform's utility under the optimal list is $8/15>17/36$.
\end{itemize}
\end{example}

Proposition \ref{prop:optimal_list_1} and Example \ref{ex:optimal_list} are empirically useful in understanding the platform's behavior and identifying list manipulation. Suppose an analyst observes the list chosen by the platform (such a list is often public information) and seeks to identify the product that yields the highest benefit to the platform. Proposition \ref{prop:optimal_list_1} indicates that such a product must be ranked either first or second in the observed list, conditional on the observed list being optimal. Moreover, the platform can strategically position items of moderate value to them near or at the top of the list, as shown in Example \ref{ex:optimal_list}. This observation indicates that the presence of low-to-middle-value products (to the platform) near or at the top of the list does not automatically eliminate the potential issue of information manipulation.

We conclude this section by outlining an algorithmic approach for solving the optimal list design problem. In general, the problem depends on the functional form of the platform's objective function, $f_\tau$. In a particular situation where $f_\tau$ corresponds to the expected payoff and the platform can observe the customers' choices, we show in Appendix \ref{sec:Appendix_linear_programming} that the platform's optimization reduces to a 0-1 linear program similar to an assignment problem. In this case, one can use some well-known algorithms, such as branch-and-bound algorithm, to identify optimal solutions; see Appendix \ref{sec:Appendix_linear_programming} for details.

%%%%%%%%%%%%%%%%%%%%%%%%%%%%%%%%%%%%%%%%%%%%%%%%%%%%%%%%%%
%%%%%%%%%%%%%%%% This is the new section %%%%%%%%%%%%%%%%%%
%%%%%%%%%%%%%%%%%%%%%%%%%%%%%%%%%%%%%%%%%%%%%%%%%%%%%%%%%%
\section{Cognitive Weight ($\alpha$) as a Measure of Stochastic Rationality} \label{sec:rationality}

In our framework, System 2 corresponds to the classical rational utility maximizer. Accordingly, the cognitive weight $\alpha$ on System 2 can be interpreted as a measure of rationality: a larger $\alpha$ indicates behavior more closely aligned with utility maximization. In this section, we formalize this interpretation by relating $\alpha$ to the well-known \emph{swaps index} of \cite{apesteguia2015measure}, which is a model-free measure of stochastic rationality. 

\cite{apesteguia2015measure} raises two fundamental questions: (i) How severe are the deviations from rationality? (ii) What are the underlying preferences for the purpose of welfare analysis? The paper provides a novel measure of stochastic rationality using a model-free approach. For a linear order $\succ$, the inconsistency of a single observation 
 $x$ chosen from $S$ with respect to $\succ$ is the number of items in $S$ ranked above $x$. That is equivalent to the number of swaps needed to make $x$ the top-ranked item under that ranking. Summing these swaps (option by option and menu by menu, weighted by how often each menu appears and how often each option is chosen) yields a total inconsistency score; the swaps index is the minimum such score over all rankings. The ranking that attains it is the swaps preference--the welfare ranking closest to the data.

Formally, let $\sigma\colon \mathcal{D}\to (0,1)$ denote a fixed probability distribution over all menus where data are available, with $\sigma(S)$ being the frequency the decision maker faces menu $S$.\footnote{$\sigma$ has full support on $\mathcal{D}$ because $\mathcal{D}$ is the set of menus where data are available by definition.} For an arbitrary RCF $\rho$ and a linear order $\succ$ on $X$, let
\[
I(\rho,\succ)=\sum_{S\in \mathcal{D}} \sum_{x\in S} \sigma(S) \rho(x,S) |\{y: y\in S, y\succ x\}|
\]
be the total amount of ``mistake" observed in $\rho$ w.r.t. $\succ$. In the functional form of $I(\rho,\succ)$, the cardinality of the set $\{y: y\in S, y\succ x\}$ measures the distance from $x$ to the best available option and thus represents the magnitude of the mistake when choosing $x$. This magnitude is then weighted by the probability that $S$ is realized and $x$ is chosen in the computation of $I(\rho,\succ)$. The swaps index of $\rho$, denoted by $I_{swaps}(\rho)$, is defined as the minimal observed mistake, i.e., it is the minimum of $I(\rho,\succ)$ taken over all linear orders on $X$. They also refer to the linear order(s) that attains this minimum as the swaps preference. Denote the swaps preference by $\succ_{swaps}$. Mathematically,
\[
I_{swaps}(\rho)=\min_{\succ} I(\rho,\succ) \quad  \text{ and } \quad \succ_{swaps}(\rho)\in \argmin_{\succ} I(\rho,\succ).
\]
Note that the swaps index is a measure of \textit{irrationality}; hence, $\rho$ is more rational when $I_{swaps}(\rho)$ is smaller. Additionally, the swaps index and the swaps preferences are independent of the underlying model. Therefore, this approach is a model-free measure of stochastic rationality. 

One might wonder what happens if we apply this measure to data generated by our model. For example, it is curious to see i) whether the preference of the decision maker (in System 2) coincides with the swaps preference, and ii) the cognitive weight $\alpha$ is consistent with the swaps index. 
Proposition \ref{prop:swaps index} below presents these relationships among swaps index, swaps preference, and the primitives of a DST satisfying Consistency. In Proposition \ref{prop:swaps index}, the monotone likelihood ratio condition ($w(x)/w'(x)\ge w(y)/w'(y)$ whenever $x\succ y$) implies that, relative to $w'$, $w$ places more weight on preferred options; hence, the condition indicates that $w$ is more consistent with (closer to) $\succ$ than $w'$.\footnote{This condition also appears in comparative statics in Proposition \ref{prop:comparative static-1}. If $w$ and $w'$ are proportional, then they induce the same Luce choice
probabilities and are equally close to $\succ$.}

\begin{prop}\label{prop:swaps index} Consider $\rho$ and $\rho'$ with consistent DST representations $(\alpha,\succ,w)$ and $(\alpha',\succ,w')$, respectively. 
\begin{itemize}
    \item[(i)] Suppose $\frac{w(x)}{w'(x)}\ge\frac{w(y)}{w'(y)}$ whenever $x\succ y$. If $\alpha>\alpha'$ then $I_{swaps}(\rho)<I_{swaps}(\rho')$.
    \item[(ii)] The swaps preferences of $\rho$ and $\rho'$ are unique and both identical to  $\succ$.
\end{itemize}
\end{prop}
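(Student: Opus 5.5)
The plan is to exploit the linearity of $I(\rho,\succ)$ in $\rho$ together with the mixture structure of DST. For any linear order $\succ'$, write $L_w$ for the Luce RCF with weights $w$ and $\delta_\succ$ for the deterministic RCF maximizing $\succ$. Since $\rho=(1-\alpha)L_w+\alpha\,\delta_\succ$, we get
\[
I(\rho,\succ')=(1-\alpha)\,I(L_w,\succ')+\alpha\, I(\delta_\succ,\succ').
\]
Moreover $I(\delta_\succ,\succ)=0$. We will prove (ii) first and then use it to compute both swaps indices in (i).

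For (ii), fix a linear order $\succ'\neq\succ$. First, the deterministic term: there is a pair $x\succ y$ with $y\succ' x$. The binary menu $\{x,y\}$ lies in $\mathcal{D}$ and $\sigma(\{x,y\})>0$, so $I(\delta_\succ,\succ')\ge \sigma(\{x,y\})>0=I(\delta_\succ,\succ)$. Second, the Luce term: we show $I(L_w,\succ)\le I(L_w,\succ')$ menu by menu. Within a menu $S$, the contribution $\sum_{x\in S}L_w(x,S)\,|\{y\in S:y\succ' x\}|$ assigns the penalties $0,1,\dots,|S|-1$ to the alternatives in $\succ'$-order. By the rearrangement inequality, this sum is minimized by giving the smallest penalties to the largest probabilities, that is, by ordering according to $w$. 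By Consistency, this order is exactly $\succ$. Combining the two terms with $\alpha>0$ gives $I(\rho,\succ)<I(\rho,\succ')$, so $\succ$ is the unique swaps preference. The same argument applies to $\rho'$.

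For (i), part (ii) yields $I_{swaps}(\rho)=(1-\alpha)I(L_w,\succ)$ and $I_{swaps}(\rho')=(1-\alpha')I(L_{w'},\succ)$. The key step is the identity
\[
\sum_{x\in S}p(x)\,|\{y\in S:y\succ x\}|=\sum_{k=1}^{|S|-1}\Pr\nolimits_p(\text{rank}>k),
\]
where $\text{rank}$ is the $\succ$-rank of the chosen option within $S$. This expresses $I(\cdot,\succ)$ as a sum of upper-tail probabilities, so $I(\cdot,\succ)$ is monotone under FOSD$_\succ$.

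The monotone likelihood ratio condition gives $L_w$ FOSD$_\succ$ $L_{w'}$. This is Proposition \ref{prop:comparative static-1}(ii) applied with a common $\alpha$; equivalently, it follows from the standard fact that MLR implies FOSD, applied menu by menu. Hence $I(L_w,\succ)\le I(L_{w'},\succ)$.

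Finally, $I(L_{w'},\succ)>0$ because $w'$ is strictly positive and $\mathcal{D}$ contains binary menus. Since $1-\alpha<1-\alpha'$, we obtain
\[
(1-\alpha)I(L_w,\succ)\le (1-\alpha)I(L_{w'},\succ)<(1-\alpha')I(L_{w'},\succ),
\]
which proves (i).

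The main point needing care is the Luce step in (ii): the rearrangement argument gives only a weak inequality for $L_w$. Strictness must come from the deterministic part, and this relies on $\alpha>0$ and on $\sigma$ having full support on the binary menus.
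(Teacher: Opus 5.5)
Your proof is correct and follows the paper's skeleton: prove (ii) first, use it to write both indices as $I(\cdot,\succ)$, compare the Luce parts menu by menu, then use $1-\alpha<1-\alpha'$. The two key steps are argued differently, however.

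For (ii), the paper does not split $\rho$ into its components. It checks that under Consistency $\rho(x,S)>\rho(y,S)$ in every menu $S\supseteq\{x,y\}$ with $x\succ y$. It then invokes Theorem 1 of \cite{apesteguia2015measure}, which needs only the aggregated inequality $\sum_{S\supseteq\{x,y\}}\sigma(S)\rho(x,S)>\sum_{S\supseteq\{x,y\}}\sigma(S)\rho(y,S)$. That works because $I(\rho,\succ')$ is a sum over pairs $\{x,y\}$ of the aggregated probability of the $\succ'$-lower element. Your linearity-plus-rearrangement argument is self-contained and avoids the external result.

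Your closing caveat is overly cautious, though. Consistency makes $w$ injective, so rearrangement already gives a strict inequality for the Luce part on any binary menu where $\succ'$ and $\succ$ disagree. In fact $\rho(\cdot,S)$ itself is strictly decreasing along $\succ$, so rearrangement applied directly to $\rho$ settles (ii) without the decomposition.

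For (i), the paper cross-multiplies and verifies $\sum_{i<j}(i-j)[w(x_i)w'(x_j)-w'(x_i)w(x_j)]\le 0$ by hand. You instead express $I(\cdot,\succ)$ as a sum of upper-tail probabilities, note that it is monotone under FOSD$_\succ$, and reuse Proposition \ref{prop:comparative static-1}(ii). That reuse is valid at $\alpha=0$, since the cumulative gaps there are just $(1-\alpha)$ times the Luce gaps. Your route is more modular and shows why the likelihood-ratio hypothesis is the natural one. It also states explicitly that $I(L_{w'},\succ)>0$, which the strict inequality requires and which the paper leaves implicit.
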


Proposition \ref{prop:swaps index} holds under all possible distributions over the menus. When System 1 and System 2 are consistent, part (i) of Proposition \ref{prop:swaps index} states that the swaps index and our model-based cognitive weight produce the same ordinal ranking of rationality. Consequently, greater deliberation by the decision maker corresponds to higher rationality according to the swaps index. The proof of Proposition \ref{prop:swaps index} shows that the converse also holds when $\rho$ and $\rho'$ share the same Luce weights (i.e., $w\equiv w'$). In that case, higher rationality (a lower swap index) implies more deliberate thinking.

Part (ii) of Proposition \ref{prop:swaps index} indicates that the underlying preference of the DM is indeed the unique order that results in minimum mistakes. Hence, the swaps preference manages to accurately capture the true preference of the DM. Overall, Proposition \ref{prop:swaps index} lends support to the use of the swaps index as a measure of stochastic rationality under Consistency. The proof of Proposition \ref{prop:swaps index} uses Theorem 1 in \cite{apesteguia2015measure}.

So far, we have demonstrated that the swaps index successfully reflects rationality when the two systems are aligned. We now turn to cases in which the systems diverge, and show that the index can fail to provide the correct comparative ranking of agents' rationality. Put differently, the swaps index may paradoxically conclude that choice behavior becomes less rational even as the probability of rational choice increases. We also show that swaps-based preferences may fail to recover the decision maker's underlying preference relation.

In what follows, we assume that System 2 is dominant; that is, $\alpha > 0.5$. This ensures that each agent selects the best available alternative in any menu at least half of the time. Both Examples \ref{ex:swaps_index_alternative} and \ref{ex:swaps_index_alternative_1} below illustrate how the swaps index can produce results directly at odds with our measure $\alpha$. In both cases, we assume that the agents share the same true preference ordering. In Example \ref{ex:swaps_index_alternative}, they also share identical System 1 weights, differing only in how often they rely on the rational system. We show that not only is the relative ranking inverted, but the swaps preferences also fail to reproduce the true preferences.

\begin{example}\label{ex:swaps_index_alternative} Consider three alternatives with  $x\succ y \succ z$. Suppose $w(x)=0.25, w(y)=0.05$, and $w(z)=0.70$, so that the ranking induced by the Luce weight is different from the preference ranking. Suppose $\mathcal{D}$ consists of all menus of sizes 2 and 3. The likelihood of observing each menu in $\mathcal{D}$ is given by $\sigma(\xyz)=\sigma(\yz)=0.45$ and $\sigma(\xy)=\sigma(\xz)=0.05$. Let $\rho$ and $\rho'$ have DST representations $(\alpha=0.60, \succ, w)$ and $(\alpha'=0.55, \succ, w)$, respectively. Since $\alpha>\alpha'>0.5$, the propensity to think deliberately under $\rho$ is higher than that under $\rho'$. However, the swaps index ranks $\rho'$ more rational than $\rho$. Moreover, neither $\succ_{swaps}(\rho)$ nor $\succ_{swaps}(\rho')$ reflects the DM's true preference. The two are identical and represent a ``compromise" order between the ordering implied by the Luce weights and the DM's preference: $\succ_{swaps}(\rho) \equiv \, \,\succ_{swaps}(\rho') \equiv x\succ z \succ y$.
\end{example}

Example \ref{ex:swaps_index_alternative} also illustrates that the swaps index fails to respect stochastic dominance. The comparative statics in Proposition \ref{prop:comparative static-1} imply that if $\rho$ and $\rho'$ admit DST representations $(\alpha,\succ,w)$ and $(\alpha',\succ,w)$ with $\alpha>\alpha'$, then $\rho$ first-order stochastically dominates $\rho'$ with respect to $\succ$. This is the case in Example \ref{ex:swaps_index_alternative}, yet the swaps index nonetheless ranks $\rho'$ as more rational than $\rho$.

In Example \ref{ex:swaps_index_alternative_1} below, we demonstrate that the swaps index produces a contradictory ranking even when both agents are nearly fully rational (with $\alpha = 0.95$ and $\alpha' = 0.90$, respectively). Moreover, although the swaps-based preferences coincide with the true ordering, the index nevertheless assigns the opposite ranking.

\begin{example} \label{ex:swaps_index_alternative_1} Assume $\succ$, $\sigma$, and $w$ are as in Example \ref{ex:swaps_index_alternative}. Let $\rho$ and $\rho'$ have DST representations $(\alpha=0.95, \succ, w)$ and $(\alpha'=0.90, \succ, w')$, respectively. Here, $w'$ is obtained from $w$ by swapping the values of $y$ and $z$: $w'(x)=0.25, w'(y)=0.70$, and $w'(z)=0.05$. Then, the swaps index still ranks $\rho'$ as more rational than $\rho$, despite i) $\alpha>\alpha'$ and ii) both $\succ_{swaps}(\rho)$ and $\succ_{swaps}(\rho')$ correctly capturing the DM's true preference.
\end{example}

Examples \ref{ex:swaps_index_alternative} and \ref{ex:swaps_index_alternative_1} highlight the fact that using model-free criteria, such as the swaps index, to measure stochastic rationality can be misleading in certain circumstances. This conclusion continues to hold under an alternative measure of stochastic rationality introduced in \cite{ok2023measuring} (see Appendix \ref{sec:appendix_rat_index}). Consequently, model-free criteria should be applied with caution when analyzing rationality in stochastic choice settings.

%%%%%%%%%%%%%%%%%%%%%%%%%%%%%%%%%%%%%%%%%%%%%%%%%%%%%%%%%%%%%
%%%%%%%%%%%%%%%%% Micro Foundation %%%%%%%%%%%%%%%%%%%%%%%%%%
%%%%%%%%%%%%%%%%%%%%%%%%%%%%%%%%%%%%%%%%%%%%%%%%%%%%%%%%%%%%%

\section{Micro Foundation: Constrained Optimization}\label{sec:micro_foundation}

In this paper, our aim is to develop a model of decision-making that incorporates the psychological notion of dual-system theories. Our framework is deliberately parsimonious, and it naturally raises some questions. How does the agent determine how often to engage System 2? Why is System 1 modeled using the Luce framework? We address these issues in this section. Inspired by the idea that stochastic choice can result from an optimization problem \citep{machina1985stochastic, Fudenberg_Iijima_Strzalecki_2015_ECMA,cerreia2019deliberately,chambers2025weighted}, we show that the DST model is the solution to a simple constrained optimization problem of a boundedly rational agent operating under dual-system thinking.

The optimization problem is grounded in two leading theories from psychology about how dual-system thinking operates. The \textit{cognitive-miser theory} asserts that individuals tend to exert minimal cognitive and processing effort when making decisions. Accordingly, automatic processes with low computational expense are often activated first and the brain is engaged only when these automatic
processes fail.\footnote{\cite{hull2001science} claims that: ``the rule that human beings seem to follow is to engage the brain only when all else fails, and usually not even then'' (p. 37), which is agreed by \cite{richerson2008not}: ``In effect, all animals are under stringent selection pressure to be as stupid as they can get away with" (p. 135).} Based on the cognitive-miser theory, the \textit{default-interventionist theory} of System 1-System 2 posits that the default is to use System 1's processes whenever possible. System 2 acts as a check-and-intervention mechanism and is often engaged only to override failure by System 1 when its solution to the problem at hand is not sufficiently good \cite[p.69]{stanovich2009distinguishing}. In this account, the cognitive weight $\alpha$ in the DST model represents the frequency at which System 2 intervenes and overrides System 1's decision.

As System 1 is automatic, intuitive, heuristic, and low in cognitive effort, it is natural to assume that System 1 makes choices based on the ease of selection and seeks to minimize the cost of choosing. Assume that each option $x$ is associated with a parameter $w(x)>0$ that measures how easy it is to be selected when System 1 is engaged. Oftentimes $w(x)$ corresponds to the salience or familiarity of $x$. A higher value of $w(x)$ implies that the option is more salient or more familiar, making it easier and less costly to choose. Naturally, the inverse of the salience value and/or familiarity value, $1/w(x)$, can be interpreted as the cognitive (and processing) cost of selecting $x$. Consider a non-singleton menu $S$. System 1's objective is to minimize an expected additive quadratic cost: 
\[
\underset{\rho(.,S)\in \Delta(S)}{\min}\sum_{x\in S}\frac{\rho(x,S)^2}{w(x)}.
\]
In System 1's cost minimization objective, the cost of choosing $x$ increases with its probability of being chosen, $\rho(x,S)$, and decreases with its salience or familiarity value, $w(x)$.

System 2, by contrast, is deliberate and analytical. When it is activated, it opts for the option with the highest utility with certainty. Let $x_S^*$ denote such an option in choice set $S$. System 2 cares about the frequency with which the best option is chosen and monitors the decision made by System 1 through a constraint:
\[
\underbrace{\vphantom{\frac{w(x^*_S)}{w(S)}}
\rho(x^*_S,S)}_{\text{System 2 only cares about the best option}} \ge \underbrace{\frac{w(x^*_S)}{w(S)}}_{\text{choice frequency of $x_S^*$ by System 1}} + \underbrace{\vphantom{\frac{w(x^*_S)}{w(S)}} m_S.}_{\text{the magnitude of required improvement}}
\]
This constraint acts as an overriding mechanism through which System 2 intervenes in the decision-making process if System 1's choice is not sufficiently good. In the constraint, $m_S\ge 0$ measures the degree to which System 2 wants to improve on System 1's choice regarding the selection probability of the best option in $S$. A larger value of $m_S$ indicates greater desired improvement, implying that System 1's choice is relatively unsatisfactory. Conversely, a small value of $m_S$ indicates more alignment between the two systems. If $m_S=0$ then System 1's choice is sufficiently satisfactory and System 2 need not interfere. In summary, the agent solves
\[
\underset{\rho(.,S)\in \Delta(S)}{\min}\sum_{x\in S}\frac{\rho(x,S)^2}{w(x)} \quad \text{ subject to } \quad  
\rho(x^*_S,S) \ge \frac{w(x^*_S)}{w(S)} + m_S. 
\]
The Lagrangian of the agent's cost-minimization problem is
\[
\mathcal{L}=\sum_{x\in S}\frac{\rho(x,S)^2}{w(x)}-\lambda_1\bigg(\sum_{x\in S} \rho(x,S)-1\bigg)-\lambda_2\bigg(\rho(x^*_S,S)-\frac{w(x^*_S)}{w(S)} - m_S \bigg),
\]
where $\lambda_1$ and $\lambda_2\ge 0$ are the Lagrange multipliers associated with condition $\sum_{x\in S} \rho(x,S)=1$ and System 2's constraint, respectively. Focusing on interior solutions with $\rho(x,S)\in (0,1)$ for all $x\in S$, the first-order conditions (FOCs) are
\begin{eqnarray*}
    \frac{\partial \mathcal{L}}{\partial \rho(x,S)}&=& 0 \, = \, \frac{2\rho(x,S)}{w(x)}-\lambda_1 \quad \text{for all } x\in S, x\ne x^*_S,\\
    \frac{\partial \mathcal{L}}{\partial \rho(x^*_S,S)}&=& 0 \, = \, \frac{2\rho(x^*_S,S)}{w(x^*_S)}-\lambda_1-\lambda_2,\\
    \lambda_2\bigg[\rho(x^*_S,S)-\frac{w(x^*_S)}{w(S)}- m_S\bigg] &=& 0 \\
    \sum_{x\in S} \rho(x,S)&=&1
\end{eqnarray*}
Solving these FOCs for the optimal choice yields the DST model
\[
\rho(x,S)=\alpha\cdot \mathbbm{1}(\text{$x$ is best in $S$})+(1-\alpha)\cdot \frac{w(x)}{w(S)} \quad \text{for all $x\in S$},
\]
where the degree of intervention by System 2 is given by
\begin{equation}\label{eq:alpha}
    \alpha=m_S\cdot \frac{1}{1-w(x_S^*)/w(S)}.
\end{equation}
Equation (\ref{eq:alpha}) shows how $\alpha$ naturally arises in the DST model. First, the degree of intervention by System 2 depends positively on the magnitude of improvement $m_S$. If the desired improvement is higher (lower) then System 2 needs to interfere more often (less often). Particularly, in the limiting case when no improvement is needed as System 1's decision is already sufficiently good, i.e., $m_S=0$, System 2 need not interfere, i.e., $\alpha=0$. Second, the factor $1/[1-w(x_S^*)/w(S)]$ can be interpreted as the cost of intervention by System 2. From System 2's point of view, term $1-w(x_S^*)/w(S)$ measures the amount of ``mistakes" made by System 1. When this quantity is high, it is natural to expect the intervention by System 2 to be less costly as corrections are easy; in which case, the intervention is more efficient: the same degree of intervention ($\alpha$) leads to higher improvement in choice probabilities (higher $m_S$). Conversely, if System 1 already performs well and makes few mistakes, improving upon it is inevitably costly; consequently, the intervention becomes less efficient.

We have shown above that the agent's optimal choice follows the DST model in a single decision problem. To obtain the representation for all decision problems where $\alpha$ is invariant across menus, the magnitudes of improvement must satisfy an independence property. Mathematically, equation (\ref{eq:alpha}) implies that $\alpha$ is menu-independent if and only if
\begin{equation}\label{eq:constant alpha}
\frac{m_S}{m_{S'}}= \frac{1-w(x_S^*)/w(S)}{1-w(x_{S'}^*)/w(S')} \quad \text{ for all menus $S$ and $S'$.}
\end{equation}
This property states that the relative size of desired improvements is equal to the relative size of System 1's mistakes. In decision problems where System 1 makes many (few) mistakes, the required improvements must be large (may be small) so that the degree of intervention remains the same across all choice problems. 

To conclude this section, note that when condition (\ref{eq:constant alpha}) fails to hold, in which case the magnitudes of improvement do not necessarily align with the extent of System 1's mistakes, the agent's optimal choice corresponds to a generalization of the DST model where the cognitive weight varies across menus. We formally investigate this case in Section \ref{sec:menu-dependent alpha}.

%%%%%%%%%%%%%%%%%%%%%%%%%%%%%%%%%%%%%%%%%%%%%%%%%%%%%%%%%%%%%
%%%%%%%%%%%%%%%%% Random Product Availability %%%%%%%%%%%%%%%
%%%%%%%%%%%%%%%%%%%%%%%%%%%%%%%%%%%%%%%%%%%%%%%%%%%%%%%%%%%%%

\section{Generalizations}\label{sec:extensions}

In this section, we extend the DST framework along two dimensions that arise naturally in real-world applications. First, we allow random product availability, adapting the parametric approach of \cite{Brady-Rehbeck_2016_ECMA}. Since this can be interpreted as random awareness by the agent, we can view this extension as another source of bounded rationality. We provide both characterization and identification results for this extension. Second, we consider menu-dependent cognitive weights, which greatly expand the explanatory power of the model. In Appendix \ref{sec:appendix_heterogeneity}, we also consider heterogeneity across agents in terms of preferences, cognitive capacity, and Luce weights.

%%%%%%%%%%%%%%%%%%%%%%%%%%%%%%%%%%%%%%%%%%%%%%%%%%%%%%%%%%%%%
%%%%%%%%%%%%%%%%%%%%%%%% New Section %%%%%%%%%%%%%%%%%%%%%%
%%%%%%%%%%%%%%%%%%%%%%%%%%%%%%%%%%%%%%%%%%%%%%%%%%%%%%%%%%%%%
\subsection{Random Product Availability}\label{sec:DST-PA}

In this subsection, we assume that the options available to the decision maker vary randomly and are unobserved by the analyst---capturing either {\it{random availability}} or {\it{random awareness}}. To model this randomness, we adopt the parametric framework of \cite{Brady-Rehbeck_2016_ECMA}. Each set $T \in \mathcal{D} \cup \{\emptyset\}$ is feasible with probability $\pi(T)$, where $\pi$ is a probability distribution on $\mathcal{D} \cup \{\emptyset\}$. Given a choice set $S \in \mathcal{D}$, any subset $T \subseteq S$ is available to the decision maker with probability

\[
\frac{\pi(T)}{\sum_{A\subseteq S}\pi(A)} \quad \text{ for all } T \subseteq S .
\]

To accommodate the possibility that no product is available, \cite{Brady-Rehbeck_2016_ECMA} introduces a default option $a^*$.  The decision maker chooses the default only when no option is available ($T=\emptyset$). Let $X^* = X \cup \{a^*\}$ and $S^* = S \cup \{a^*\}$ for all $S \in \mathcal{D}\cup \{\emptyset\}$. A random choice function $\rho$ in this environment is a mapping from $X^* \times \big(\mathcal{D} \cup \{\emptyset\}\big)$ to $[0,1]$ such that $\sum_{x\in S^*} \rho(x,S)=1$ for all $S\in \mathcal{D} \cup \{\emptyset\}$, $\rho(x,S)=0$ if $x \notin S^*$, and $\rho(x,S)\in[0,1]$ otherwise. Formally, our model with random product availability is defined as follows.

\begin{definition} \label{defn:DST-PA}
A random choice function has a DST representation with random product availability (DST-PA) if there exist a linear order $\succ$, a constant $\alpha \in (0,1)$, a weight function $w\colon X\to \mathbb{R}_{++}$, and a full-support probability distribution $\pi$ over $\mathcal{D}\cup \{\emptyset\}$ such that for all $x\in S \in \mathcal{D}$,
\[
\rho_{\alpha,\succ,w,\pi}(x,S)
= \sum_{\substack{T\subseteq S\\ x\in T}}
\frac{\pi(T)}{\sum_{A\subseteq S}\pi(A)}
\Bigg[\alpha\, \mathbbm{1}(\text{$x$ is $\succ$-best in $T$})+(1-\alpha)\, \frac{w(x)}{w(T)} \Bigg]
\]
and $\rho(a^*,S)=\dfrac{\pi(\emptyset)}{\sum_{A\subseteq S}\pi(A)}$ for all $S\in \mathcal{D} \cup \{\emptyset\}$.
\end{definition}

This model extends the standard DST framework to settings where the set of products actually available to the decision maker is itself uncertain. Before arriving at the store, the decision maker may not know which subset of the menu $S$ will be feasible---for instance, some items may be out of stock or simply not noticed by the consumer. Once the agent is in the store and facing $T$, she behaves according to the DST model. The uncertainty is captured by a probability distribution $\pi$ over all possible feasible sets, i.e., over $\mathcal{D}\cup \{\emptyset\}$. The analyst does not observe this realization. Consequently, the observed choice probability for an option $x$ in menu $S$ is obtained by averaging over all feasible subsets $T \subseteq S$ that contain $x$, with each subset weighted by its availability probability under $\pi$. We will show that the model gains additional explanatory power while preserving uniqueness.

\medskip
\noindent \textbf{Identification.} Suppose $\rho$ has a DST-PA representation. We show that all primitives of the model are uniquely identified. For $\pi$, note that $\displaystyle \rho(a^*,S)=\frac{\pi(\emptyset)}{\sum_{A\subseteq S}\pi(A)}$ for all $S \in \mathcal{D}\cup \{\emptyset\}$ implies $\displaystyle \frac{\pi(S)}{\pi(\emptyset)}=\sum_{A\subseteq S}\frac{(-1)^{|S\setminus A|}}{\rho(a^*,A)}$ by M\"obius inversion (see \cite{shafer1976mathematical} and Lemma \ref{lemma:Mobius} in the Appendix). Using the fact that $\pi$ constitutes a probability distribution over $\mathcal{D}\cup \{\emptyset\}$ to obtain 
\[
\pi(\emptyset)= \Bigg[\sum_{S\in \mathcal{D}\cup \{\emptyset\}} \sum_{A\subseteq S}\frac{(-1)^{|S\setminus A|}}{\rho(a^*,A)}\Bigg]^{-1} \quad \text{and} \quad \pi(S)= \pi(\emptyset) \sum_{A\subseteq S}\frac{(-1)^{|S\setminus A|}}{\rho(a^*,A)}.
\]
To identify the preference $\succ$, the deliberation weight $\alpha$, and the Luce weighting function $w$, we construct an auxiliary object $\rho^* \colon X\times \mathcal{D} \to \mathbbm{R}_{+}$ from the observed choice rule $\rho$, which we refer to as the \emph{associated random choice function}. Intuitively, $\rho^*$ captures the agent's choices as if all products in the menu were always available. We show that $\rho^*$ itself admits a DST representation, with parameters $(\alpha^*, \succ^*, w^*)$. Moreover, there exists a unique mapping between the primitives of $\rho$ and those of $\rho^*$, which allows us to recover the parameters of $\rho$ from the corresponding parameters of $\rho^*$.

To construct $\rho^*$, let $O(x,S)=\rho(x,S)/\rho(a^*,S)$ be the odds ratio of the probability of choosing $x$ over the default option $a^*$ when the menu is $S$. Mathematically,
\[
O(x,S)=\sum_{T:T\subseteq S, x\in T} \frac{\pi(T)}{\pi(\emptyset)} \Bigg[\alpha \mathbbm{1}\text{($x$ is $\succ$-best in $T$)} +(1-\alpha) \frac{w(x)}{w(T)}\Bigg].
\]
Define $\rho^*$ as
\begin{equation}\label{eq:associated random choice function}
\rho^*(x,S)=\Bigg[\underbrace{\underset{T\subseteq S}{\sum}\frac{(-1)^{|S\setminus T|}}{\rho(a^*,T)}}_\text{$\pi(S)/\pi(\emptyset)$}\Bigg]^{-1}\underset{T:T\subseteq S, x\in T}{\sum}(-1)^{|S\setminus T|}O(x,T) \quad \text{ for all $x\in S$ and $S\in \mathcal{D}$.}
\end{equation}
The \textit{associated random choice function} $\rho^*$ is constructed from $\rho$ via M\"obius inversion \citep{shafer1976mathematical}. This inversion formula provides a bijective correspondence that allows us to recover $\rho$ from $\rho^*$ and conversely. In the statement of M\"obius inversion, $\rho^*$ and $\rho$ are often referred to as the  point function and cumulative function, respectively.

Proposition \ref{prop:associated choice function} shows that the \textit{associated random choice function} $\rho^*$ constructed above admits a DST representation. Furthermore, it shares the same primitives as $\rho$.

\begin{prop}\label{prop:associated choice function}
If $\rho$ has a DST-PA representation $(\alpha,\succ,w,\pi)$, then $\rho^*$ has a DST representation $(\alpha^*,\succ^*,w^*)$. Furthermore, $(\alpha,\succ,w)=(\alpha^*,\succ^*,w^*)$ and $\pi$ is unique.
\end{prop}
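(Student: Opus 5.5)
The plan is to show that the Möbius construction in \eqref{eq:associated random choice function} exactly undoes the averaging over random availability, so that $\rho^*$ coincides with the DST choice rule evaluated on the full menu. Fix a DST-PA representation $(\alpha,\succ,w,\pi)$ and write $\nu(T)=\pi(T)/\pi(\emptyset)$ for $T\in\mathcal{D}\cup\{\emptyset\}$. Since $\rho(a^*,S)=\pi(\emptyset)/\sum_{A\subseteq S}\pi(A)$, we have $1/\rho(a^*,S)=\sum_{A\subseteq S}\nu(A)$ for every $S$. Möbius inversion on the subset lattice (Lemma \ref{lemma:Mobius}) then gives $\sum_{T\subseteq S}(-1)^{|S\setminus T|}/\rho(a^*,T)=\nu(S)$, so the bracketed normalizer in \eqref{eq:associated random choice function} equals $\nu(S)$. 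This normalizer is strictly positive because $\pi$ has full support.

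Next I handle the numerator. For each $x\in X$, define the set function $g_x(T)=\nu(T)\,\phi(x,T)$, where $\phi(x,T)=\alpha\mathbbm{1}(x\text{ is }\succ\text{-best in }T)+(1-\alpha)w(x)/w(T)$ if $x\in T$, and $g_x(T)=0$ otherwise. The displayed formula for $O$ says exactly that $O(x,S)=\sum_{T\subseteq S}g_x(T)$. Applying Möbius inversion once more gives $\sum_{T\subseteq S}(-1)^{|S\setminus T|}O(x,T)=g_x(S)$. The terms with $x\notin T$ contribute $O(x,T)=0$, so restricting the sum to $T\ni x$, as in \eqref{eq:associated random choice function}, is harmless. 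Dividing by $\nu(S)$ yields $\rho^*(x,S)=\phi(x,S)=\rho_{\alpha,\succ,w}(x,S)$ for all $x\in S\in\mathcal{D}$. Hence $\rho^*$ is itself an RCF with DST representation $(\alpha,\succ,w)$.

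For the remaining claims, note that $\rho^*$ is built from $\rho$ alone. Proposition \ref{prop:identification} therefore says that $\rho^*$ has a unique DST representation $(\alpha^*,\succ^*,w^*)$, and by the previous step it must equal $(\alpha,\succ,w)$. Consequently, any two DST-PA representations of $\rho$ share the same $(\alpha,\succ,w)$. Uniqueness of $\pi$ follows from the identification formulas already derived in the text. The ratios $\pi(S)/\pi(\emptyset)$ are pinned down by $\rho(a^*,\cdot)$ through Möbius inversion, and normalizing $\pi$ to a probability distribution on $\mathcal{D}\cup\{\emptyset\}$ then fixes $\pi(\emptyset)$ and hence every $\pi(S)$.

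The main obstacle is making the Möbius inversion steps rigorous when data are limited ($\mathcal{D}\neq\mathcal{X}$). The inversion has to run over the subset lattice of each $S\in\mathcal{D}$, and every $T\subseteq S$ must lie in $\mathcal{D}\cup\{\emptyset\}$ so that $\rho(a^*,T)$ and $O(x,T)$ are defined. The first richness condition (closure under nonempty subsets) guarantees exactly this, and the sums involved are finite, so Lemma \ref{lemma:Mobius} applies menu by menu. The other point that needs care is the bookkeeping for $x\notin T$ in the definition of $g_x$, which makes the identity $O(x,S)=\sum_{T\subseteq S}g_x(T)$ hold for all $S$, including those not containing $x$.
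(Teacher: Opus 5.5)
Your proof is correct, and it is more direct than the paper's. The paper splits into two cases.
\begin{itemize}
\item For the $\succ$-best $x\in S$, it inverts on $2^{S\setminus x}$ with $f(A)=O(x,A\cup x)$; there the indicator is identically one.
\item For a non-best $y$, it fixes some $x\succ y$ in $S$ and forms the difference $O(y,A\cup\{x,y\})-O(y,A\cup y)$, in which the $\alpha$-terms cancel because $y$ is never best in a set containing $x$. It inverts that difference on $2^{S\setminus\{x,y\}}$. It then needs Lemma \ref{lemma:identity} to turn the resulting alternating sum over $T\supseteq\{x,y\}$ back into the sum over $T\ni y$ that appears in \eqref{eq:associated random choice function}.
\end{itemize}

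Your key observation is that M\"obius inversion does not care about the form of the point function. The indicator can simply be absorbed into $g_x(T)=\nu(T)\rho_{\alpha,\succ,w}(x,T)$. A single inversion on $2^S$ then handles best and non-best alternatives together, with no case split and no auxiliary identity.

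The same one-step argument also runs backwards. If $\rho^*=\rho_{\alpha,\succ,w}$ on every menu, inversion gives
\[
O(x,S)=\sum_{T\subseteq S,\,x\in T}\nu(T)\rho_{\alpha,\succ,w}(x,T)
\]
directly. So your argument would also replace the induction in the paper's proof of Theorem \ref{thm:characterization DST-PA}, where the paper remarks that direct inversion is not applicable for non-best $y$.

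What the paper's route buys is mainly an interpretable identity: removing a better alternative $x$ changes $O(y,\cdot)$ only through the System-1 term. That identity is not needed for the result.

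You also make explicit two points that the paper's written proof leaves to the main text. First, the normalizer in \eqref{eq:associated random choice function} equals $\pi(S)/\pi(\emptyset)>0$. Second, uniqueness of $(\alpha,\succ,w)$ follows from Proposition \ref{prop:identification} applied to $\rho^*$, which depends on $\rho$ alone, and uniqueness of $\pi$ follows from the M\"obius formulas for $\pi$.
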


Proposition \ref{prop:associated choice function} allows the researcher to uniquely identify the primitives of a DST-PA. To appreciate this uniqueness result, note that there are four moving pieces of a DST-PA: (1) the relative importance of the deliberate-thinking system $\alpha$, (2) the preference $\succ$, (3) the Luce weight function $w$, and (4) the product variation rule $\pi$. Nonetheless, Proposition \ref{prop:associated choice function} states that the model is uniquely identified. The proof of Proposition \ref{prop:associated choice function} uses M\"obius inversion.

\medskip
\noindent  \textbf{Characterization.} We show that a DST-PA is characterized by conditions that characterize DST together with one additional property. This new condition is a Block-Marschak inequality applied to the default option; it is formally stated in Axiom \ref{axiom:BM}. Note that this axiom is identical to the Increasing Feasible Odds Axiom in \cite{Brady-Rehbeck_2016_ECMA}.

\begin{axiom}[Block-Marschak inequality on the default option] \label{axiom:BM}
\[
\underset{A\subseteq S}{\sum}\frac{(-1)^{|S\setminus A|}}{\rho(a^*,A)}>0 \quad \text{ for all $S\in \mathcal{D}$.}
\]
\end{axiom}
We are now ready to state a characterization for the DST-PA model.

\begin{theorem}\label{thm:characterization DST-PA} $\rho$ has a DST-PA representation if and only if it satisfies Axiom \ref{axiom:BM} and its associated random choice function, $\rho^*$, has a DST representation.
\end{theorem}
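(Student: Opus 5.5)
Necessity follows from Proposition \ref{prop:associated choice function} together with a direct computation, and sufficiency comes from reconstructing $\pi$ and the primitives from $\rho$ and then reversing the M\"obius inversion.

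\textbf{Necessity.} Suppose $\rho$ has a DST-PA representation $(\alpha,\succ,w,\pi)$. Since $\rho(a^*,A)=\pi(\emptyset)/\sum_{B\subseteq A}\pi(B)$, we have
\[
\frac{1}{\rho(a^*,A)}=\sum_{B\subseteq A}\frac{\pi(B)}{\pi(\emptyset)}
\]
for every $A\in\mathcal{D}\cup\{\emptyset\}$. This is the cumulative function of $\pi/\pi(\emptyset)$, so M\"obius inversion (Lemma \ref{lemma:Mobius}) gives
\[
\sum_{A\subseteq S}(-1)^{|S\setminus A|}/\rho(a^*,A)=\pi(S)/\pi(\emptyset).
\]
This is strictly positive because $\pi$ has full support, so Axiom \ref{axiom:BM} holds. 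Proposition \ref{prop:associated choice function} then says directly that $\rho^*$ has a DST representation, which completes this direction.

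\textbf{Sufficiency.} Assume Axiom \ref{axiom:BM} holds and $\rho^*$ has a DST representation $(\alpha,\succ,w)$.

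First I build $\pi$. For each $S\in\mathcal{D}\cup\{\emptyset\}$, set $\tilde\pi(S)=\sum_{A\subseteq S}(-1)^{|S\setminus A|}/\rho(a^*,A)$. Then $\tilde\pi(\emptyset)=1/\rho(a^*,\emptyset)=1$, since only the default is available at the empty menu. Axiom \ref{axiom:BM} gives $\tilde\pi(S)>0$ for all $S$. Normalizing $\pi=\tilde\pi/\sum_T\tilde\pi(T)$ yields a full-support distribution with $\pi(S)/\pi(\emptyset)=\tilde\pi(S)$. By M\"obius inversion,
\[
\sum_{B\subseteq S}\frac{\pi(B)}{\pi(\emptyset)}=\frac{1}{\rho(a^*,S)},
\]
so $\rho(a^*,S)$ takes exactly the form required by Definition \ref{defn:DST-PA}.

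Next I treat the non-default options. Fix $x$ and set $g(T)=O(x,T)$ for $T\ni x$. The definition \eqref{eq:associated random choice function} reads
\[
\tilde\pi(S)\rho^*(x,S)=\sum_{T\subseteq S,\,x\in T}(-1)^{|S\setminus T|}g(T).
\]
I would apply M\"obius inversion on the lattice of sets containing $x$, reindexing by $T\setminus x$, to get
\[
O(x,S)=\sum_{T\subseteq S,\,x\in T}\tilde\pi(T)\rho^*(x,T).
\]
Substituting the DST form of $\rho^*(x,T)$ and multiplying by $\rho(a^*,S)=\pi(\emptyset)/\sum_{A\subseteq S}\pi(A)$ then recovers exactly the DST-PA formula for $\rho(x,S)$.

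The main obstacle is this inversion step. It requires the identity defining $\rho^*$ to be exactly the point-function/cumulative-function relation on the interval $\{T: x\in T\subseteq S\}$. It also requires the data domain to be closed under subsets, which holds by the richness condition on $\mathcal{D}$, so that every term in the sum is defined.

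It is also worth noting that $\rho^*$ need not a priori be a random choice function; I take "has a DST representation" to include that its values are probabilities summing to one. No extra check is needed, because the DST form of $\rho^*$ already supplies nonnegative weights, and the final formula for $\rho$ automatically sums to one together with $\rho(a^*,S)$.
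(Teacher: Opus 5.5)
Your proof is correct, and in the sufficiency direction it is shorter than the paper's. Both of you build $\pi$ the same way, by M\"obius-inverting $1/\rho(a^*,\cdot)$ and using Axiom \ref{axiom:BM} to get full support. The two proofs then diverge.

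The paper splits into cases. For the $\succ$-best $x$ in $S$ it inverts on subsets of $S\setminus x$. For a non-best $y$ it says direct inversion is not applicable, since $y$ is best in some $T\subseteq S$ such as $T=\{y\}$. Instead it:
\begin{itemize}
\item uses Lemma \ref{lemma:identity} to pass to an alternating sum of differences $O(y,T)-O(y,T\setminus x)$ over $\{x,y\}\subseteq T\subseteq S$;
\item inverts that to obtain the recursion $O(y,S)-O(y,S\setminus x)=\frac{1-\alpha}{\pi(\emptyset)}\sum_{T:\,\{x,y\}\subseteq T\subseteq S}\pi(T)\frac{w(y)}{w(T)}$;
\item finishes by induction on $|S|$, with a separate binary base case built from $O(x,\{x,y\})+O(y,\{x,y\})=1/\rho(a^*,\{x,y\})-1$.
\end{itemize}

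Your observation makes all of this unnecessary. For every $x\in S$, the defining identity of $\rho^*$ is already a point/cumulative pair on the interval $\{T:\,x\in T\subseteq S\}$, with point function $\tilde\pi(T)\rho^*(x,T)$. The indicator $\mathbbm{1}(x\text{ is }\succ\text{-best in }T)$ just sits inside $\rho^*(x,T)$. So one inversion gives $O(x,S)=\sum_{T:\,x\in T\subseteq S}\tilde\pi(T)\rho^*(x,T)$ whether or not $x$ is best in $S$. The paper's claimed obstacle arises only because it tries to invert toward a closed form without the indicator.

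Your route gives a uniform, induction-free proof. The same remark would also collapse the two-step proof of Proposition \ref{prop:associated choice function}. The paper's route produces the intermediate recursion, which the theorem does not need.

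Your necessity direction matches the paper's appeal to Proposition \ref{prop:associated choice function}, and you make the check of Axiom \ref{axiom:BM} explicit. Your closing worry about $\rho^*$ is moot. Whenever Axiom \ref{axiom:BM} holds, $\sum_{x\in S}\rho^*(x,S)=1$ automatically, because $\sum_{x\in T}O(x,T)=1/\rho(a^*,T)-1$ and $\sum_{T\subseteq S}(-1)^{|S\setminus T|}=0$ for nonempty $S$.
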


An interesting special case of this model is the parametric model of \cite{Manzini-Mariotti_2014_ECMA}. In this model, each option $x$ is feasible with probability $\phi(x) \in (0,1)$, and the availabilities of distinct options are independent. A set $T$ is feasible when every item in $T$ is feasible and every option outside $T$ is not available. The probability of $T$ being feasible under choice set $S$ is thus $\prod_{x\in T}\phi(x) \prod_{y\in S\setminus T}(1-\phi(y))$. When no options are available, the decision maker selects the default option with certainty.

Under the independent product variation, $\rho$ satisfies Axiom \ref{axiom:mido}, which \cite{Brady-Rehbeck_2016_ECMA} call menu-independent default option. This axiom states that the relative probability of selecting the default option remains invariant across different choice sets.

\begin{axiom}[Menu independent default option] \label{axiom:mido} For all $(x,S,S')$ such that $x\in S\cap S'$ and $S,S'\in \mathcal{D}$
\[
\frac{\rho(a^*,S)}{\rho(a^*,S\setminus x)}=\frac{\rho(a^*,S')}{\rho(a^*,S'\setminus x)}.
\]
\end{axiom}

Proposition \ref{prop:DST-PA-MM} states a characterization for our model with independent product variation.

\begin{prop}\label{prop:DST-PA-MM} $\rho$ has a DST-PA representation with independent product variation if and only if it satisfies Axioms \ref{axiom:BM}-\ref{axiom:mido} and its associated random choice function, $\rho^*$, has a DST representation.
\end{prop}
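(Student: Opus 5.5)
The plan is to derive Proposition \ref{prop:DST-PA-MM} from Theorem \ref{thm:characterization DST-PA}, adding one argument that ties Axiom \ref{axiom:mido} to independence of $\pi$. By Theorem \ref{thm:characterization DST-PA}, Axiom \ref{axiom:BM} together with $\rho^*$ having a DST representation is equivalent to $\rho$ admitting some DST-PA representation $(\alpha,\succ,w,\pi)$. By Proposition \ref{prop:associated choice function}, $\pi$ is unique. What remains is to show that, within the DST-PA class, Axiom \ref{axiom:mido} holds exactly when $\pi$ has the product form $\pi(T)=\prod_{x\in T}\phi(x)\prod_{y\in X\setminus T}(1-\phi(y))$ for some $\phi\colon X\to(0,1)$. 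Both directions depend only on the map $S\mapsto\rho(a^*,S)=\pi(\emptyset)/\sum_{A\subseteq S}\pi(A)$. The primitives $(\alpha,\succ,w)$ play no role.

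\textbf{Necessity.} Suppose availability is independent. The conditional probability that $T\subseteq S$ is the feasible set is $\prod_{x\in T}\phi(x)\prod_{y\in S\setminus T}(1-\phi(y))$. In particular the default probability is $\rho(a^*,S)=\prod_{y\in S}(1-\phi(y))$. Then $\rho(a^*,S)/\rho(a^*,S\setminus x)=1-\phi(x)$, which does not depend on $S$, so Axiom \ref{axiom:mido} holds. The representation can be written in the form of Definition \ref{defn:DST-PA} by taking $\pi(T)\propto\prod_{x\in T}\frac{\phi(x)}{1-\phi(x)}$ on $\mathcal D\cup\{\emptyset\}$. This $\pi$ has full support and reproduces the conditional distributions above, as one can check using $\sum_{A\subseteq S}\prod_{x\in A}\frac{\phi(x)}{1-\phi(x)}=\prod_{x\in S}\frac{1}{1-\phi(x)}$. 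This identity also yields Axiom \ref{axiom:BM}. Proposition \ref{prop:associated choice function} then gives the DST representation of $\rho^*$.

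\textbf{Sufficiency.} By Theorem \ref{thm:characterization DST-PA} there is a DST-PA representation with a unique $\pi$. Axiom \ref{axiom:mido} lets us define $1-\phi(x):=\rho(a^*,S)/\rho(a^*,S\setminus x)$ independently of $S$. Applying this to $S=\{x\}$ gives $1-\phi(x)=\rho(a^*,\{x\})/\rho(a^*,\emptyset)=\rho(a^*,\{x\})\in(0,1)$, using full support of $\pi$. Telescoping element by element then yields $\rho(a^*,S)=\prod_{y\in S}(1-\phi(y))$ for every $S\in\mathcal D$, where the richness of $\mathcal D$ keeps all intermediate subsets in the domain. The identification formula $\pi(S)/\pi(\emptyset)=\sum_{A\subseteq S}(-1)^{|S\setminus A|}/\rho(a^*,A)$ then becomes $\sum_{A\subseteq S}(-1)^{|S\setminus A|}\prod_{y\in A}(1-\phi(y))^{-1}$. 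The binomial product expansion factors this as $\prod_{y\in S}\bigl(\frac{1}{1-\phi(y)}-1\bigr)=\prod_{y\in S}\frac{\phi(y)}{1-\phi(y)}$. This is exactly the product form, so the conditional availability probabilities coincide with those of \cite{Manzini-Mariotti_2014_ECMA}.

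The main obstacle is bookkeeping. When $\mathcal D\neq\mathcal X$, $\pi$ lives only on $\mathcal D\cup\{\emptyset\}$, so independence has to be understood through the conditional probabilities given each menu $S$, which only involve subsets of $S$ and hence stay in $\mathcal D$. The Möbius factorization step and the telescoping must likewise use only subsets of a given menu, which the richness condition guarantees.
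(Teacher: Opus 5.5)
Your proof is correct. The sufficiency step takes a different route from the paper's.

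\textbf{Common starting point.} Both arguments begin from Theorem \ref{thm:characterization DST-PA} and from Axiom \ref{axiom:mido} with $S'=\{x\}$, which gives $\rho(a^*,S)=\rho(a^*,S\setminus x)\rho(a^*,\{x\})$.

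\textbf{The paper's route.} The paper turns this into the functional equation $\pi(\emptyset)\sum_{A\subseteq S}\pi(A)=[\pi(\emptyset)+\pi(\{x\})]\sum_{B\subseteq S\setminus x}\pi(B)$. It then proves $\pi(S)\pi(\emptyset)^{|S|-1}=\prod_{x\in S}\pi(\{x\})$ by induction on $|S|$, using the induction hypothesis to cancel the proper-subset terms. Only after that does it define $\phi(x)=\pi(\{x\})/(\pi(\{x\})+\pi(\emptyset))$.

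\textbf{Your route.} You telescope on the observable side, which gives $\rho(a^*,S)=\prod_{y\in S}(1-\phi(y))$ immediately. You then substitute this into the M\"obius formula $\pi(S)/\pi(\emptyset)=\sum_{A\subseteq S}(-1)^{|S\setminus A|}/\rho(a^*,A)$. The expansion $\prod_{y\in S}(a_y-1)=\sum_{A\subseteq S}(-1)^{|S\setminus A|}\prod_{y\in A}a_y$ then yields the closed form $\prod_{y\in S}\phi(y)/(1-\phi(y))$. Your $\phi$ coincides with the paper's.

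\textbf{Comparison.} Your version is shorter and makes the logic transparent: Axiom \ref{axiom:mido} says exactly that the default probability is multiplicative across items, and the M\"obius transform of a product factorizes. It also reuses the identification formula the paper already derived, instead of running a separate induction. The paper's version stays entirely at the level of $\pi$ and never factorizes an alternating sum, but it needs the longer induction. You also supply the necessity direction, which the paper omits as straightforward.

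\textbf{Minor framing point.} In your opening paragraph you state the unconditional form $\pi(T)=\prod_{x\in T}\phi(x)\prod_{y\in X\setminus T}(1-\phi(y))$. When $\mathcal{D}\neq\mathcal{X}$ this does not sum to one on $\mathcal{D}\cup\{\emptyset\}$, so it holds only up to normalization. What you actually prove, and what is needed, is $\pi(T)/\pi(\emptyset)=\prod_{x\in T}\phi(x)/(1-\phi(x))$, i.e.\ the conditional product form. You do note this yourself at the end.
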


%%%%%%%%%%%%%%%%%%%%%%%%%%%%%%%%%%%%%%%%%%%%%%%%%%%%%%%%%%%%%%%%%%%%%%%%%
%%%%%%%%%%%%%%%%%%%%%%%%%%% A NEW SECTION %%%%%%%%%%%%%%%%%%%%%%%%%%%%%
%%%%%%%%%%%%%%%%%%%%%%%%%%%%%%%%%%%%%%%%%%%%%%%%%%%%%%%%%%%%%%%%%%%%%%%%%
\subsection{Menu-Dependent Cognitive Weights}\label{sec:menu-dependent alpha}

In some contexts, it is natural to allow the relative importance of the two cognitive systems to depend on the menu itself. When alternatives are very similar or the menu is large, carefully evaluating every option can be costly, making the decision maker more inclined toward intuitive thinking. Conversely, when the cost of examining all available options is low, the decision maker is more likely to rely on deliberate thinking.\footnote{The same reasoning applies when the decision maker relies on memory when making choices: as the menu size increases, the probability that the decision maker can recall all options diminishes, inducing a greater tendency to randomize.}

In this subsection, we study a generalization of our model in which the cognitive weight varies across choice sets. Formally, $\rho$ has a DST representation with menu-dependent cognitive weights (d-DST) if there exists $\alpha_S \in (0,1)$ for each choice set $S$ such that\[
\rho(x,S)=\alpha_S\cdot \mathbbm{1}(\text{$x$ is $\succ$-best in $S$})+(1-\alpha_S)\cdot \frac{w(x)}{w(S)} \quad \text{for all $x\in S$}, \text{ for all $S\in \mathcal{D}$}.
\]
We call $(\{\alpha_S\}_S, \succ, w)$ a d-DST representation of $\rho$. As shown in Section \ref{sec:micro_foundation}, d-DST may arise from the solution to a constrained optimization problem. Note that in a d-DST we impose no restriction on how $\alpha_S$ may vary with the menu $S$, thereby preserving the full generality of the model.

Allowing for menu-dependent cognitive weights substantially enhances the explanatory power of our model. To elaborate, when there are three options in the grand set, Remark \ref{rem:d-DST} shows that d-DST is capable of explaining any positive random choice function except those admitting a Luce representation.

\begin{remark}\label{rem:d-DST} Suppose $X$ has three alternatives and $\rho$ is positive. Then $\rho$ admits a d-DST representation if and only if it violates IIA. 
\end{remark}

The Luce model lies outside the scope of d-DST, as shown in Remark \ref{rem:d-DST}, because a d-DST necessarily violates IIA. To see this, whenever $\rho$ has a d-DST representation with preference $x\succ y \succ z$, it follows that $\frac{\rho(y,\xyz)}{\rho(z,\xyz)} < \frac{\rho(y,\yz)}{\rho(z,\yz)}$ regardless of the cognitive weights $\{\alpha_S\}_S$. Consequently, $\rho$ cannot have a Luce representation.

While permitting the cognitive weights to vary across menus improves the model's explanatory power, it incurs a trade-off in terms of tractability and identification. Generally, a dataset may have multiple d-DST representations, complicating both identification and interpretation of parameters. To understand the challenges in recovering the model's primitives, consider a simple example with $X=\xyz$. Suppose choices from binary and tripleton menus satisfy the following conditions:
\begin{itemize}
    \item Choices from the tripleton menu are uniform: $\rho(x,\xyz)=\rho(y,\xyz)=\rho(z,\xyz)$,
    \item Choices from pairs are given by $\rho(x,\xy)=\rho(y,\yz)=\rho(z,\xz)=3/4$. 
\end{itemize}
The data $\rho$ given above violates IIA; hence, it admits a d-DST representation following Remark \ref{rem:d-DST}. In fact, this $\rho$ has at least three d-DST representations with three different preference orderings: $x\succ_1 y \succ_1 z$, $y\succ_2 z \succ_2 x$, and $z\succ_3 x \succ_3 y$.\footnote{With preference $\succ_1$, $\rho$ has one d-DST representation with the following parameters: $w(x)=1/9$, $w(y)=w(z)=4/9$, $\alpha_{\xy}=11/16$, $\alpha_{\xz}=1/16$, $\alpha_{\yz}=1/2$, and $\alpha_{\xyz}=1/4$. For preferences $\succ_2$ and $\succ_3$, one can set similar parameters to obtain a d-DST representation for $\rho$.} These preference relations form a cycle and yield conflicting orderings for every pair of alternatives. Consequently, without additional insight into the decision maker's underlying preference, the analyst cannot draw any error-free inference about the DM's preference ordering.

To facilitate identification and characterization in d-DST, we assume that the two systems are consistent. As \cite{kahneman2003maps} argues, in certain environments System 1 can be powerful and accurate, and therefore consistent with System 2, despite operating automatically and intuitively. This consistency assumption has important implications for identification and characterization in d-DST, as it allows us to uniquely recover the preference of the DM.

To elaborate, suppose $\rho$ has a consistent d-DST representation $(\{\alpha_S\}_S, \succ, w)$. Then for every choice set $S$ and for all $x,y\in S$, it is straightforward to verify that $\rho(x,S)> \rho(y,S)$ if and only if $x\succ y$. Consequently, the DM's preference can be uniquely recovered by comparing choice probabilities from the same menu. Appendix \ref{sec:Appendix-Identification} utilizes this uniqueness result to provide a characterization of d-DST. It shows that d-DST is characterized by four axioms: one axiom concerning properties of the revealed preference (completeness, transitivity, antisymmetry) and the other three related to the standard IIA property. Appendix \ref{sec:Appendix-Identification} also provides additional results on d-DST when the two systems are not consistent.

%%%%%%%%%%%%%%%%%%%%%%%%%%%%%%%%%%%%%%%%%%%%%%%%%%%%%%%%%%%%%%%%%%%%%%%
%%%%%%%%%%%%%%%%%%%%%%%%%%%%%%%%% New Section %%%%%%%%%%%%%%%%%%%%%%%%%
%%%%%%%%%%%%%%%%%%%%%%%%%%%%%%%%%%%%%%%%%%%%%%%%%%%%%%%%%%%%%%%%%%%%%%%

\section{Literature Review}\label{sec:compare}

In this section we relate DST to three prominent classes of models: (1) models that study dual-system theories within an economic framework, (2) models that generalize the Luce rule, and (3) models that are related to random utility. Comparing our model with existing models highlights the distinct mechanisms through which we depart from other frameworks and clarifies the broader landscape of stochastic choice models.

Regarding the first set of studies, \cite{ilut2023economic} and \cite{cerigioni2021dual} assume that decision makers follow dual-process thinking and investigate their choices in dynamic environments. \cite{ilut2023economic} formalize the default-interventionist theory of System 1-System 2 interaction within a structural macroeconomic framework, which differs from our decision-theoretic choice environment. Meanwhile, \cite{cerigioni2021dual} studies deterministic choice and assumes that System 2 is activated only when the decision problem at hand is sufficiently different from past problems. This paper focuses on stochastic choice; our environment is not necessarily dynamic, and System 2 intervenes to override System 1's decision in the optimization problem.

Regarding Luce-based models, there are several generalizations of the Luce rule in the literature. Recent studies include the Luce model with limited consideration (LRLC) of \cite{Ahumada_Ulku_2018}, the perception-adjusted Luce model (PALM) of \cite{Echenique_Saito_Tserenjigmid_2018}, the general Luce model (GL) of \cite{Echenique2019GeneralModel}, the threshold Luce model (TL) of \cite{horan2021stochastic}, the order-dependent Luce model (ODLM) of \cite{tserenjigmid2021order}, and the focal Luce model (FLM) of \cite{kovach2022focal}.

The LRLC, GL, and TL models extend the Luce rule to accommodate zero choice probabilities. In these frameworks, an option is chosen with zero probability if it lies outside the decision maker's consideration set (LRLC model), or is dominated by other options, where the dominance relationship is modeled by a strict partial order (GL model) or a semi-order (TL model). The decision maker follows the standard Luce rule otherwise. Our DST model features a positive stochastic choice function in which choice probabilities of the best options violate the IIA property (see Proposition \ref{prop:identification_1}). Consequently, DST and the LRLC/GL/TL models are disjoint.

The ODLM, PALM, and our model incorporate an underlying order into the standard Luce framework. Still, the ODLM/PALM and our model are both conceptually and behaviorally distinct. Conceptually, in ODLM or PALM the underlying order represents the physical or perceptual sequence through which the decision maker observes options. This ordering yields menu-dependent utility (as in ODLM) or position-dependent priority (as in PALM). By contrast, the DST model is not sequential: the underlying order represents option utilities, which are assumed to be menu-independent. Behaviorally, PALM introduces an outside option to the environment, which allows the model to explain various behavioral phenomena \cite[p.70]{Echenique_Saito_Tserenjigmid_2018}. The outside option is absent from our framework, making PALM and DST not directly comparable. Regarding ODLM, the model satisfies an IIA property that $\rho(x,\xyz)/\rho(y,\xyz)=\rho(x,\xy)/\rho(y,\xy)$ whenever the underlying order places $x$ before $y$ and $y$ before $z$. Our model, however, necessarily violates this property under any ordering among $x,y,z$ (see Proposition \ref{prop:identification_1}). Consequently, our model and ODLM are not merely distinct but entirely disjoint.

In the FLM, options that are focal receive additional boosts to their utilities. Similar to the ODLM, the binary FLM, which is one of the specifications studied in the paper, satisfies an IIA property that $\rho(x,\xyz)/\rho(y,\xyz)=\rho(x,\xy)/\rho(y,\xy)$ whenever $D_{zx,y}>0$ and $D_{zy,x}>0$.\footnote{In binary FLM, in any binary menu, the two options are assumed to be equally focal.} Again, our Proposition \ref{prop:identification_1} suggests that this property cannot occur in the DST model under any ordering among $x,y,z$. Hence, the DST model and the binary FLM are also disjoint.

Finally, regarding random utility models (RUMs), our model is a special case because it is a convex combination of preference maximization and the Luce model. Note that RUMs are not uniquely identified in general; the DST model belongs to a class of RUMs with unique identification. Three special cases of RUMs that are relevant to our framework are the dual-RUM of \cite{manzini2018dual}, the weighted linear discrete choice (WL) of \cite{chambers2025weighted}, and the Luce alignment model (LAM) of \cite{suleymanov2026revealed}. In a dual-RUM, the decision maker is endowed with two preferences and randomizes between them when making a choice. Meanwhile, in the LAM, the decision maker randomizes between two Luce rules. The dual-RUM can be interpreted as the limiting case of our DST model; this is because the Luce choice rule under automatic thinking can approximate any preference maximization under an appropriate parameterization \citep{mcfadden2000mixed}.

In the WL model, choice probability is a sum of two components: a base probability following a Luce choice rule and a comparative probability transfer measuring differences in the utilities of options. The WL and our model are distinct. The WL model satisfies moderate stochastic transitivity. Our model, as shown in Section \ref{sec:anomaly}, can violate moderate stochastic transitivity. Hence, our model is not nested in WL.

%%%%%%%%%%%%%%%%%%%%%%%%%%%%%%%%%%%%%%%%%%%%%%%%%%%%%%%%%%%%%
%%%%%%%%%%%%%%%%%%%%%%%% New Section %%%%%%%%%%%%%%%%%%%%%%
%%%%%%%%%%%%%%%%%%%%%%%%%%%%%%%%%%%%%%%%%%%%%%%%%%%%%%%%%%%%%

\section{Conclusion}\label{sec:conclusion}
We have introduced the Dual-System Thinking model to incorporate the dual-process theories in psychology into an economic framework. Our formulation has intuitive and simple behavioral and micro foundations. We show that the model enjoys several theoretically and empirically desirable properties. It is also useful for empirical estimation in discrete choice analysis and for informing firms' decision-making. Our work can be extended in several directions. One avenue is to explore alternative structures and formulations of the two systems. For instance, the two systems can be modeled by the axiomatic properties they satisfy. In our framework, System 1's choice is characterized by the Independence of Irrelevant Alternatives, whereas System 2's choice is characterized by the Weak Axiom of Revealed Preference. Imposing other properties on the choices made by the two systems may yield new insights into human decision-making processes.

\bigskip
\bigskip
\bibliographystyle{ecta}
\bibliography{references1.bib}

\clearpage
\noindent \Large{\textbf{APPENDIX}}
\normalsize
\begin{appendices}

%%%%%%%%%%%%%%%%%%%%%%%%%%%%%%%%%%%%%%%%%%%%%%%%%%%%%%%%%%%%%
%%%%%%%%%%%%%%% Alternative Objective Function %%%%%%%%%%%%%
%%%%%%%%%%%%%%%%%%%%%%%%%%%%%%%%%%%%%%%%%%%%%%%%%%%%%%%%%%%%

\section{Identification of $\alpha$ and Luce Weights in the DST Model}\label{sec:Appendix_identification_DST}

In this appendix we provide explicit formulas of $\alpha$ and $w$ in the DST model using choices from binary and tripleton menus. Take $x,y,z$ pairwise distinct and suppose the revealed preference is $x\succ_R y \succ_R z$. The identification of the revealed preference follows the method described in Section \ref{sec:identification} in the main body of the paper. The DST representation of $\rho$ implies 
\begin{eqnarray*}
    \rho(y,\xyz)=(1-\alpha)\frac{w(y)}{w(\xyz)} \quad \text{ and } \quad \rho(z,\xyz)=(1-\alpha)\frac{w(z)}{w(\xyz)},
\end{eqnarray*}
and 
\[
\rho(z,\yz)=(1-\alpha)\frac{w(z)}{w(\yz)}.
\]
It follows that $\alpha$ can be identified as
\[
\alpha=1-\frac{\rho(z,\yz)[\rho(z,\xyz)+\rho(y,\xyz)]}{\rho(z,\xyz)}.
\]
The weight for each alternative is inferred from choices in binary menus and given by
\[
\begin{aligned}
w(z) &= \Bigg[\sum_{y: y\ne z}\frac{1-\alpha}{\rho(z,\yz)}-|X|+2\Bigg]^{-1} && \text{if $z$ is the least preferred option in $X$,}\\
w(y) &= w(z)\Bigg[\frac{1-\alpha}{\rho(z,\yz)}-1\Bigg] && \text{if $z$ is the least preferred option in $X$ and $y\succ z$.}
\end{aligned}
\]
The formula for $w(z)$ uses the assumption that $\sum_{x\in X} w(x)=1$.

%%%%%%%%%%%%%%%%%%%%%%%%%%%%%%%%%%%%%%%%%%%%%%%%%%%%%%%%%%%%%%
%%%%%%%%%%%%%%%%%%%%%%% This is the new section %%%%%%%%%%%%%%
%%%%%%%%%%%%%%%%%%%%%%%%%%%%%%%%%%%%%%%%%%%%%%%%%%%%%%%%%%%%%%

\newpage
\section{Estimating the DST Model in Various Datasets}\label{sec:appendix_datasets}
We evaluate the DST model on a diverse set of choice data described in Table \ref{tab:datasets}. These datasets have different characteristics. As in Section \ref{sec:empirical_estimation} in the main body of the paper, for each dataset, we estimate (multinomial) logit, nested logit, independent probit, covariance probit, and DST models using maximum likelihood. We compare model fit using the adjusted $R^2$ and the McFadden pseudo $R^2$. Across the heterogeneous collection of datasets in Table \ref{tab:datasets}, the DST model is able to outperform other leading models in discrete choice estimation, indicating its usefulness for empirical analysis.

\begin{table}[ht!]
    \centering
    \footnotesize
    \setlength{\tabcolsep}{3pt} % Reduce space between columns (default is 6pt)
    \begin{tabular}{@{}p{3cm}|c|ccccc@{}}
    \hline
    \hline
    Data & \multicolumn{1}{c} {Data} & \multicolumn{5}{|c}{Goodness-of-fit statistics (adjusted $R^2$, McFadden pseudo $R^2$, \# of free parameters)} \\
    \cline{2-7}
    & ($|X|$, $|\mathcal{D}|$, $|S|$, $N$) & Logit &  Ind. Probit & Nested Logit & Cov. Probit & DST \\
         \hline
      Intertemporal choice   & &&&& \\
      $\quad$ RR (2000)   & (4, 6, 2, 88) & (0.45, 0.07, 3) & (0.44, 0.07, 3) & (0.87, 0.12, 5) & (0.89, 0.14, 8) & (0.95, 0.14, 4) \\
     $\quad$ MM (2006) & (4, 11, 2-4, 102) & (0.94, 0.34, 3) &  (0.93, 0.34, 3) & (0.98, 0.36, 5) & (0.99, 0.37, 8) & (0.99, 0.37, 4)  \\ 

    Riskless choice & &&&& \\
    $\quad$ LR (2015)  & (3, 4, 2-3, 40-118) & (0.76, 0.12, 2) & (0.75, 0.12, 2) & (0.66, 0.12, 3) & (0.62, 0.12, 3) & (0.85, 0.13, 3) \\
    $\quad$ (choices by frogs) & &&&& & \\
    %t\'ungara 
   Choice under risk  & &&&& & \\
    $\quad$ Caliari (2023) & (4, 11, 2-4, 145) & (0.72, 0.02, 3) & (0.72, 0.02, 3) & (0.73, 0.03, 5) & (0.71, 0.03, 8) & (0.75, 0.03, 4) \\
    $\quad$ AB (2021) & (9, 36, 2, 87) & (0.87, 0.09, 8) & (0.86, 0.09, 8) & (0.95, 0.10, 10-13) & (0.98, 0.10, 43) & (0.95, 0.10, 9) \\
    \hline
    \hline
    \end{tabular} 
    \caption{Performance of DST model on various datasets across different choice contexts}
    \label{tab:datasets}
        \vspace{0.3cm}
    \parbox{\textwidth}{\footnotesize \justifying
    \noindent \textit{Notes:} Datasets: RR (2000) - \cite{roelofsma2000intransitive}; MM (2006) - \cite{manzini2006two}; LR (2015) - \cite{lea2015irrationality}; Caliari (2023) - \cite{caliari2023behavioural}; AB (2021) - \cite{apesteguia2021separating}; . In ``Data" column ($|X|$, $|\mathcal{D}|$, $|S|$, $N$), the first, second, third, and fourth entries represent the number of options, the number of menus, the size of menus, and the number of participants, respectively. All aggregate data are positive. For the dataset in \cite{apesteguia2021separating}, the number of nests when estimating nested logit is either 2, 3, 4, or 5.}
\end{table}

%%%%%%%%%%%%%%%%%%%%%%%%%%%%%%%%%%%%%%%%%%%%%%%%%%%%%%%%%%%%%%
%%%%%%%%%%%%%%%%%%%%%%% This is the new section %%%%%%%%%%%%%%
%%%%%%%%%%%%%%%%%%%%%%%%%%%%%%%%%%%%%%%%%%%%%%%%%%%%%%%%%%%%%%

\newpage
\section{A Linear Program to Solve the Optimal List Design Problem}\label{sec:Appendix_linear_programming}
In this appendix, we develop a 0-1 linear program for algorithmically solving the optimal list design problem presented in the main body of the paper (Section \ref{sec:application}) when the platform's objective function, $f_\tau$, corresponds to the expected payoff. To formalize the programming, let the set of products be $S=[s]=\{1,2,\dots,s\}$ for some $s=|S|$. Let $A_{j,k} \equiv \mathbbm{1}_{\{l(j)=k\}}$ and $B_{j,k} \equiv \mathbbm{1}_{\{l(j)=k \text{ and } j=\argmax_{m\in S} u^l(m)\}}$ for any $j,k\in [s]$. In other words, $A_{j,k}=1$ if product $j$ is placed at position $k$ under list $l$, and $B_{j,k}=1$ if product $j$ is placed at position $k$ and is also the product with the highest perceived utility. Let
\[
C_{j,k}=\tau_j\sum_{i=1}^K\eta_i (1-\alpha_i)\frac{w_i(k)}{\sum_{k'\in [s]}w_i(k')} 
\]
be the population-weighted payoff of product $j$ under System $1$ when it is placed in the $k$th position, and $D_j=\tau_j\sum_{i=1}^K\eta_i \alpha_i$ the population-weighted payoff of product $j$ under System $2$. We assume that the platform observes the customers' choices and the distribution of customer types, so that $C_{j,k}$ and $D_j$ are observed. The platform's optimization problem can be formulated equivalently as
\[
\max_{(A_{j,k})\in \{0,1\}^{s^2} \text{ and } (B_{j,k}) \in \{0,1\}^{s^2}} \bigg \{\sum_{(j,k)\in [s]^2} C_{j,k} A_{j,k} + \sum_{(j,k)\in [s]^2} D_j B_{j,k} \bigg\}
\]
subject to the constraints
\begin{align}
  &\sum_{k\in[s]} A_{j,k} = 1 \quad\text{for all } j\in[s]\quad\text{and}\quad
   \sum_{j\in[s]} A_{j,k} = 1 \quad\text{for all } k\in[s], \tag{C1}\label{C1}\\
  &\sum_{(j,k)\in[s]^2} B_{j,k} = 1, \tag{C2}\label{C2}\\
  &B_{j,k} \le A_{j,k} \quad\text{for all } (j,k)\in[s]^2, \tag{C3}\label{C3}\\
  &B_{j,k} \ge A_{j,k}- \sum_{(m,q)\in [s]^2:\, u(m)+\beta(q)>u(j)+\beta(k)} A_{m,q} \quad\text{for all } (j,k)\in[s]^2.\tag{C4}\label{C4}
\end{align}
Constraint (\ref{C1}) ensures that the mapping between products and positions in the list is one-to-one. Constraint (\ref{C2}) follows from our assumption that the ranking among perceived utilities is strict under any list, so there exists a unique product with the highest perceived utility. Constraint (\ref{C3}) derives from the definitions of $A_{j,k}$ and $B_{j,k}$. The final constraint guarantees that the unique product $j$ with $B_{j,k}=1$ is the one with the highest perceived utility.

\newpage
%%%%%%%%%%%%%%%%%%%%%%%%%%%%%%%%%%%%%%%%%%%%%%%%%%%%%%%%%%%%%
%%%%%%%%%%%%%%%%% Rat Index %%%%%%%%%%%%%%%%%
%%%%%%%%%%%%%%%%%%%%%%%%%%%%%%%%%%%%%%%%%%%%%%%%%%%%%%%%%%%%%
\section{Alternative Measure of Stochastic Rationality}\label{sec:appendix_rat_index}
In this appendix, we discuss the relationship between the cognitive weight in our DST model and the stochastic rationality index in \cite{ok2023measuring}. In their approach, stochastic choice data are first ``approximated" by a deterministic choice correspondence. The rationality of the stochastic behavior is then evaluated via the rationality of this choice correspondence (see \cite{ribeiro2017regular} for a characterization of a rational choice correspondence). \cite{ok2023measuring} utilize the notion of Fishburn correspondences. For each random choice function $\rho$ and parameter $\lambda\in (0,1]$, define the $\lambda$-choice correspondence induced by $\rho$ as
\[
c_{\lambda,\rho}(S) = \Big\{x \in S \,|\,\rho(x,S) \ge \lambda \max_{y\in S}\rho(y,S)\Big\} \quad \text{ for all $S\in \mathcal{D}$.}
\]
Given choice set $S$, $c_{\lambda,\rho}(S)$ contains all elements whose choice probability is at least a $\lambda$-fraction of the maximum choice probability in $S$. Parameter $\lambda$ represents the degree of selectivity. A small value of $\lambda$ allows options with low choice probabilities to still qualify as valid choices from the menu (and included in $c_{\lambda,\rho}(S)$), whereas a large value of $\lambda$ treats these options as mistakes. 

Using the induced choice correspondence, \cite{ok2023measuring} define a rationality index for each RCF $\rho$ as follows
\[
I_{rationality}(\rho)= 1- \text{Leb}\big(\{\lambda \in (0, 1]: c_{\lambda,\rho} \text{ is not rational}\}\big),
\]
where Leb() is the Lebesgue measure. Roughly speaking, $I_{rationality}(\rho)$ quantifies the size of the set of $\lambda$-values for which the induced choice correspondence $c_{\lambda,\rho}$ is rational. Consequently, $\rho$ is more rational when $I_{rationality}(\rho)$ is higher. Similar to the swaps index, $I_{rationality}()$ is also a model-free measure of stochastic rationality.

In a special case of our DST model when the Luce weights are the same for all alternatives, \cite{ok2023measuring} show that $I_{rationality}$ is a U-shaped function of $\alpha$. Hence, in two DST models that differ only in the cognitive weight, $I_{rationality}$ may conclude that the one with a higher level of deliberate thinking is less rational. Example \ref{ex:rationality_index} illustrates this point further in the case in which System 1 and System 2 are inconsistent.

\begin{example}\label{ex:rationality_index} Suppose $X=\xyz.$ Let $x\succ y \succ z$. Suppose $w(x)=0.01, w(y)=0.39$, and $w(z)=0.60$, so that the ranking induced by the Luce weight is different from the preference ranking. Let $\rho$ and $\rho'$ have DST representations $(\alpha=0.55, \succ, w)$ and $(\alpha'=0.40, \succ, w)$, respectively. The rationality indexes of $\rho$ and $\rho'$ are given by $I_{rationality}(\rho)=0.52$ and $I_{rationality}(\rho')=0.57$. Hence, although the propensity to think deliberately under $\rho$ is higher than that under $\rho'$, the rationality index concludes that $\rho'$ is more rational than $\rho$. 
\end{example}

%%%%%%%%%%%%%%%%%%%%%%%%%%%%%%%%%%%%%%%%%%%%%%%%%%%%%%%%%%%%%
%%%%%%%%%%%%%%%%% Identification %%%%%%%%%%%%%%%%%
%%%%%%%%%%%%%%%%%%%%%%%%%%%%%%%%%%%%%%%%%%%%%%%%%%%%%%%%%%%%%
\bigskip
\section{Additional Results in d-DST} \label{sec:Appendix-Identification}
\subsection{Characterization of d-DST under consistency}
This section presents the characterization of d-DST under consistency. Based on the identification of the preference stated in the main body of the paper (Section \ref{sec:menu-dependent alpha}), we introduce a preference relation $\succ_{R,d^c}$ on $X$. For all $x,y \in X$, let
\begin{equation}\label{eq:revealed_preference_consistency}
  x\succ_{R,d^c} y \quad \text{ if } \quad \rho(x,S)> \rho(y,S) \quad \text{ for all $S\in \mathcal{D}$ such that $S\supseteq \xy$.}  
\end{equation}
The binary relation says that $x\succ_{R,d^c} y$ if, for all choice sets containing both $x$ and $y$, the decision maker selects $x$ with higher probability than $y$. The relation $\succ_{R,d^c}$ is the revealed preference of the DM if the DM admits a d-DST representation. The first axiom states that this revealed preference must constitute a linear order over $X$.

\begin{axiom}\label{axiom:0-dDST-Consistency} The revealed preference relation $\succ_{R,d^c}$ is complete, transitive, and asymmetric. 
\end{axiom}

The second and third axioms state that the best-ranked option in a choice set enjoys a higher relative choice probability. They capture the gain of an option when it is the best in a menu. Axiom \ref{axiom:1-dDST} considers the case in which the best option in the two menus are different, whereas Axiom \ref{axiom:1.5-dDST} considers the case in which they are the same.

\begin{axiom}\label{axiom:1-dDST} Suppose the revealed preference is $\succ_{R,d^c}$. For all $(x,y,S,S')$ such that $x$ and $y$ are distinct, $x,y\in S\cap S'$, and $S,S'\in \mathcal{D}$, and $x=b_{\succ_{R,d^c}}(S)\ne b_{\succ_{R,d^c}}(S')$,
\[
\frac{\rho(x,S)}{\rho(y,S)}> \frac{\rho(x,S')}{\rho(y,S')}.
\]
\end{axiom}

\begin{axiom}\label{axiom:1.5-dDST} Suppose the revealed preference is $\succ_{R,d^c}$. For all $(x,y,z,S,S')$ such that $x,y,z$ are pairwise distinct, $x,y\in S\cap S'$, $z\in S'$ and $S,S'\in \mathcal{D}$, and $x=b_{\succ_{R,d^c}}(S)= b_{\succ_{R,d^c}}(S')$,
\[
\frac{\rho(x,S)}{\rho(y,S)} > \frac{\rho(z,S')}{\rho(y,S')}
\]
\end{axiom}

The last axiom is a transitive version of the standard IIA axiom. It reads as follows.

\begin{axiom}[Transitive IIA]\label{axiom:2-dDST} Suppose the revealed preference is $\succ_{R,d^c}$. For all $(x,y,z,S,S',S'')$ such that $x,y,z$ are not necessarily distinct, $x,y\in S\in \mathcal{D}$, and $y,z\in S'\in \mathcal{D}$, and $x,z\in S'' \in \mathcal{D}$ and $x,y,z$ are not the best-ranked options in $S,S',S''$ according to $\succ_{R,d^c}$,
\[
\frac{\rho(x,S)}{\rho(y,S)} \frac{\rho(y,S')}{\rho(z,S')}= \frac{\rho(x,S'')}{\rho(z,S'')}.
\]
\end{axiom}
One implication of Axiom \ref{axiom:2-dDST} is the usual IIA condition that $\rho(x,S)/\rho(y,S)=\rho(x,S')/\rho(y,S')$ whenever neither $x$ nor $y$ is the best options in $S$ or $S'$. The requirement $x,y\ne b_{\succ_{R,d^c}}(S), b_{\succ_{R,d^c}}(S')$ is necessary for this IIA condition to hold because Luce's IIA breaks down when either $x$ or $y$ is the best element in a choice set.

Proposition \ref{prop:d-DST_characterization_consistency} states a characterization of d-DST under consistency.

\begin{prop} \label{prop:d-DST_characterization_consistency} $\rho$ has a consistent d-DST representation if and only if it is positive and satisfies Axioms \ref{axiom:0-dDST-Consistency}-\ref{axiom:2-dDST}.
\end{prop}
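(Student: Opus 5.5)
Necessity is a direct computation, so the plan concentrates on sufficiency. Suppose $\rho$ has a consistent d-DST representation $(\{\alpha_S\}_S,\succ,w)$. Consistency gives, for $x,y\in S$ with $x\succ y$, $w(x)>w(y)$. If neither is best, $\rho(x,S)/\rho(y,S)=w(x)/w(y)>1$. If $x$ is best, then $\rho(x,S)>(1-\alpha_S)w(x)/w(S)>\rho(y,S)$. Hence $\succ_{R,d^c}=\succ$, which is a linear order, so Axiom \ref{axiom:0-dDST-Consistency} holds. Positivity is immediate. Transitive IIA holds because every ratio among non-best options equals the ratio of $w$'s. For Axiom \ref{axiom:1-dDST}, observe that $\rho(x,S)/\rho(y,S)=w(x)/w(y)+\alpha_S w(S)/((1-\alpha_S)w(y))>w(x)/w(y)$. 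If $x$ is not best in $S'$, then $y$ is not best either: $y$ best in $S'$ would force $y\succ x$, contradicting that $x$ is best in $S$ while $y\in S$. So the right-hand side equals $w(x)/w(y)$. For Axiom \ref{axiom:1.5-dDST}, the left side exceeds $w(x)/w(y)>w(z)/w(y)$, since $x\succ z$ gives $w(x)>w(z)$ by consistency, and $z,y$ are non-best in $S'$.

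For sufficiency, let $\succ:=\succ_{R,d^c}$ and define $w$ using Transitive IIA. Fix the $\succ$-best element $x^*$ of $X$. For every $x\neq x^*$, set $r(x):=\rho(x,S)/\rho(l,S)$, where $l$ is the $\succ$-worst element of $X$ and $S$ is any menu in which neither $x$ nor $l$ is best. Transitive IIA (taking $y=z$, etc.) makes this independent of $S$ and multiplicatively consistent: $\rho(x,S)/\rho(y,S)=r(x)/r(y)$ whenever $x,y$ are non-best in $S$. Such a menu always exists, since $\{x,l,x^*\}$ belongs to $\mathcal{D}$ by richness (use $\{x,x^*\}$ when $x=l$). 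Then define $w(x^*)$ as any value exceeding $\max_{x\neq x^*}r(x)$, and normalize. The natural choice is via the requirement in the representation itself: we need, for each $S$, a number $\alpha_S\in(0,1)$ with $\rho(y,S)=(1-\alpha_S)w(y)/w(S)$ for all non-best $y$. This pins $1-\alpha_S=\rho(y,S)w(S)/w(y)$, which is the same for all non-best $y$ by IIA. The best element then automatically absorbs the rest, because probabilities sum to one.

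The main obstacle is choosing the weights of best elements, which matter both for $w(S)$ and for consistency, so that $\alpha_S\in(0,1)$ in every menu. This requires
\[
\rho(b(S),S)/\rho(y,S)>w(b(S))/w(y)
\]
for non-best $y$. That condition is equivalent to $1-\alpha_S<1$; and $\alpha_S<1$ is automatic because $\rho(y,S)>0$. I would set $w(x)=r(x)$ for every $x\ne x^*$. For each $x$, this ratio condition in menus where $x$ is best must then be checked. Axiom \ref{axiom:1-dDST}, applied with $S'$ a menu in which $x$ is not best (say $S'=S\cup x^*$ when $x\ne x^*$, which is fine if $S\cup x^*\in\mathcal{D}$; otherwise use the tripleton $\{x,y,x^*\}$), gives exactly $\rho(x,S)/\rho(y,S)>\rho(x,S')/\rho(y,S')=r(x)/r(y)$. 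For $x^*$ itself no such $S'$ exists, so $w(x^*)$ must be chosen. Consistency demands $w(x^*)>w(x)$ for all other $x$. The ratio condition demands $w(x^*)<w(y)\rho(x^*,S)/\rho(y,S)$ over all menus $S\ni x^*$ and non-best $y$. Axiom \ref{axiom:1.5-dDST}, with $x=x^*$, gives
\[
\rho(x^*,S)/\rho(y,S)>\rho(z,S')/\rho(y,S')=r(z)/r(y)
\]
for all $z$. Hence $\inf_{S,y} w(y)\rho(x^*,S)/\rho(y,S)>\max_z r(z)$, and since everything is finite the infimum is a minimum. Pick $w(x^*)$ strictly between these two bounds. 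Consistency among the other alternatives follows from Axiom \ref{axiom:0-dDST-Consistency}: $x\succ y$ means $\rho(x,S)>\rho(y,S)$ in a menu where both are non-best, so $r(x)>r(y)$. Finally, normalize $w$ and verify the representation menu by menu.
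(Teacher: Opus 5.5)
Your proposal is correct and follows the paper's proof essentially step for step. You define $w$ on $X\setminus x^*$ from ratios against the worst element via Transitive IIA, use Axiom \ref{axiom:1-dDST} with the tripleton $\{x,y,x^*\}$ to get $\alpha_S>0$ when $x^*\notin S$, and choose $w(x^*)$ strictly between $\max_{y\neq x^*}w(y)$ and $\min_{(z,S)}w(z)\rho(x^*,S)/\rho(z,S)$ using Axiom \ref{axiom:1.5-dDST}.

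The one slip is that Axiom \ref{axiom:1.5-dDST} requires $x,y,z$ to be pairwise distinct. So the case $z=y$ of your bound, namely $\rho(x^*,S)>\rho(y,S)$, must come from the definition of $\succ_{R,d^c}$ (as the paper notes), not from that axiom.
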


\subsection{d-DST without consistency}

When the two systems in a d-DST are inconsistent with each other, the analysis quickly becomes intractable, as shown in Section \ref{sec:menu-dependent alpha} in the main body of the paper. To keep it simple, we assume that the best option in $X$ is always observable. This assumption is justified in relevant contexts where options are naturally ranked by an ordering that reflects the DM's preferences. For instance, foods can be ordered by caloric content, cars by emissions or fuel efficiency, insurance policies by deductibles, restaurants by customer ratings, stocks by expected return, and academic journals by prestige. Note that we require only that the top option in $X$ be observed; the analyst need not observe the full ranking. This accommodates situations in which the best option is exogenously known, while the relative ordering of the remaining options remains ambiguous. For instance, when publishing a macroeconomics paper, while distinctions between journals such as the Journal of Monetary Economics and American Economic Journal: Macroeconomics may be subtle and a precise ranking between the two remains unclear, there is broad consensus that Econometrica is superior to both. 

In what follows, let $x^*$ be the observed best option in $X$. We present identification in and characterization results for d-DST under inconsistency below.

\smallskip
\noindent \textbf{Identification.} Suppose $\rho$ has a d-DST representation $(\{\alpha_S\}_S, \succ, w)$. For arbitrary $y,z\ne x^*$, the d-DST representation of $\rho$ implies that
\[
\frac{\rho(y,\yz)}{\rho(z,\yz)}>\frac{\rho(y,\{x^*,y,z\})}{\rho(z,\{x^*,y,z\})} \quad \text{ if and only if $y\succ z$}. 
\]
Consequently, the DM's preference can be uniquely recovered based on the difference in relative choice probabilities between binary and tripleton menus. The cognitive weights $\{\alpha_S\}$ and Luce weights $w$, however, are not unique in general.

\medskip
\noindent \textbf{Characterization.} Based on the identification strategy, we introduce a preference relation $\succ_{R,d}$ on $X$. For all $y\in X\setminus x^*$, let $x^*\succ_{R,d} y$. For all $y,z\ne x^*$, let
\begin{equation}\label{eq:revealed_preference_inconsistency}
y\succ_{R,d} z \quad \text{ if } \quad \frac{\rho(y,\yz)}{\rho(z,\yz)}>\frac{\rho(y,\{x^*,y,z\})}{\rho(z,\{x^*,y,z\})}.
\end{equation}
The preference relation states that $y\succ_{R,d} z$ if removing $x^*$ from the choice set benefits the relative choice probability of $y$. It is the revealed preference of the DM.

Proposition \ref{prop:d-DST_characterization} below states a characterization of d-DST under inconsistency; it is similar to the characterization of d-DST under consistency stated in Proposition \ref{prop:d-DST_characterization_consistency}. The only difference is that Axiom \ref{axiom:1.5-dDST} is no longer needed, and the revealed preference used in Axioms \ref{axiom:0-dDST-Consistency}, \ref{axiom:1-dDST}, and \ref{axiom:2-dDST} is $\succ_{R,d}$, as defined in (\ref{eq:revealed_preference_inconsistency}), instead of $\succ_{R,d^c}$, as defined in (\ref{eq:revealed_preference_consistency}).

\begin{prop}\label{prop:d-DST_characterization} $\rho$ has a d-DST representation in which $x^*$ is the most-preferred alternative if and only if it is positive and satisfies Axioms \ref{axiom:0-dDST-Consistency}, \ref{axiom:1-dDST}, and \ref{axiom:2-dDST}, with the revealed preference defined in (\ref{eq:revealed_preference_inconsistency}). 
\end{prop}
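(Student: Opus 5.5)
Necessity is a direct computation; sufficiency recovers the order from $\succ_{R,d}$, builds $\alpha_S$ and $w$ by hand, and uses Axiom \ref{axiom:1-dDST} to keep $\alpha_S\in(0,1)$.

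\textbf{Necessity.} Suppose $\rho$ has a d-DST representation $(\{\alpha_S\}_S,\succ,w)$ with $x^*$ the $\succ$-best element. Positivity holds because $w>0$ and $\alpha_S<1$.

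For $y,z\neq x^*$, both are non-best in $\{x^*,y,z\}$, so $\rho(y,\{x^*,y,z\})/\rho(z,\{x^*,y,z\})=w(y)/w(z)$. In $\yz$ the ratio is $w(y)/w(z)$ adjusted by the $\alpha_{\yz}$-bonus to the best element. It therefore strictly exceeds $w(y)/w(z)$ iff $y\succ z$. Hence $\succ_{R,d}=\succ$, which gives Axiom \ref{axiom:0-dDST-Consistency}.

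For Axiom \ref{axiom:2-dDST}, whenever $a,b$ are both non-best in $S$ we have $\rho(a,S)/\rho(b,S)=w(a)/w(b)$. The product identity then telescopes.

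For Axiom \ref{axiom:1-dDST}, take $x=b(S)$ and $y\in S$. Then
\[
\frac{\rho(x,S)}{\rho(y,S)}=\frac{w(x)}{w(y)}+\frac{\alpha_S w(S)}{(1-\alpha_S)w(y)}>\frac{w(x)}{w(y)}.
\]
In $S'$, $x$ is not best. If $y$ is also not best there, the ratio equals $w(x)/w(y)$. If $y$ is best in $S'$, the ratio is below $w(x)/w(y)$. Either way the inequality holds.

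\textbf{Sufficiency.} Let $\succ=\succ_{R,d}$, a linear order by Axiom \ref{axiom:0-dDST-Consistency}, with top element $x^*$.

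\emph{Step 1: the weights.} Define $w$ on $X$ from ratios in menus where neither element is best. Fix a reference $z_0$. For $y,z_0$ not best in some menu $S$ (e.g.\ $S=\{x^*,y,z_0\}$ when $y,z_0\neq x^*$), set $w(y)/w(z_0)=\rho(y,S)/\rho(z_0,S)$. For $x^*$ itself, use a menu in which $x^*$ is not best. None exists, so $w(x^*)$ is a free parameter to be chosen. Axiom \ref{axiom:2-dDST}, with its non-distinct-elements allowance, guarantees that these ratios are menu-independent and multiplicatively consistent, so $w$ is well defined on $X\setminus x^*$ up to scale.

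\emph{Step 2: the menu weights.} In each menu $S$ with best element $b$, every pair of non-best elements has ratio $w$-proportional. Hence there is a scalar $c_S>0$ with $\rho(a,S)=c_S\,w(a)$ for all non-best $a$. Set $1-\alpha_S=c_S\,w(S)$, so that $\rho(b,S)=(1-\alpha_S)w(b)/w(S)+\alpha_S$ holds automatically because probabilities sum to one.

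\emph{Step 3: the constraint $\alpha_S\in(0,1)$ (main obstacle).} We have $\alpha_S<1$ from positivity, while $\alpha_S>0$ is equivalent to $\rho(b,S)/\rho(a,S)>w(b)/w(a)$ for non-best $a$. I plan to obtain this from Axiom \ref{axiom:1-dDST}. Take $S'$ with $b,a\in S'$ and $b$ not best in $S'$, e.g.\ $S'=\{x^*,a,b\}$ when $b\neq x^*$. There $\rho(b,S')/\rho(a,S')=w(b)/w(a)$ if $a$ is not best either, which is automatic since $x^*$ is best. Then Axiom \ref{axiom:1-dDST} gives exactly the needed strict inequality.

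For menus whose best element is $x^*$, the condition reads $\rho(x^*,S)/\rho(a,S)>w(x^*)/w(a)$ for all such $S$ and non-best $a$. Because $w(x^*)$ is free, I will choose it positive and smaller than the infimum, over these finitely many menus, of $w(a)\rho(x^*,S)/\rho(a,S)$. This infimum is positive by positivity and finiteness.

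Finally, normalize so that $w(X)=1$, and confirm that $\alpha_S$ as defined is invariant to this rescaling.
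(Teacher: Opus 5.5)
Your proposal is correct and follows essentially the same route as the paper. It reads the weights off non-best ratios in tripletons $\{x^*,y,z\}$, made consistent by Axiom \ref{axiom:2-dDST}; it chooses $w(x^*)$ freely and small enough to secure $\alpha_S>0$ on menus containing $x^*$; and it applies Axiom \ref{axiom:1-dDST} against $\{x^*,b,a\}$ to secure $\alpha_S>0$ on the remaining menus. The differences are only cosmetic: you use a generic reference element where the paper uses the worst element $z^*$, and you use the direct equivalence $\alpha_S>0 \iff \rho(b,S)/\rho(a,S)>w(b)/w(a)$ where the paper argues by contradiction for binary menus.
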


%%%%%%%%%%%%%%%%%%%%%%%%%%%%%%%%%%%%%%%%%%%%%%%%%%%%%%%%%%%%%%%%%%
%%%%%%%%%%%%%%%%% This is a new section %%%%%%%%%%%%%%%%%%%%%%%%%%
%%%%%%%%%%%%%%%%%%%%%%%%%%%%%%%%%%%%%%%%%%%%%%%%%%%%%%%%%%%%%%%%%%
\newpage
\section{DST in a Heterogeneous Population}\label{sec:appendix_heterogeneity}
In our baseline DST, we assume no variation in the model's primitives. While justifiable in a single-agent context, this assumption can be overly restrictive when the observed behavior represents choices by a population. This appendix extends our baseline DST model to allow for heterogeneity across individuals. For simplicity and tractability, we assume that $\mathcal{D}=\mathcal{X}$ in this subsection, i.e., the analyst observes choice behavior from all non-empty menus.\footnote{With limited data, the analysis quickly becomes cumbersome without offering any additional insights. Our aim is to keep the analysis simple and avoid complications when studying DST with heterogeneity.} We first define a DST model with heterogeneity as follows.

\begin{definition}[h-DST] \label{defn:heterogeneous_DST} A random choice function $\rho$ has a heterogeneous DST representation (h-DST) if there exist a collection of triples $\{(\alpha_i, \succ_i,w_i)\}_{i=1}^k$ and a probability vector $\{\eta_i\}_{i=1}^k$ on this collection such that $\rho(x,S)=\sum_{i=1}^k \eta_i \rho_{\alpha_i, \succ_i,w_i}(x,S)$ for all $x\in S$ and $S\in \mathcal{X},$ where, for each $i$, each random choice function $\rho_{\alpha_i, \succ_i,w_i}$ has a DST representation $(\alpha_i, \succ_i,w_i)$.
\end{definition}

In an h-DST, there are $k$ individuals indexed by $1,\dots, k$. Individual $i$ has a share $\eta_i\in (0,1]$ in the population, with $\sum_{i=1}^k \eta_i=1$. Each individual behaves according to a DST and has a representation $(\alpha_i, \succ_i,w_i)$. The analyst does not observe individual choice behavior, but she observes the aggregate choice in the population, which is the weighted average of the choices of all individuals. Note that besides the interpersonal interpretation, h-DST also applies to intrapersonal behavior. Under this interpretation, each $\rho_{\alpha_i, \succ_i,w_i}$ corresponds to the choice behavior of the same DM at a different time, mood, or context.

Each \textit{behavioral type} in an h-DST is represented by a triple $(\alpha_i,\succ_i, w_i)$. Accordingly, h-DST accommodates unrestricted heterogeneity both in the relative influence of the two systems and in how each system operates. In other words, individuals in an h-DST may differ in their degree of deliberation, in how deliberate thinking functions, and in how intuitive thinking randomizes. Consequently, the randomness of choices in an h-DST can originate from variability in cognitive weights, preference orderings, or Luce weights, or any combination thereof.

\smallskip

\newcommand{\PDST}{\mathrm{P}_{\mathrm{DST}}}
\newcommand{\clPDST}{\mathrm{cl}\bigl(\mathrm{P}_{\mathrm{DST}}\bigr)}
\newcommand{\clconvPDST}{\mathrm{cl}\bigl(\mathrm{conv}(\mathrm{P}_{\mathrm{DST}})\bigr)}

\newcommand{\PhDST}{\mathrm{P}_{\mathrm{h\text{-}DST}}}
\newcommand{\clPhDST}{\mathrm{cl}\bigl(\mathrm{P}_{\mathrm{h\text{-}DST}}\bigr)}

\newcommand{\Ph}{\mathrm{P}_{\mathrm{h\text{-}con\text{-}DST}}}
\newcommand{\clPh}{\mathrm{cl}\bigl(\mathrm{P}_{\mathrm{h\text{-}con\text{-}DST}}\bigr)}

\newcommand{\PRUM}{\mathrm{P}_{\mathrm{RUM}}}
\newcommand{\bdRUM}{\mathrm{bd}\bigl(\mathrm{P}_{\mathrm{RUM}}\bigr)}
\newcommand{\clRUM}{\mathrm{cl}\bigl(\mathrm{P}_{\mathrm{RUM}}\bigr)}
\newcommand{\intRUM}{\mathrm{int}\bigl(\mathrm{P}_{\mathrm{RUM}}\bigr)}

In what follows, let $\PDST$, $\PhDST$, and $\PRUM$ denote the sets of all random choice functions that admit, respectively, a DST representation, an h-DST representation, and a random utility representation. By definition, $\PhDST$ is the convex hull of $\PDST$. Proposition \ref{prop:h-DST} presents the connections between the three sets. In Proposition \ref{prop:h-DST}, for any set $A$, $\text{cl}(A)$ denotes the closure of the set $A$ and $\text{conv}(A)$ denotes its convex hull.\footnote{All topological concepts are defined on the relative topology in the affine hull of random choice functions.}

\begin{prop}\label{prop:h-DST} $\clconvPDST = \clPhDST = \PRUM$. Additionally, $\PhDST$ does not intersect the facets of $\PRUM$.
\end{prop}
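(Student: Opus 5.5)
Here, $\mathcal{D}=\mathcal{X}$ and RUM is the set of mixtures of deterministic linear-order maximizers. $\PRUM$ is then a polytope: the convex hull of the finitely many deterministic RCFs $\rho_\succ$, one for each linear order $\succ$. It is therefore closed and convex. The argument runs through three inclusions plus the facet claim.

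\emph{Step 1: $\PhDST\subseteq\PRUM$.} Every DST is a RUM. Indeed, the Luce part is a RUM (Block--Marschak / logit) and the $\alpha$-part is the point mass on $\rho_\succ$, so their mixture is a RUM. Convexity of $\PRUM$ gives $\mathrm{conv}(\PDST)\subseteq\PRUM$. Since $\PRUM$ is closed, $\clconvPDST\subseteq\PRUM$. Also $\PhDST=\mathrm{conv}(\PDST)$ by definition, which gives the equality of the first two sets.

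\emph{Step 2: $\PRUM\subseteq\clPhDST$.} It suffices to show that each vertex $\rho_\succ$ lies in $\clPDST$; convexity of the closure of a convex set then finishes. Take $(\alpha_n,\succ,w_n)$ with $\alpha_n\to1$, so that $\rho_{\alpha_n,\succ,w_n}\to\rho_\succ$ uniformly over menus. Alternatively, use Luce weights concentrating along $\succ$ (for example $w_n(x)\propto n^{r(x)}$ with $r$ the rank), which converge to $\rho_\succ$. So every deterministic RCF is a limit of DSTs. Any RUM $\sum_\succ\mu_\succ\rho_\succ$ is then the limit of $\sum_\succ\mu_\succ\rho_{\alpha_n,\succ,w}\in\PhDST$.

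\emph{Step 3: facets.} Every DST is strictly positive: Strict Regularity holds, and, more importantly, all Block--Marschak polynomials $q(x,S)=\sum_{T\supseteq S}(-1)^{|T\setminus S|}\rho(x,T)$ are strictly positive. For the Luce part, these are known to be strictly positive for every $x\in S$ (logit has full-support random utility over all orders). For the $\alpha$-part they are nonnegative. Hence each DST, and each finite mixture, has all BM polynomials strictly positive. By Falmagne's theorem, with full data $\PRUM$ is exactly $\{\rho : q(x,S)\ge0\ \forall x\in S\}$ (together with the affine constraints). The facets are contained in faces where some $q(x,S)=0$. A point with all $q>0$ lies in the relative interior, so $\PhDST$ misses every facet, and indeed the whole boundary.

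The main obstacle is Step 3. One must justify that the facet-defining inequalities are among the BM inequalities, and that the Luce model has strictly positive BM polynomials. For the latter I would compute directly: the logit random utility puts positive probability on every linear order, so $q(x,S)$, which equals the probability that the set of options above $x$ is exactly $X\setminus S$, is positive. Care is needed that the topology is relative to the affine hull, so that relative interior is the right notion; this matches the paper's footnote.
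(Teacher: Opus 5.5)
Your proof is correct and follows essentially the same route as the paper: every h-DST is a RUM and $\mathrm{P}_{\mathrm{RUM}}$ is closed, each deterministic choice function is a limit of DSTs, and every Block--Marschak polynomial of a DST is strictly positive (the $\alpha$-part contributes a nonnegative indicator and the Luce part a strictly positive term, and linearity passes this to mixtures). The differences are minor. You get positivity of the Luce term from the full-support logit random-utility interpretation ($q(x,S)$ is the probability that the options above $x$ are exactly $X\setminus S$), whereas the paper writes this same probability out as $w(x^*)\int_0^\infty e^{-tw(S^*)}\prod_{y\notin S^*}(1-e^{-tw(y)})\,dt$ via a Laplace-transform identity. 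Your argument that strict positivity of all polynomials puts the point in the relative interior also gives the slightly stronger conclusion that h-DST avoids the whole relative boundary, not just the facets.
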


Three remarks follow Proposition \ref{prop:h-DST}. First, Proposition \ref{prop:h-DST} implies that every h-DST admits a RUM representation. This result extends the Block and Marschak's finding that each Luce model is a RUM \citep{block_marschak1960}. Our result follows from the facts that (1) deterministic preference maximization and the Luce model are well-known special cases of RUMs, (2) DST is a convex combination of the two, (3) $\PhDST$ is the convex hull of $\PDST$, and (4) $\PRUM$ is a convex set (the Block-Marschak theorem). 

Second, Proposition \ref{prop:h-DST} indicates that $\PhDST$ is dense in $\PRUM$ and thus any RUM can be approximated by some h-DST to any degree of accuracy. This result comes from the facts that $\PRUM$, by definition, is the convex hull of the set of all deterministic rational choices, and any deterministic rational choice can be approximated by a DST model (by setting $\alpha$ sufficiently close to 1). The approximation of RUM by h-DST is similar to approximation results by mixed logit \citep{mcfadden2000mixed} and attribute rule models \citep{Gul_Natenzon_Pesendorfer_2014_ECMA}. 

Finally, as $\PRUM$ admits an equivalence characterization as a polytope defined by (the converse of) Block-Marschak inequalities \citep{falmagne1978representation}, it is natural to ask how $\PhDST$ ``fits" within that polytope. Proposition \ref{prop:h-DST} suggests that h-DST does not intersect the facets of this polytope but instead remains entirely within its interior. This result follows from the assumption that each cognitive system is activated with a strictly positive probability in the DST model.

\newcommand{\PAR}{\mathrm{P}_{\mathrm{AR}}}
\newcommand{\clPAR}{\mathrm{cl}\bigl(\mathrm{P}_{\mathrm{AR}}\bigr)}

\newcommand{\PLuce}{\mathrm{P}_{\mathrm{Luce}}}
\newcommand{\clconvPLuce}{\mathrm{cl}(\mathrm{conv}(\mathrm{P}_{\mathrm{Luce}}))}
\newcommand{\convPLuce}{\mathrm{conv}(\mathrm{P}_{\mathrm{Luce}})}

\textbf{Relationship between h-DST and the attribute rule models.} In the attribute rule (AR) model of \cite{Gul_Natenzon_Pesendorfer_2014_ECMA}, options possess attributes and choice proceeds in two stages. The decision maker first draws a salient (or relevant) attribute according to a logit formula and then selects an alternative that exhibits that attribute according to a second logit formula.\footnote{\cite{Gul_Natenzon_Pesendorfer_2014_ECMA} study a rich domain that differs from our setup. For comparison, the attribute rule we refer to here is defined on a finite domain: there is a finite set of options with each option having a finite set of attributes.} Conceptually, DST and the attribute rule model are distinct as there is no underlying order in the latter. Behaviorally, let $\PAR$ and $\PLuce$ denote the set of all random choice functions that admit an attribute rule and a Luce choice rule, respectively. Recall that $\PhDST$ denotes the set of all random choice functions that admit an h-DST representation. Remark \ref{rem:Attribute Rule} shows that the closures of $\PAR$, $\PhDST$, and $\convPLuce$ are all equivalent. This result holds because they are all equivalent to $\PRUM$, which follows directly from Proposition \ref{prop:h-DST} in our paper and Theorem 3 in \cite{Gul_Natenzon_Pesendorfer_2014_ECMA}.

\begin{remark} \label{rem:Attribute Rule} $\clPhDST=\clPAR=\clconvPLuce$.
\end{remark}

Even though the closures of $\PhDST$ and $\PAR$ coincide, note that $\PhDST$ and $\PAR$ are distinct. For instance, to see that $\PAR\not\subseteq \PhDST$, observe that some attribute rules lie on the facets of $\PRUM$ and therefore lie outside $\PhDST$, by Proposition \ref{prop:h-DST}. As an example, consider $X = \xyz$ and let the set of attributes be $\{a_1, a_2\}$. Suppose options $x$ and $z$ possess only attribute $a_1$, which has value $1$, while option $y$ possesses only attribute $a_2$, which also has value $1$. Let $\rho_{AR}$ be the attribute rule induced by these parameters. Then $\rho_{AR}(y, \{x,y\}) = \rho_{AR}(y, \{x,y,z\})=1/2$. It follows that $\rho_{AR}\not\in \PhDST$ because $\rho(y, \{x,y\})> \rho(y, \{x,y,z\})$ for any $\rho$ having a DST representation.

%%%%%%%%%%%%%%%%%%%%%%%%%%%%%%%%%%%%%%%%%%%%%%%%%%%%%%%%%%%%%
%%%%%%%%%%%%%%%%%%%%%%%%% Omited Proofs%%%%%%%% %%%%%%%%%%%%%
%%%%%%%%%%%%%%%%%%%%%%%%%%%%%%%%%%%%%%%%%%%%%%%%%%%%%%%%%%%%%
\newpage

\section{Omitted Proofs}

%%%%%%%%%%%%%%%%%%%%%%%%%%%%%%%%%%%%%%%%%%%%%%%%%%%%%%%%%%%%%
%%%%%%%%%%%%%%%%%%%%%%%%% New Proof %%%%%%%%%%%%%%%%%%%%%%%%
%%%%%%%%%%%%%%%%%%%%%%%%%%%%%%%%%%%%%%%%%%%%%%%%%%%%%%%%%%%%%

\subsection{Proof of Propositions \ref{prop:identification} and \ref{prop:identification_1}} 

For Proposition \ref{prop:identification_1}, suppose $\rho$ has a DST representation $(\alpha,\succ,w)$. Consider $x,y,z$ pairwise distinct such that $x\succ y \succ z$. Using the DST representation of $\rho$, it is routine to verify that $D_{xy,z} > 0, D_{yz,x} < 0$, and $D_{zx,y} < 0$. By the definition of $\succ_R$, we have $x\succ_R y$ and $y\succ_R z$ and $x\succ_R z$. Hence, $\succ_R$ and $\succ$ are identical on every triple $(x,y,z)$. It follows that $\succ_R$ and $\succ$ are identical. Consequently, $\succ_R$ represents $\rho$. 

For Proposition \ref{prop:identification}, Appendix \ref{sec:Appendix_identification_DST} shows that $\alpha$ and $w$ in the DST model are uniquely identified given a preference relation. Hence, it is sufficient to show that the preference in a DST representation is unique, which follows directly from Proposition \ref{prop:identification_1}, as the revealed preference $\succ_R$ is unique. $\blacksquare$

%%%%%%%%%%%%%%%%%%%%%%%%%%%%%%%%%%%%%%%%%%%%%%%%%%%%%%%%%%%%%%%%%%%
%%%%%%%%%%%%%%%%%%%%%%%%%% A new section %%%%%%%%%%%%%%%%%%%%%%%%%%%
%%%%%%%%%%%%%%%%%%%%%%%%%%%%%%%%%%%%%%%%%%%%%%%%%%%%%%%%%%%%%%%%%%%

\subsection{Proof of Theorem \ref{thm:characterization}}
The only-if part is straightforward. For the if part, we will first prove that $\rho$ has a representation at binary choice sets, then show that $\rho$ has a representation in larger menus. Enumerate options in $X$ as $x_1\succ_R x_2\succ_R\dots \succ_R x_n$, with $n=|X|$.

Let $D\in \mathcal{D}$ be an arbitrary choice set with at least three alternatives. Enumerate alternatives in $D$ as $y_1,\dots,y_d$ with $d=|D|\ge 3$ and $y_1\succ_R \dots \succ_R y_d$. Axiom \ref{axiom: Constant Gain} implies that there exists a constant $\gamma$ such that 
\[
A(y_d,y_i,D)= \gamma \rho(y_i,D) \quad \text{for all $i\ne 1,d$}.
\]
Let $\gamma_D$ be a constant satisfying
\[
\rho(y_d,D)=\gamma_D\rho(y_d,D\setminus y_1)\rho(D\setminus y_1,D). 
\]
Note that $\gamma_D>0$ as $\rho$ is positive (Axiom \ref{axiom:positivity}). We start with the following lemma.

\begin{lemma}\label{lema:thm-proof-1} $\gamma_D=\gamma$ for all $D\in \mathcal{D}$ with at least three alternatives. Consequently, $\gamma>0$. 
\end{lemma}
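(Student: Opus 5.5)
The idea is to combine Axiom \ref{axiom:5} on the menu $D$ with Axiom \ref{axiom: Constant Gain}. Every term in the identity of Axiom \ref{axiom:5} except one is then a multiple of $\gamma$. The remaining term, which concerns removing the best option $y_1$, is exactly what $\gamma_D$ encodes.

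Fix $D=\{y_1,\dots,y_d\}$ with $d\ge 3$ and $y_1\succ_R\cdots\succ_R y_d$. I will write out Axiom \ref{axiom:5} for $D$ by looking at which element is removed.
\begin{itemize}
\item If $y_i$ with $i<d$ is removed, the $\succ_R$-worst element of $D\setminus y_i$ is $y_d$, so the term is $A(y_d,y_i,D)$.
\item If $y_d$ is removed, the worst survivor is $y_{d-1}$, so the term is $A(y_{d-1},y_d,D)$.
\end{itemize}
Hence
\[
A(y_d,y_1,D)+\sum_{i=2}^{d-1}A(y_d,y_i,D)+A(y_{d-1},y_d,D)=1 .
\]

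Next I apply Axiom \ref{axiom: Constant Gain} with its uniform constant $\gamma$. For $2\le i\le d-1$, neither $y_d$ nor $y_i$ is $b_{\succ_R}(D)=y_1$, so $A(y_d,y_i,D)=\gamma\rho(y_i,D)$. Because $d\ge 3$, the element $y_{d-1}$ is also not $y_1$, so $A(y_{d-1},y_d,D)=\gamma\rho(y_d,D)$. This is the only place where $|D|\ge 3$ is used. Summing these terms gives $\gamma\sum_{i=2}^{d}\rho(y_i,D)=\gamma\,\rho(D\setminus y_1,D)$.

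For the remaining term I use the definition of $A$ and of $\gamma_D$:
\[
A(y_d,y_1,D)=1-\frac{\rho(y_d,D)}{\rho(y_d,D\setminus y_1)}=1-\gamma_D\,\rho(D\setminus y_1,D).
\]
Substituting both pieces into the identity yields $\gamma\,\rho(D\setminus y_1,D)=\gamma_D\,\rho(D\setminus y_1,D)$. By Axiom \ref{axiom:positivity}, $\rho(D\setminus y_1,D)>0$, so $\gamma_D=\gamma$. Since $\gamma_D>0$ (again by positivity, as noted before the lemma), this also gives $\gamma>0$.

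The only delicate step is the bookkeeping of which pairs $(x,y)$ avoid the best element $y_1$, so that Axiom \ref{axiom: Constant Gain} applies with the same constant $\gamma$ across menus. The hypothesis $d\ge 3$ is precisely what guarantees that $y_{d-1}\neq y_1$.
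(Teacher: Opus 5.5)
Your proof is correct and follows the paper's argument essentially step for step. You expand Axiom~\ref{axiom:5} on $D$, replace every term except the $y_1$-removal term by $\gamma\,\rho(\cdot,D)$ via Constant Proportional Gain, rewrite that remaining term as $1-\gamma_D\,\rho(D\setminus y_1,D)$ using the definition of $\gamma_D$, and cancel by positivity. One small correction: $|D|\ge 3$ is also needed just to invoke Axiom~\ref{axiom:5}, so it is not used only to guarantee $y_{d-1}\neq y_1$.
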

\begin{proof} By definition of $\gamma$, $A(y_d,y_i,D)= \gamma \rho(y_i,D)$ for all $i\ne 1,d$. It follows
\[
    \sum_{i=2}^{d-1} A(y_d,y_i,D) = \gamma \sum_{i=2}^{d-1} \rho(y_i,D).
\]
Additionally, we have $A(y_{d-1},y_d,D) = \gamma \rho(y_d,D)$ by Axiom \ref{axiom: Constant Gain}. Therefore, 
\begin{equation}\label{eq:thm-proof-0}
\sum_{i=2}^{d-1} A(y_d,y_i,D) + A(y_{d-1},y_d,D) = \gamma \sum_{i=2}^{d} \rho(y_i,D) = \gamma \rho(D\setminus y_1,D).
\end{equation}
By definition of $\gamma_D$
\begin{equation}\label{eq:thm-proof-0-0}
\gamma_D\rho(D\setminus y_1,D)=\frac{\rho(y_d,D)}{\rho(y_d,D\setminus y_1)}=1-A(y_d,y_1,D).
\end{equation}
From (\ref{eq:thm-proof-0}) and (\ref{eq:thm-proof-0-0})
\[
(\gamma_D-\gamma)\rho(D\setminus y_1,D) = 1-A(y_d,y_1,D)-\sum_{i=2}^{d-1} A(y_d,y_i,D) - A(y_{d-1},y_d,D)=1-\sum_{x\in D} A(l_{\succ_R}(D\setminus x),x,D)=0,
\]
where the last equation comes from Axiom \ref{axiom:5}. As $\rho$ is a positive random choice function, it follows that $\gamma_D=\gamma$. As $\gamma_D>0$, we have $\gamma>0$.
\end{proof}

Back to the main proof. Following the identification strategies, define the weight function as
\[
w(x_n) = \Bigg[\sum_{i=1}^{n-1}\frac{1}{\gamma \rho(x_n,\{x_i,x_n\})}-(n-2)\Bigg]^{-1} \text{ and } w(x_i) = w(x_n)\Bigg[\frac{1}{\gamma \rho(x_n,\{x_i,x_n\})}-1\Bigg] \text{ $\forall i<n$.}
\]
Note that by definition $w(X)=1$. The proof proceeds in four steps. We will $\gamma>1$ so there exists $\alpha\in (0,1)$ such that $\gamma=1/(1-\alpha)$ at the end of the proof. 

\medskip
\noindent \textit{\textbf{\underline{Step 1:}}} We show that $w(x_n)$ is well-defined and $w(x_i)>0$ for all $i\ge 2$. To show $w(x_n)$ is well-defined, it is sufficient to show that $\gamma \rho(x_n,\{x_i,x_n\})<1$ for all $i=2,\dots,n-1$. For any $i \in \{2, \ldots, n-1\}$, applying Lemma \ref{lema:thm-proof-1} to the menu $D = \{x_1, x_i, x_n\}$ (where $x_1 \succ_R x_i \succ_R x_n$) yields $\rho(x_n, D) = \gamma \rho(x_n, \{x_i, x_n\}) \rho(\{x_i, x_n\}, D)$. As $\rho$ is positive, $\rho(\{x_i, x_n\}, D) > \rho(x_n, D)$. It follows that $\gamma \rho(x_n, \{x_i, x_n\}) < 1$, which is equivalent to $\frac{1}{\gamma \rho(x_n, \{x_i, x_n\})} > 1$. Because there are $n-2$ such terms, their sum is strictly greater than $n-2$. Since the remaining term, $\frac{1}{\gamma \rho(x_n, \{x_1, x_n\})}$, is positive, the total bracketed expression in the definition of $w(x_n)$ is strictly positive, making $w(x_n)$ well-defined and positive. The fact that $w(x_i)>0$ for all $i \in \{2, \ldots, n-1\}$ follows directly from $\frac{1}{\gamma \rho(x_n, \{x_i, x_n\})} > 1$ shown above.

\medskip
\noindent \textit{\textbf{\underline{Step 2:}}} We show that $\rho$ has a representation at all binary choice sets. First, suppose the choice set is $\{x_i,x_n\}$, for some $i\ne n$. By definition of the weight function
\[
w(x_i) = w(x_n)\Bigg[\frac{1}{\gamma \rho(x_n,\{x_i,x_n\})}-1\Bigg] \Rightarrow \rho(x_n,\{x_i,x_n\})=\gamma^{-1}\frac{w(x_n)}{w(x_i)+w(x_n)}.
\]
It follows that $\rho(x_i,\{x_i,x_n\})=1-\gamma^{-1}+\gamma^{-1}\frac{w(x_i)}{w(x_i)+w(x_n)}.$ Now, suppose the choice set is $\{x_i,x_j\}$ for some $(i,j)$ such that $i\ne j$ and $i,j\ne n$. Without loss of generality, suppose $x_i\succ_R x_j \succ_R x_n$. Then
\begin{eqnarray}
    \gamma \rho(x_n,\{x_i,x_j,x_n\})=\frac{\rho(x_j,\{x_i,x_j\})-\rho(x_j,\{x_i,x_j,x_n\})}{\rho(x_j,\{x_i,x_j\})}&=&1-\frac{\rho(x_j,\{x_i,x_j,x_n\})}{\rho(x_j,\{x_i,x_j\})} \label{eq:thm-proof-1}\\
    \gamma \rho(x_j,\{x_i,x_j,x_n\})=\frac{\rho(x_n,\{x_i,x_n\})-\rho(x_n,\{x_i,x_j,x_n\})}{\rho(x_n,\{x_i,x_n\})}&=&1- \frac{\rho(x_n,\{x_i,x_j,x_n\})}{\rho(x_n,\{x_i,x_n\})} \label{eq:thm-proof-2}\\
    \gamma \rho(x_n,\{x_n, x_j\})\rho(\{x_n,x_j\},\{x_i,x_j,x_n\}) &=&\rho(x_n,\{x_i,x_j,x_n\}) \label{eq:thm-proof-3}
\end{eqnarray}
Equations (\ref{eq:thm-proof-1}) and (\ref{eq:thm-proof-2}) come from Axiom \ref{axiom: Constant Gain} and the definition of $\gamma$. Equation (\ref{eq:thm-proof-3}) results from Lemma \ref{lema:thm-proof-1}. Using that $\rho(x_n,\{x_j,x_n\})=\gamma^{-1}\frac{w(x_n)}{w(x_j)+w(x_n)}$ as shown above, equation (\ref{eq:thm-proof-3}) simplifies to 
\begin{equation}\label{eq:thm-proof-4}
\frac{\rho(x_j,\{x_i,x_j,x_n\})}{\rho(x_n,\{x_i,x_j,x_n\})}=\frac{w(x_j)}{w(x_n)} \Rightarrow \rho(x_j,\{x_i,x_j,x_n\})=\rho(x_n,\{x_i,x_j,x_n\})\frac{w(x_j)}{w(x_n)}.
\end{equation}
Plug (\ref{eq:thm-proof-4}) into (\ref{eq:thm-proof-2}) and using the fact that $\rho(x_n,\{x_i,x_n\})=\gamma^{-1}\frac{w(x_n)}{w(x_i)+w(x_n)}$, we have
\[
\gamma \rho(x_n,\{x_i,x_j,x_n\})\frac{w(x_j)}{w(x_n)} = 1- \frac{\rho(x_n,\{x_i,x_j,x_n\})}{\gamma^{-1}\frac{w(x_n)}{w(x_i)+w(x_n)}}
\]
which implies 
\[
\rho(x_n,\{x_i,x_j,x_n\})=  \gamma^{-1}\frac{w(x_n)}{w(x_i)+w(x_j)+w(x_n)}
\]
Plug this into (\ref{eq:thm-proof-4}) and get 
\[
\rho(x_j,\{x_i,x_j,x_n\})=  \gamma^{-1}\frac{w(x_j)}{w(x_i)+w(x_j)+w(x_n)}
\]
Finally, from (\ref{eq:thm-proof-1}) we have 
\[
\rho(x_j,\{x_i,x_j\})= \frac{\rho(x_j,\{x_i,x_j,x_n\})}{1-\gamma \rho(x_n,\{x_i,x_j,x_n\})}=\gamma^{-1}\frac{w(x_j)}{w(x_i)+w(x_j)},
\]
where the last equation uses the representations of $\rho(x_j,\{x_i,x_j,x_n\})$ and $\rho(x_n,\{x_i,x_j,x_n\})$ obtained earlier. Now, $\rho(x_i,\{x_i,x_j\})=1-\rho(x_j,\{x_i,x_j\})=1-\gamma^{-1}+\gamma^{-1}\frac{w(x_i)}{w(x_i)+w(x_j)}$. This shows that $\rho$ has the representation at all binary choice sets.

\noindent \textit{\textbf{\underline{Step 3:}}} Consider arbitrary $D\in \mathcal{D}$ with at least three alternatives. Suppose $\rho$ has the representation at all $D'$ such that $D'\subset D$ and $D'\ne D$. Note that all of these $D'$ are in $\mathcal{D}$ by richness. We will show that $\rho$ also has a representation at $D$. Suppose the elements in $D$ are $y_1,\dots,y_d$ with $d=|D|\ge 3$ and $y_1\succ_R \dots \succ_R y_d$. By Axiom \ref{axiom: Constant Gain} and definition of $\gamma$
\begin{eqnarray*}
    \gamma \rho(y_d,D)&=&A(y_{d-1}, y_d, D)=1-\frac{\rho(y_{d-1},D)}{\rho(y_{d-1},D\setminus y_d)} =1-\frac{\rho(y_{d-1},D)}{\gamma^{-1}\frac{w(y_{d-1})}{w(D\setminus y_d)}} \\
    \gamma \rho(y_{d-1},D)&=&A(y_d, y_{d-1}, D)=1-\frac{\rho(y_d,D)}{\rho(y_d,D\setminus y_{d-1})}=1-\frac{\rho(y_d,D)}{\gamma^{-1}\frac{w(y_d)}{w(D\setminus y_{d-1})}}, 
\end{eqnarray*}
where the last equation in each line comes from the representations of $\rho$ at $D\setminus y_{d-1}$ and $D\setminus y_d$ by induction hypothesis. Using these two equations above to solve for $\rho(y_d,D)$, noting that $w(D)=\sum_{y\in D}w(y)$, we have
\[
\gamma \rho(y_d,D)=1-\frac{\gamma^{-1}\bigg[1-\frac{\rho(y_d,D)}{\gamma^{-1}\frac{w(y_d)}{w(D\setminus y_{d-1})}}\bigg]}{\gamma^{-1}\frac{w(y_{d-1})}{w(D\setminus y_d)}} \Rightarrow \rho(y_d,D)=\gamma^{-1}\frac{w(y_d)}{w(D)}.
\]
Now 
\[
\gamma \rho(y_i,D)=A(y_d, y_i, D)=1-\frac{\rho(y_d,D)}{\rho(y_d,D\setminus y_i)} =1-\frac{\rho(y_d,D)}{\gamma^{-1}\frac{w(y_d)}{w(D\setminus y_i)}} \quad \text{ for all $i=2,\dots,d-1$}.
\]
The first equation comes from Axiom \ref{axiom: Constant Gain} and definition of $\gamma$. The second equation uses the representation of $\rho$ at $D\setminus y_i$ by induction hypothesis. Plug the representation of $\rho(y_d,D\setminus y_i)$ obtained earlier into the equation above to get
\[
\rho(y_i,D)=\gamma^{-1}\frac{w(y_i)}{w(D)}\quad \text{ for all $i=2,\dots,d-1$}.
\]
Now, $\rho(y_1,D)=1-\sum_{i=2}^{d}\rho(y_i,D)=1-\gamma^{-1}+\gamma^{-1}\frac{w(y_1)}{w(D)}$. Hence, $\rho$ has the representation at $D$. By induction, $\rho$ has the representation at all $D\in \mathcal{D}$.

\medskip
\noindent \textbf{\underline{\textit{Step 4:}}} We show that $\gamma=\frac{1}{1-\alpha}$ for some $\alpha\in (0,1)$ and $w(x_1)>0$. Note that $x_1\succ_R x_2\succ_R x_3$. Using the representation of $\rho$ shown in Steps 2-3
\begin{eqnarray*}
D_{x_1x_2,x_3}&=&\frac{\rho(x_1,\{x_1,x_2,x_3\})}{\rho(x_2,\{x_1,x_2,x_3\})}-\frac{\rho(x_1,\{x_1,x_2\})}{\rho(x_2,\{x_1,x_2\})}\\&=&\frac{1-\gamma^{-1}}{\gamma^{-1}}\Bigg[\frac{w(x_1)+w(x_2)+w(x_3)}{w(x_2)}-\frac{w(x_1)+w(x_2)}{w(x_2)}\Bigg] =\frac{1-\gamma^{-1}}{\gamma^{-1}}\frac{w(x_3)}{w(x_2)}
\end{eqnarray*}
By definition of $\succ_R$, $x_1\succ_R x_2\succ_R x_3$ implies that $D_{x_1x_2,x_3}>0$. Lemma \ref{lema:thm-proof-1} says that $\gamma>0$. By the representation of $\rho$ in Steps 1-2 and the fact that $\rho$ is positive (Axiom \ref{axiom:positivity}),
\[
\frac{w(x_3)}{w(x_2)}=\frac{\rho(x_3,\{x_1,x_2,x_3\})}{\rho(x_2,\{x_1,x_2,x_3\})}>0.
\]
Hence, $1-\gamma^{-1}$ must be positive. Let $1-\gamma^{-1}=\alpha>0$. Note that $1-\gamma^{-1}<1$ as $\gamma>0$. It follows $\alpha\in (0,1)$. To show that $w(x_1)>0$, by strict regularity (Axiom \ref{ax:Strict_regularity}),
\[
\rho(x_3,\{x_1,x_2,x_3\})=\gamma^{-1}\frac{w(x_3)}{w(x_1)+w(x_2)+w(x_3)} <\gamma^{-1}\frac{w(x_3)}{w(x_2)+w(x_3)}=\rho(x_3,\{x_2,x_3\}),
\]
which implies that $w(x_1)>0$, as $w(x_2)>0$ and $w(x_3)>0$ shown in Step 1. This completes the proof. $\blacksquare$

%%%%%%%%%%%%%%%%%%%%%%%%%%%%%%%%%%%%%%%%%%%%%%%%%%%%%%%%%%%%%
%%%%%%%%%%%%%%%%%%%%%%%%% New Proof %%%%%%%%%%%%%%%%%%%%%%%%
%%%%%%%%%%%%%%%%%%%%%%%%%%%%%%%%%%%%%%%%%%%%%%%%%%%%%%%%%%%%%

\subsection{Proof of Proposition \ref{prop:characterization-Luce}}
The necessary part is straightforward. Suppose $\rho$ has a Luce representation with $w\colon X\to \mathbb{R}_{++}$ being the weight function. Then 
\[
A(x,y,S)=\frac{\rho(x,S\setminus y)-\rho(x,S)}{\rho(x,S\setminus y)}=1-\frac{\rho(x,S)}{\rho(x,S\setminus y)}=1-\frac{w(S\setminus y)}{w(S)}=\frac{w(y)}{w(S)}.
\]
Fix an arbitrary ranking $\succ$ over $X$. Suppose $S=\{x_1,x_2,\dots,x_K\}$ such that $x_1\succ x_2\succ\dots\succ x_K$. Then there are $|S|=K$ ordered pairs $(x,y)$ such that $x=l_{\succ}(S\setminus y)$. Those pairs are $(x_K,x_i)$ for $i=1,2,\dots,K-1$ and $(x_{K-1},x_K)$. Hence, 
\[
\sum_{(x,y): x=l_{\succ}(S\setminus y)} A(x,y,S) = \sum_{i=1}^{K-1} A(x_K,x_i,S) + A(x_{K-1},x_K,S)= \sum_{i=1}^{K-1} \frac{w(x_i)}{w(S)} + \frac{w(x_K)}{w(S)} =  \sum_{i=1}^{K} \frac{w(x_i)}{w(S)}=1.
\]

For the sufficiency, suppose that for each arbitrary linear order $\succ$ over alternatives in $X$, we have $\displaystyle \sum_{(x,y): x=l_{\succ}(S\setminus y)} A(x,y,S)=1$ for all choice set $S$ with at least three alternatives. Suppose $S=\{x_1,x_2,\dots,x_K\}$. We will show 
\[
\frac{\rho(x_i,S\setminus x_t)}{\rho(x_i,S)}=\frac{\rho(x_j,S\setminus x_t)}{\rho(x_j,S)} \quad \text{ for all $(i,j,t)$ pairwise distinct.}
\]
Fix $(i,j,t)$. Consider two rankings $\succ_1$ and $\succ_2$ such that
\[
\begin{cases}
    x_t =l_{\succ_1}(S)$ and $x_i= l_{\succ_1}(S\setminus x_t) \\
    x_t =l_{\succ_2}(S)$ and $x_j= l_{\succ_2}(S\setminus x_t)\\
\end{cases}
.
\]
By definition, the worst option in $S$ is the same in rankings 1 and 2, but the second-worst option is different. Under ranking $\succ_1$, the ordered pairs $(x,y)$ such that $x=l_{\succ_1}(S\setminus y)$ are $(x_t,x_p)$ for $p=1,2,\dots,K$ and $p\ne t$ and $(x_i,x_t)$. Using Axiom \ref{axiom:Luce-2}
\[
1=\sum_{(x,y): x=l_{\succ_1}(S\setminus y)} A(x,y,S)= \sum_{p=1, p\ne t}^{K} A(x_t,x_p,S) + A(x_i,x_t,S).
\]
Under ranking $\succ_2$, the ordered pairs $(x,y)$ such that $x=l_{\succ_2}(S\setminus y)$ are $(x_t,x_q)$ for $q=1,2,\dots,K$, $q\ne t$ and $(x_j,x_t)$. Using Axiom \ref{axiom:Luce-2}
\[
1=\sum_{(x,y): x=l_{\succ_2}(S\setminus y)} A(x,y,S)= \sum_{q=1, q\ne t}^{K} A(x_t,x_q,S) + A(x_j,x_t,S).
\]
Hence, it follows that $A(x_i,x_t,S)=A(x_j,x_t,S)$. Using the definition of $A(x,y,S)$, the above equation is equivalent to
\[
\frac{\rho(x_i,S\setminus x_t)-\rho(x_i,S)}{\rho(x_i,S\setminus x_t)}=\frac{\rho(x_j,S\setminus x_t)-\rho(x_j,S)}{\rho(x_j,S\setminus x_t)} \Rightarrow \frac{\rho(x_i,S)}{\rho(x_i,S\setminus x_t)}=\frac{\rho(x_j,S)}{\rho(x_j,S\setminus x_t)}
\]
As $i,j,t$ are chosen arbitrarily, for all $x,y,z\in S$ pairwise distinct, we have 
\[
\frac{\rho(x,S)}{\rho(x,S\setminus z)}=\frac{\rho(y,S)}{\rho(y,S\setminus z)}.
\]
Since the equation above holds for all choice sets $S$ with at least three alternatives, it follows that $\rho$ satisfies IIA and thus has a Luce representation. $\blacksquare$

%%%%%%%%%%%%%%%%%%%%%%%%%%%%%%%%%%%%%%%%%%%%%%%%%%%%%%%%%%%%%%%%%%%
%%%%%%%%%%%%%%%%%%%%%%%%%% A new section %%%%%%%%%%%%%%%%%%%%%%%%%%%
%%%%%%%%%%%%%%%%%%%%%%%%%%%%%%%%%%%%%%%%%%%%%%%%%%%%%%%%%%%%%%%%%%%

\subsection{Proof of Proposition \ref{prop:comparative static-1}}

\noindent \underline{Part (i):} Suppose $\rho$ FOSD$_{\succ}$ $\rho'$. Consider a binary menu $\xy$ with $x\succ y$. By definition, $\rho$ FOSD$_{\succ}$ $\rho'$ implies
\[
\rho(x,\xy)=\alpha+(1-\alpha)\frac{w(x)}{w(x)+w(y)}\ge \rho'(x,\xy)=\alpha'+(1-\alpha')\frac{w(x)}{w(x)+w(y)},
\]
which is equivalent to $(\alpha-\alpha')\frac{w(y)}{w(x)+w(y)}\ge 0$. Consequently, $\alpha\ge \alpha'$. Now, suppose $\alpha\ge \alpha'$. Let $S\in \mathcal{D}$ be an arbitrary menu. Enumerate elements in $S$ as $x_1\succ \dots \succ x_s$, where $s=|S|$. We want to show that for all $i=1,\dots, s$
\begin{eqnarray*}
\sum_{j=1}^i \rho(x_j,S) \ge \sum_{j=1}^i \rho'(x_j,S) \Leftrightarrow  \alpha + (1-\alpha) \frac{\sum_{j=1}^i w(x_j)}{w(S)} &\ge& \alpha' + (1-\alpha') \frac{\sum_{j=1}^i w(x_j)}{w(S)} \\
\Leftrightarrow (\alpha-\alpha') \bigg[1-\frac{\sum_{j=1}^i w(x_j)}{w(S)} \bigg] &\ge& 0,
\end{eqnarray*}
which holds because $\alpha\ge \alpha'$ and $w(S)\ge \sum_{j=1}^i w(x_j)$.

\noindent \underline{Part (ii):} Suppose $\rho$ FOSD$_{\succ}$ $\rho'$. Consider an arbitrary binary menu $\xy$. Suppose $x\succ y$. By definition, $\rho$ FOSD$_{\succ}$ $\rho'$ implies
\[
\rho(x,\xy)=\alpha+(1-\alpha)\frac{w(x)}{w(x)+w(y)}\ge \rho'(x,\xy)=\alpha+(1-\alpha)\frac{w'(x)}{w'(x)+w'(y)},
\]
which is equivalent to $\frac{w(x)}{w(x)+w(y)}\ge \frac{w'(x)}{w'(x)+w'(y)}$ or $\frac{w(x)}{w(y)}\ge \frac{w'(x)}{w'(y)}$. As $x$ and $y$ are chosen arbitrarily, it follows that $\frac{w(x)}{w(y)}\ge \frac{w'(x)}{w'(y)}$ whenever $x\succ y$. Now, suppose $\frac{w(x)}{w(y)}\ge \frac{w'(x)}{w'(y)}$ whenever $x\succ y$. Let $S\in \mathcal{D}$ be an arbitrary menu. Enumerate elements in $S$ as $x_1\succ \dots \succ x_s$, where $s=|S|$. We want to show that for all $i=1,\dots, s$
\begin{eqnarray*}
\sum_{j=1}^i \rho(x_j,S) \ge \sum_{j=1}^i \rho'(x_j,S) \Leftrightarrow  \alpha + (1-\alpha) \frac{\sum_{j=1}^i w(x_j)}{w(S)} &\ge& \alpha + (1-\alpha) \frac{\sum_{j=1}^i w'(x_j)}{w'(S)} \\
\Leftrightarrow (1-\alpha) \bigg[\frac{\sum_{j=1}^i w(x_j)}{w(S)}-\frac{\sum_{j=1}^i w'(x_j)}{w'(S)} \bigg] &\ge& 0.
\end{eqnarray*}
As $1>\alpha$, it is sufficient to show that the difference inside the square bracket is nonnegative. This is equivalent to
\begin{eqnarray*}
    \frac{\sum_{j=1}^i w(x_j)}{w(S)}&\ge& \frac{\sum_{j=1}^i w'(x_j)}{w'(S)}  \\
w'(S) \sum_{j=1}^i w(x_j)&\ge& w(S) \sum_{j=1}^i w'(x_j) \\
    \bigg[\sum_{k=1}^i w'(x_k)+\sum_{k=i+1}^s w'(x_k)\bigg] \sum_{j=1}^i w(x_j)&\ge& \bigg[ \sum_{k=1}^i w(x_k)+\sum_{k=i+1}^s w(x_k) \bigg]\sum_{j=1}^i w'(x_j) \\
        \sum_{k=1}^i w'(x_k)\sum_{j=1}^i w(x_j) +\sum_{k=i+1}^s w'(x_k) \sum_{j=1}^i w(x_j)&\ge& \sum_{k=1}^i w(x_k)\sum_{j=1}^i w'(x_j) + \sum_{k=i+1}^s w(x_k) \sum_{j=1}^i w'(x_j) \\
        \sum_{k=i+1}^s w'(x_k) \sum_{j=1}^i w(x_j)&\ge& \sum_{k=i+1}^s w(x_k) \sum_{j=1}^i w'(x_j) \\
        \sum_{j=1}^i \sum_{k=i+1}^s w(x_j) w'(x_k)&\ge&  \sum_{j=1}^i \sum_{k=i+1}^s w'(x_j)w(x_k) \\
        \sum_{j=1}^i \sum_{k=i+1}^s [w(x_j) w'(x_k)-w'(x_j)w(x_k)] &\ge& 0.
\end{eqnarray*}
Whenever $j<k$, we have $x_j\succ x_k$ and it follows that $\frac{w(x_j)}{w'(x_j)} \ge \frac{w(x_k)}{w'(x_k)}$. Consequently, $w(x_j) w'(x_k)-w'(x_j)w(x_k)\ge 0$. Therefore, $\sum_{j=1}^i \sum_{k=i+1}^s [w(x_j) w'(x_k)-w'(x_j)w(x_k)] \ge 0$. This completes our proof of the Proposition. $\blacksquare$

%%%%%%%%%%%%%%%%%%%%%%%%%%%%%%%%%%%%%%%%%%%%%%%%%%%%%%%%%%%%%%%%%%%
%%%%%%%%%%%%%%%%%%%%%%%%%% A new section %%%%%%%%%%%%%%%%%%%%%%%%%%%
%%%%%%%%%%%%%%%%%%%%%%%%%%%%%%%%%%%%%%%%%%%%%%%%%%%%%%%%%%%%%%%%%%%

\subsection{Proof of Proposition \ref{prop:comparative static-2}}
Suppose $\rho$ FOSD$_{\succ_b}$ $\rho'$. Take arbitrary $x,y$ such that $x\succ_b y$ and $x\succ' y$. Consider binary menu $\xy$. By definition, $\rho$ FOSD$_{\succ_b}$ $\rho'$ implies
\[
\rho(x,\xy)\ge \rho'(x,\xy)=\alpha+(1-\alpha)\frac{w(x)}{w(x)+w(y)}> (1-\alpha)\frac{w(x)}{w(x)+w(y)}.
\]
Consequently, it must be the case that $x\succ y$ because $\rho(x,\xy)=(1-\alpha)\frac{w(x)}{w(x)+w(y)}$ if $y\succ x$. Hence, for arbitrary $x,y$ such that $x\succ_b y$ and $x\succ' y$, we have $x\succ y$. Consequently, $\{\succ',\succ\}$ is single-crossing with respect to $\succ_b$.

Now, suppose $\{\succ',\succ\}$ is single-crossing with respect to $\succ_b$. Let $S\in \mathcal{D}$ be an arbitrary menu. Enumerate elements in $S$ as $x_1\succ_b \dots \succ_b x_s$, where $s=|S|$. We want to show that for all $i=1,\dots, s$
\begin{equation}\label{eq:prop-1}
\sum_{j=1}^i \rho(x_j,S) \ge \sum_{j=1}^i \rho'(x_j,S).
\end{equation}
Let $x_p$ be the $\succ$-best in $S$ and $x_q$ the $\succ'$-best in $S$, for some $p,q\in \{1,\dots,s\}$. If $q<p$ then we have $x_q\succ_b x_p$ (by enumeration) and $x_q\succ' x_p$ (by definition of $x_q$). The single-crossing property then implies that $x_q\succ x_p$, which is a contradiction to the definition of $x_p$. Therefore, $q\ge p$. If $p = q$, the two choice distributions are identical, and equation (\ref{eq:prop-1}) holds trivially with equality. When $p\ne q$, to show (\ref{eq:prop-1}), note that $\rho(x_j,S)=\rho'(x_j,S)=(1-\alpha)\frac{w(x_j)}{w(S)}$ for all $j\ne p,q$ as $x_j$ is neither $\succ$-best nor $\succ'$-best in $S$. Additionally, $\rho(x_p,S)-\rho'(x_p,S)=\rho'(x_q,S)-\rho(x_q,S)= \alpha$, which also implies $\rho(x_p,S)+\rho(x_q,S)=\rho'(x_p,S)+\rho'(x_q,S)$. From here, consider the following two cases.

\smallskip
\textbf{\textit{\underline{Case 1:}}} $i\ge q$ or $i< p$. If $i<p$, then $p,q\not \in \{1,.,i\}$. Consequently, $\sum_{j=1}^i \rho(x_j,S)=\sum_{j=1}^i \rho'(x_j,S)$ because $\rho(x_j,S)=\rho'(x_j,S)$ for all $j\ne p,q$. When $i\ge q$, then $p,q \in \{1,.,i\}$.  Consequently, $\sum_{j=1}^i \rho(x_j,S)=\sum_{j=1}^i \rho'(x_j,S)$ because (1) $\rho(x_j,S)=\rho'(x_j,S)$ for all $j\ne p,q$, and (2) $\rho(x_p,S)+\rho(x_q,S)=\rho'(x_p,S)+\rho'(x_q,S)$ shown above.

\smallskip
\textbf{\textit{\underline{Case 2:}}} $q>i\ge p$. Then $p\in \{1,.,i\}$ but $q\not \in \{1,.,i\}$. In this case, 
\begin{eqnarray*}
\sum_{j=1}^i \rho(x_j,S)-\sum_{j=1}^i \rho'(x_j,S)&=&\sum_{j=1}^i [\rho(x_j,S)- \rho'(x_j,S)]\\
&=&\rho(x_p,S)- \rho'(x_p,S)+ \sum_{j=1, j\ne p}^i [\rho(x_j,S)- \rho'(x_j,S)] =\alpha>0.
\end{eqnarray*}
This completes our proof of the Proposition. $\blacksquare$

%%%%%%%%%%%%%%%%%%%%%%%%%%%%%%%%%%%%%%%%%%%%%%%%%%%%%%%%%%%%%%%%%%%
%%%%%%%%%%%%%%%%%%%%%%%%%% A new section %%%%%%%%%%%%%%%%%%%%%%%%%%%
%%%%%%%%%%%%%%%%%%%%%%%%%%%%%%%%%%%%%%%%%%%%%%%%%%%%%%%%%%%%%%%%%%%
\subsection{Proof of Remark \ref{rem:SST_satisfied}} Suppose $\rho$ has a DST representation $(\alpha,\succ,w)$ with $(\succ,w)$ satisfying Consistency. Take $x,y,z$ pairwise distinct such that $\rho(x,\xy)\ge 1/2$ and $\rho(y,\yz)\ge 1/2$. As $(\succ,w)$ satisfies Consistency, $\rho(x,\xy)\ge 1/2$ implies $x\succ y$ and $\rho(y,\yz)\ge 1/2$ implies $y\succ z$. Hence, $x\succ y\succ z$. Therefore,  
\[
\rho(x,\xz)=\alpha + \frac{(1-\alpha)w(x)}{w(x)+w(z)}; \rho(x,\xy)=\alpha + \frac{(1-\alpha)w(x)}{w(x)+w(y)}; \rho(y,\yz)=\alpha + \frac{(1-\alpha)w(y)}{w(y)+w(z)}.
\]
As $(\succ,w)$ satisfies Consistency, $x\succ y\succ z$ implies that $w(x)>w(y)>w(z)$. Using $w(x)>w(y)>w(z)$, it is straightforward to verify that $\rho(x,\xz)>\rho(x,\xy)$ and $\rho(x,\xz)>\rho(y,\yz)$. Therefore, $\rho(x,\xz)>\max \{\rho(x,\xy),\rho(y,\yz)\}$, which implies that $\rho$ satisfies strong stochastic transitivity. $\blacksquare$

%%%%%%%%%%%%%%%%%%%%%%%%%%%%%%%%%%%%%%%%%%%%%%%%%%%%%%%%%%%%%%%%%%%
%%%%%%%%%%%%%%%%%%%%%%%%%% Prof of Proposition %%%%%%%%%%%%%%%%%%%%
%%%%%%%%%%%%%%%%%%%%%%%%%%%%%%%%%%%%%%%%%%%%%%%%%%%%%%%%%%%%%%%%%%%
\subsection{Proof of Proposition \ref{prop:optimal_list_1}} Slightly abusing the notation, for arbitrary list $l$, let $\rho^l_{i,1}$ and $\rho^l_{i,2}$ denote the choice of customer $i$ under System $1$ and System $2$, respectively.

\noindent \textbf{\underline{Part (i):}} Let $x_a$ be the product with highest perceived utility under the optimal list $l^*$. Suppose $l^*(x_a)=m$. We first show $m\le a$. Proof by contradiction. Suppose $l^*(x_a)=m>a$. This implies that there exists product $x_j$ with $j>a$ such that $l^*(x_j)<l^*(x_a)$. Swap the positions of products $x_j$ and $x_a$ to obtain a new list $l'$. We prove $\rho^{l'}_i \, \, \text{FOSD}_{\succ_\tau} \, \, \rho^{l^*}_i$ by showing (1) $\rho^{l^*}_i(x_a,S)< \rho^{l'}_i(x_a,S)$, (2) $\rho^{l^*}_i(x_j,S)> \rho^{l'}_i(x_j,S)$, and (3) $\rho^{l^*}(x_t,S) = \rho^{l'}(x_t,S)$ for all $t\ne j,a$. 

As the position of product $x_t$ with $t\ne a,j$ remains unchanged, and product $x_a$ is listed earlier in $l'$ than in $l^*$, $x_a$ remains to be the product with highest perceived utility in the new list. Thus,
\begin{eqnarray*}
    \rho^{l^*}_i(x_a,S)=\alpha_i \rho^{l^*}_{i,2} (x_a,S) + (1-\alpha_i)\rho^{l^*}_{i,1}(x_a,S) &=& \alpha_i + (1-\alpha_i)\rho^{l^*}_{i,1}(x_a,S) \\
    &<& \alpha_i + (1-\alpha_i)\rho^{l'}_{i,1}(x_a,S) \\
    &=& \alpha_i \rho^{l'}_{i,2}(x_a,S) + (1-\alpha_i)\rho^{l'}_{i,1}(x_a,S) = \rho^{l'}_i(x_a,S).
\end{eqnarray*}
The inequality comes from the fact that $x_a$ is listed earlier in $l'$ compared to in $l^*$; therefore, its salience value is higher and its choice probability is higher in $l'$ when System 1 operates. By using a similar logic, we have $\rho^{l^*}_i(x_j,S)> \rho^{l'}_i(x_j,S)$ (because choice of $x_j$ comes from System 1 only and $x_j$ is listed earlier in $l^*$ than in $l'$) and $\rho^{l^*}_i(x_t,S) = \rho^{l'}_i(x_t,S)$ for all $t\ne j,a$ (because choice of $x_t$ comes from System 1 only and its position is unchanged in $l'$). It follows that $\rho^{l'}_i \, \, \text{FOSD}_{\succ_\tau} \, \, \rho^{l^*}_i$. As this happens to every customer $i$, and $\rho^{l^*}$ and $\rho^{l'}$ are population averages, we have $\rho^{l'} \, \, \text{FOSD}_{\succ_\tau} \, \, \rho^{l^*}$. Consequently, $f_\tau(\rho^{l'})>f_\tau(\rho^{l^*})$ because $f_\tau$ satisfies monotonicity. This implies that the platform's utility strictly increases under the new list, contradicting the optimality of $l^*$. Hence, $l^*(x_a)=m\le a$.

Now, we show that $l^*(x_i)=i$ for all $i$ such that $1\le i\le m-1$. There is nothing to show when $m=1$, i.e., when product $x_a$ is placed at the top of the optimal list. Suppose $m\ge 2$, which implies $a\ge 2$ as $a\ge m$. We prove the claim for $i=1$; the other values of $i$ are similar. Proof by contradiction. Suppose that $m\ge 2$ and $l^*(x_1)> 1=l^*(x_k)$ for some value of $k>1$. Swap the positions of $x_1$ and $x_k$ to obtain a new list $l''$. Note that under the new list, the product with the highest perceived utility is either unchanged or becomes $x_1$. This is because swapping $x_1$ and $x_k$ increases the perceived utility of $x_1$, decreases that of $x_k$, and leaves the perceived utility of every other $x_t$ ($t\ne 1,k$) unchanged. Hence, 
\begin{align*}
\rho^{l^*}_i(x_1,S)&<\rho^{l''}_i(x_1,S);\quad
\rho^{l^*}_i(x_a,S)\ge\rho^{l''}_i(x_a,S)\\
\rho^{l^*}_i(x_k,S)&>\rho^{l''}_i(x_k,S);\quad
\rho^{l^*}_i(x_t,S)=\rho^{l''}_i(x_t,S)\quad\text{for all }t\neq 1,a,k.
\end{align*}
It follows that $\rho^{l''}_i \, \, \text{FOSD}_{\succ_\tau} \, \, \rho^{l^*}_i$. Consequently, $\rho^{l''} \, \, \text{FOSD}_{\succ_\tau} \, \, \rho^{l^*}$. It follows that $f_\tau(\rho^{l''})>f_\tau(\rho^{l^*})$. Hence, the new list generates higher utility for the platform, contradicting the optimality of $l^*$. Therefore, if $m\ge 2$ then $l^*(x_1)=1$. 

Showing that $l^*(x_i)=i+1$ for all $i$ such that $m\le i \le a-1$ is similar (by strong induction) so we omit a formal proof. 

\smallskip
\noindent \textbf{\underline{Part (ii):}} First we show that $l^*(x_h)\ge m$. Note that if $l^*(x_h)<m=l^*(x_a)$ then $x_h$ and $x_a$ are different products and $x_h$ is listed earlier than $x_a$. In this case, $x_a$ cannot be the product with the highest perceived utility because the observed utility of $x_h$ is higher than that of $x_a$. This is a contradiction to the definition of $x_a$. Therefore, $l^*(x_h)\ge m$. 

Now suppose there exist products $x_b$ and $x_c$ such that $l^*(x_b)>l^*(x_c)> l^*(x_h)$ and $b<c$ (implying $\tau_{x_b}>\tau_{x_c}$). Swap the positions of $x_b$ and $x_c$ to obtain a new list $l'''$. Note that the swap does not change the product with the highest perceived utility. This is because the perceived utility of a product listed from $x_h$ onward is at most that of $x_h$, which is weakly less than that of $x_a$ by definition of $x_a$. Hence, using logic similar to that in the proof of part (i), we have
\begin{align*}
\rho^{l^*}_i(x_b,S)&<\rho^{l'''}_i(x_b,S);\quad \rho^{l^*}_i(x_c,S)>\rho^{l'''}_i(x_c,S);\quad  
\rho^{l^*}_i(x_t,S)=\rho^{l'''}_i(x_t,S)\quad\text{for all }t\neq b,c.
\end{align*}
Consequently, $\rho_i^{l'''} \, \, \text{FOSD}_{\succ_\tau} \, \, \rho_i^{l^*}$ and it follows $\rho^{l'''} \, \, \text{FOSD}_{\succ_\tau} \, \, \rho^{l^*}$. Hence, $f_\tau(\rho^{l'''})>f_\tau(\rho^{l^*})$. This implies that the new list generates higher utility for the platform, contradicting the optimality of $l^*$. Therefore, $l^*(x)<l^*(y)$ whenever $\tau_x>\tau_y$ and $\min\{l^*(x),l^*(y)\} > l^*(x_h)$. 

The last thing to show is that $a\le h$. Suppose not, so that $a>h$. Then
$\tau_{x_h}>\tau_{x_a}$. By the previous argument,
$l^*(x_h)\ge m=l^*(x_a)$; since $x_h\neq x_a$, we have
$l^*(x_h)>m$. Swap the positions of
$x_h$ and $x_a$. The same swap argument as above applies: under the new
list, $x_h$ becomes the perceived-utility maximizer, its System-1
salience increases, and the salience of $x_a$ decreases, while the salience of all other
products is unchanged. Thus, for every customer $i$, choice probability
is shifted from $x_a$ to $x_h$. Since $\tau_{x_h}>\tau_{x_a}$, choice under the new list first-order stochastically dominates that choice under $l^*$ with respect
to $\succ_\tau$. By monotonicity of $f_\tau$, the platform has a strictly higher utility under the new list. This contradicts the
optimality of $l^*$; therefore $a\le h$. This completes our proof. $\blacksquare$

%%%%%%%%%%%%%%%%%%%%%%%%%%%%%%%%%%%%%%%%%%%%%%%%%%%%%%%%%%%%%%%%%%%
%%%%%%%%%%%%%%%%%%%%%%%%%% Prof of Proposition %%%%%%%%%%%%%%%%%%%%
%%%%%%%%%%%%%%%%%%%%%%%%%%%%%%%%%%%%%%%%%%%%%%%%%%%%%%%%%%%%%%%%%%%
\subsection{Proof of Proposition \ref{prop:swaps index}} Suppose $\rho$ and $\rho'$ have consistent DST representations $(\alpha,\succ,w)$ and $(\alpha',\succ,w')$, respectively.  

\noindent \underline{\textit{Part (ii):}} The proof uses Theorem 1 in \cite{apesteguia2015measure}. Let $P$ be an arbitrary linear order over $X$. In our framework, given that $\sigma\colon \mathcal{D}\to (0,1)$ is assumed to have full support, Theorem 1 in \cite{apesteguia2015measure} states that if $xPy$ implies that $\sum_{S\supseteq \xy} \sigma(S)\rho(x,S)>\sum_{S\supseteq \xy} \sigma(S)\rho(y,S)$, then $P$ is the unique swaps preference of $\rho$. Hence, to show that $\succ_{swaps}(\rho)\, \equiv \, \succ$, it is sufficient to prove that $x\succ y$ implies that $\sum_{S\supseteq \xy} \sigma(S) \rho(x,S)>\sum_{S\supseteq \xy} \sigma(S) \rho(y,S)$. Showing that $\succ_{swaps}(\rho')\, \equiv \, \succ$ is similar so we omit a formal proof. Take an arbitrary $S\supseteq \xy$. As $x\succ y$ and $(\succ,w)$ satisfies Consistency by assumption, it follows that $w(x)>w(y)$. Also, $x\succ y$ implies that $y$ cannot be the $\succ$-best element in $S$. Hence,
\[
\rho(x,S)\ge (1-\alpha)\frac{w(x)}{w(S)} > (1-\alpha)\frac{w(y)}{w(S)} = \rho(y,S).
\]
Consequently, $\sum_{S\supseteq \xy} \sigma(S) \rho(x,S)>\sum_{S\supseteq \xy} \sigma(S) \rho(y,S)$.

\noindent \underline{\textit{Part (i):}} Using the fact that $\succ_{swaps}(\rho)\, \equiv \, \succ$, we have
\[
I_{swaps}(\rho)=I(\rho,\succ)=\sum_{S\in \mathcal{D}}\sum_{x\in S} \sigma(S) \rho(x,S) |\{y: y\in S, y\succ x\}|
\]
Term $\sigma(S)\rho(x,S) |\{y: y\in S, y\succ x\}|$ vanishes when $x$ is the $\succ$-best option in $S$ because the set $\{y: y\in S, y\succ x\}$ is empty. Recall that the $\succ$-best option in $S$ is denoted by $b_{\succ}(S)$. Then  
\begin{eqnarray*}
I_{swaps}(\rho)&=&\sum_{S\in \mathcal{D}} \sum_{x\in S; x\ne b_{\succ}(S)} \sigma(S) \rho(x,S) |\{y: y\in S, y\succ x\}| \\
&=&\sum_{S\in \mathcal{D}} \sum_{x\in S; x\ne b_{\succ}(S)} \sigma(S)\frac{(1-\alpha)w(x)}{w(S)}|\{y: y\in S, y\succ x\}|,
\end{eqnarray*}
where the second equation comes from the DST representation of $\rho$ (note that $x$ is not the $\succ$-best in $S$). Similarly, 
\[
I_{swaps}(\rho')=\sum_{S\in \mathcal{D}} \sum_{x\in S; x\ne b_{\succ}(S)} \sigma(S) \frac{(1-\alpha')w'(x)}{w'(S)}|\{y: y\in S, y\succ x\}|
\]
When $w=w'$, it is straightforward that $I_{swaps}(\rho)<I_{swaps}(\rho')$ if and only if $\alpha>\alpha'$. 

Now, suppose $\frac{w(x)}{w'(x)}\ge\frac{w(y)}{w'(y)}$ whenever $x\succ y$ and suppose $\alpha>\alpha'$. We prove that $I_{swaps}(\rho)<I_{swaps}(\rho')$. As $1-\alpha<1-\alpha'$, it is sufficient to show that for a fixed menu $S\in \mathcal{D}$
\[
\sum_{x\in S; x\ne b_{\succ}(S)} \frac{w(x)}{w(S)}|\{y: y\in S, y\succ x\}|\le \sum_{x\in S; x\ne b_{\succ}(S)}\frac{w'(x)}{w'(S)}|\{y: y\in S, y\succ x\}|.
\]
Suppose $S=\{x_1,x_2,\dots,x_s\}$ with $s=|S|$ and $x_1\succ x_2 \succ \dots \succ x_s$. The inequality above is equivalent to
\begin{eqnarray*}
  \frac{\sum_{i=1}^s w(x_i) (i-1)}{w(S)} &\le& \frac{\sum_{i=1}^s w'(x_i) (i-1)}{w'(S)}  \\
  \Leftrightarrow w'(S) \sum_{i=1}^s w(x_i) (i-1) &\le & w(S) \sum_{i=1}^s w'(x_i) (i-1)  \\
   \Leftrightarrow \sum_{(i,j): i<j}(i-j)[w(x_i)w'(x_j)-w'(x_i)w(x_j)] &\le& 0.
\end{eqnarray*}
We have $x_i\succ x_j$ whenever $i<j$ and it follows that $\frac{w(x_i)}{w'(x_i)} \ge \frac{w(x_j)}{w'(x_j)}$. Consequently, $w(x_i)w'(x_j)-w'(x_i)w(x_j)\ge 0$. Therefore, $\sum_{(i,j): i<j}(i-j)[w(x_i)w'(x_j)-w'(x_i)w(x_j)] \le 0$. This completes our proof of the Proposition. $\blacksquare$

%%%%%%%%%%%%%%%%%%%%%%%%%%%%%%%%%%%%%%%%%%%%%%%%%%%%%%%%%%%%%%%%%%%
%%%%%%%%%%%%%%%%%%%%%%%%%% A new section %%%%%%%%%%%%%%%%%%%%%%%%%%%
%%%%%%%%%%%%%%%%%%%%%%%%%%%%%%%%%%%%%%%%%%%%%%%%%%%%%%%%%%%%%%%%%%%

\subsection{Proof of Proposition \ref{prop:associated choice function}}
First, suppose $\rho$ has a DST-PA representation. We show that 
\begin{alignat*}{3}
    \rho^*(x,S)&= \alpha+(1-\alpha)\frac{w(x)}{\sum_{y\in S} w(y)} &&\text{ , if $x\in S \in \mathcal{D}$ and $x$ is $\succ$-best in $S$} \\
    \rho^*(y,S)&= (1-\alpha)\frac{w(y)}{\sum_{y\in S} w(y)} &&\text{ , if $y\in S \in \mathcal{D}$ and $y$ is not $\succ$-best in $S$}
\end{alignat*}
so $\rho^*$ has a DST representation $(\alpha,\succ,w)$. The proofs of Proposition \ref{prop:associated choice function} and Theorem \ref{thm:characterization DST-PA} use the following result from \cite{shafer1976mathematical}.

\begin{lemma} [M\"obius Inversion] \label{lemma:Mobius} Suppose $\Theta$ is a finite set. Suppose $f$ and $g$ are functions on $2^{\Theta}$. Then $\displaystyle f(A)=\underset{B\subseteq A}\sum g(B) \text{ for all $A\subseteq \Theta$} \Leftrightarrow g(A)=\underset{B\subseteq A}\sum (-1)^{|A\setminus B|} f(B)$ for all $A\subseteq \Theta$.
\end{lemma}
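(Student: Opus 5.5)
The statement to prove is the Möbius inversion lemma (Lemma \ref{lemma:Mobius}). I would argue directly by substitution, proving each direction separately. The one combinatorial fact needed is the standard alternating-sum identity: for finite sets $C\subseteq A$,
\[
\sum_{B:\,C\subseteq B\subseteq A}(-1)^{|A\setminus B|}=\begin{cases}1 & \text{if } C=A,\\ 0 & \text{if } C\subsetneq A.\end{cases}
\]
To see this, write $B=C\cup E$ with $E\subseteq A\setminus C$. Then $|A\setminus B|=|A\setminus C|-|E|$, so the sum equals $\sum_{k=0}^{m}\binom{m}{k}(-1)^{m-k}=(1-1)^m$, where $m=|A\setminus C|$. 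This is $0$ for $m\ge 1$ and $1$ for $m=0$.

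\textbf{Forward direction.} Assume $f(A)=\sum_{B\subseteq A}g(B)$ for all $A\subseteq\Theta$. Substitute this into the right-hand side to get
\[
\sum_{B\subseteq A}(-1)^{|A\setminus B|}f(B)=\sum_{B\subseteq A}(-1)^{|A\setminus B|}\sum_{C\subseteq B}g(C).
\]
All sums are finite, so I can exchange the order of summation. This gives
\[
\sum_{C\subseteq A}g(C)\sum_{B:\,C\subseteq B\subseteq A}(-1)^{|A\setminus B|}.
\]
By the identity, the inner sum kills every $C\ne A$, so the whole expression equals $g(A)$.

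\textbf{Reverse direction.} Assume $g(A)=\sum_{B\subseteq A}(-1)^{|A\setminus B|}f(B)$ for all $A$. Then
\[
\sum_{B\subseteq A}g(B)=\sum_{B\subseteq A}\sum_{C\subseteq B}(-1)^{|B\setminus C|}f(C)=\sum_{C\subseteq A}f(C)\sum_{B:\,C\subseteq B\subseteq A}(-1)^{|B\setminus C|}.
\]
Reindex with $B=C\cup E$, $E\subseteq A\setminus C$. The inner sum becomes $\sum_{E}(-1)^{|E|}=(1-1)^{|A\setminus C|}$. It vanishes unless $C=A$, so the expression equals $f(A)$.

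The only step that needs care is the sign bookkeeping in the two alternating sums, $(-1)^{|A\setminus B|}$ in one and $(-1)^{|B\setminus C|}$ in the other. Both reduce to the binomial expansion of $(1-1)^m$, so there is no real obstacle.
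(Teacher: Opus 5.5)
Your proof is correct. Exchanging the finite sums and collapsing each inner sum to $(1-1)^{|A\setminus C|}$ handles both directions, and your sign bookkeeping is right: $|A\setminus B|=|A\setminus C|-|E|$ in the forward direction and $|B\setminus C|=|E|$ in the reverse. The paper gives no proof of its own and simply cites \cite{shafer1976mathematical}; your argument is the standard proof there, which rests on $\sum_{E\subseteq Y}(-1)^{|E|}=0$ for any nonempty finite $Y$.
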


In what follows, we consider a menu $S$ with at least two elements. Additionally, for an arbitrary element $t\in S$, let $S_t$ be the set of all subsets of $S$ that contain $t$. Also, let $S_{t,\text{best}}$ be the set of all subsets of $S$ in which $t$ is the best-ranked item.

\medskip
\noindent \textbf{\underline{Step 1:}} Suppose $x$ is the $\succ$-best element in $S$. Then $x$ is also the $\succ$-best element in any subset of $S$ that includes $x$. To use Lemma \ref{lemma:Mobius}, let $\Theta=S\setminus x$. By definitions of $x$ and $\Theta$, $x$ is the best-ranked item in $A\cup x$ for all $A\subseteq \Theta$. For each $A\subseteq \Theta$ and $B\subseteq \Theta$, let $f(A)=O(x,A\cup x)$ and $g(B)=\frac{1}{\pi(\emptyset)}\Big[\alpha\pi(B\cup x)+(1-\alpha)\pi(B\cup x)\frac{w(x)}{w(B\cup x)}\Big]$. For all $A\subseteq \Theta$, we have
\begin{eqnarray*}
f(A)=O(x,A\cup x)&=&\frac{1}{\pi(\emptyset)} \Bigg[\alpha \underset{T\in (A\cup x)_{x,\text{best}}}{\sum} \pi(T) +(1-\alpha) \underset{T\in (A\cup x)_{x}}{\sum} \pi(T)\frac{w(x)}{w(T)}\Bigg] \\
&=& \frac{1}{\pi(\emptyset)}\Bigg[\alpha\sum_{B:B\subseteq A}\pi(B\cup x)+(1-\alpha)\sum_{B:B\subseteq A}\pi(B\cup x)\frac{w(x)}{w(B\cup x)}\Bigg]\\
&=& \frac{1}{\pi(\emptyset)}\sum_{B:B\subseteq A} \Bigg[\alpha\pi(B\cup x)+(1-\alpha)\pi(B\cup x)\frac{w(x)}{w(B\cup x)}\Bigg]\\
&=& \sum_{B:B\subseteq A} g(B),
\end{eqnarray*}
The first equation comes from the definition of $O(x,A\cup x)$. The second equation results from two facts. First, $x$ is best in $S$ so $x$ is best in $T\subseteq A\cup x$ if and only if $T=B\cup x$ for some $B\subseteq A$. Second, any subset of $A\cup x$ that includes $x$ can be written as $B\cup x$ for some $B\subseteq A$. The last equation comes from the definition of the $g$ function. Now, Lemma \ref{lemma:Mobius} applies and it implies that
\begin{equation}\label{eq:Mobius}
g(A)=\underset{B\subseteq A}\sum (-1)^{|A\setminus B|} f(B) \quad \text{ for all $A\subseteq \Theta$.}
\end{equation}
Replace $A$ by $S\setminus x$ in the equation above, equation (\ref{eq:Mobius}) becomes
\begin{eqnarray*}
    g(S\setminus x)&=&\underset{B\subseteq (S\setminus x)}\sum (-1)^{|(S\setminus x)\setminus B|} f(B) \\
\Rightarrow  \frac{1}{\pi(\emptyset)}\Big[\alpha\pi(S)+(1-\alpha)\pi(S)\frac{w(x)}{w(S)}\Big]&=&\underset{B:B\subseteq (S\setminus x)}\sum (-1)^{|(S\setminus x)\setminus B|} O(x,B\cup x) \\
\Leftrightarrow  \frac{\pi(S)}{\pi(\emptyset)}\Big[\alpha+(1-\alpha)\frac{w(x)}{w(S)}\Big]&=&\underset{B:B\subseteq (S\setminus x)}\sum (-1)^{|(S\setminus x)\setminus B|} O(x,B\cup x) \\
\Leftrightarrow  \frac{\pi(S)}{\pi(\emptyset)}\Big[\alpha+(1-\alpha)\frac{w(x)}{w(S)}\Big]&=&\underset{T:T\in S_x}\sum (-1)^{|S\setminus T|} O(x,T),
\end{eqnarray*}
where the last equation uses the fact that for each $T\in S_x$, then $T\setminus x$ is a subset of $S\setminus x$. Here $S_x$ is the set of subsets of $S$ that contain $x$. Now, using the definition of the \textit{associated random choice function} $\rho^*$, we have
\[
    \Rightarrow \rho^*(x,S)=\frac{\pi(\emptyset)}{\pi(S)}\underset{T:T\in S_x}\sum (-1)^{|S\setminus T|} O(x,T)=\alpha+(1-\alpha)\frac{w(x)}{w(S)}.
\]

\medskip
\noindent \textbf{\underline{Step 2:}} Suppose $y$ is not $\succ$-best in $S$. Let $x$ be the $\succ$-best in $S$. It follows $x\succ y$. Consider an arbitrary $A\subseteq S\setminus \xy$. By definition
\begin{eqnarray*}
    O(y,A\cup \xy)&=&\frac{1}{\pi(\emptyset)}\Bigg[\alpha \underset{T\in (A\cup \xy)_{y,\text{best}}}{\sum} \pi(T) +(1-\alpha) \underset{T\in (A\cup \xy)_{y}}{\sum} \pi(T)\frac{w(y)}{w(T)}\Bigg] \\
    O(y,A\cup y) &=&\frac{1}{\pi(\emptyset)}\Bigg[\alpha \underset{T\in (A\cup y)_{y,\text{best}}}{\sum} \pi(T) +(1-\alpha) \underset{T\in (A\cup y)_{y}}{\sum} \pi(T)\frac{w(y)}{w(T)}\Bigg]
\end{eqnarray*}
We simplify the difference $O(y,A\cup \xy)-O(y,A\cup y)$. Note that $y$ cannot be the $\succ$-best item if $x$ is available. This implies that $(A\cup \xy)_{y,\text{best}}=(A\cup y)_{y,\text{best}}$. It follows 
\begin{equation}\label{eq: term 1}
 \alpha \underset{T\in (A\cup \xy)_{y,\text{best}}}{\sum} \pi(T)=\alpha \underset{T\in (A\cup y)_{y,\text{best}}}{\sum}\pi(T)   
\end{equation}
Note that any subset of $A\cup y$ that includes $y$ is also a subset of $A\cup \xy$. Meanwhile, a subset of $A\cup \xy$ that includes $x$ cannot be a subset of $A\cup y$. Therefore, we can write
\begin{eqnarray}\label{eq: term 2}
\underset{T\in (A\cup\xy)_{y}}{\sum} \pi(T)\frac{w(y)}{w(T)}&=&\underset{T\in (A\cup y)_{y}}{\sum} \pi(T)\frac{w(y)}{w(T)}+\underset{T:T\in (A\cup \xy)_y; x\in T}{\sum} \pi(T)\frac{w(y)}{w(T)} \nonumber \\
&=&\underset{T\in (A\cup y)_{y}}{\sum} \pi(T)\frac{w(y)}{w(T)}+\underset{B:B\subseteq A}{\sum} \pi(B\cup \xy)\frac{w(y)}{w(B\cup \xy)},
\end{eqnarray}
where equation (\ref{eq: term 2}) comes from the fact that for each $T\subseteq (A\cup \xy)_y$ such that $x\in T$, there exists a unique $B\subseteq A$ such that $T=B\cup \xy$. Using equations (\ref{eq: term 1}) and (\ref{eq: term 2}), we have
\begin{equation}\label{eq:thm}
 O(y,A\cup \xy) -O(y,A\cup y) =\frac{1}{\pi(\emptyset)} \Bigg[(1-\alpha) \underset{B:B\subseteq A}{\sum}\pi(B\cup \xy)\frac{w(y)}{w(B\cup \xy)} \Bigg].
\end{equation}

To use Lemma \ref{lemma:Mobius}, let $\Theta^1=S\setminus \xy$. By definition of $\Theta^1$, $x$ is the best-ranked item in $A\cup \xy$ for all $A\subseteq \Theta^1$. For each $A\subseteq \Theta^1$ and $B\subseteq \Theta^1$, let $f^1(A)=O(y,A\cup \xy)-O(y,A\cup \y)$ and $g^1(B)=\frac{1}{\pi(\emptyset)}\Big[(1-\alpha)\pi(B\cup \xy)\frac{w(y)}{w(B\cup \xy)}\Big]$. Then equation (\ref{eq:thm}) implies
\[
f^1(A)=\sum_{B:B\subseteq A} g^1(B)\quad \text{ for all $A\subseteq \Theta^1$}.
\]
Applying M\"obius Inversion (Lemma \ref{lemma:Mobius}), we have
\[
g^1(A)=\sum_{B:B\subseteq A} (-1)^{|A\setminus B|} f^1(B)\quad \text{ for all $A\subseteq \Theta^1$}.
\]
Using definitions of $g^1$ and $f^1$, the equation above is equivalent to
\begin{equation} \label{eq:ydiff}
\frac{(1-\alpha)\pi(A\cup \xy)\frac{w(y)}{w(A\cup \xy)}}{\pi(\emptyset)}=\underset{B:B\subseteq A}{\sum}(-1)^{|A\setminus B|}[O(y,B\cup \xy)-O(y,B\cup y)] \quad \forall A\subseteq \Theta^1.
\end{equation}
Replace $A=S\setminus \xy$ in the equation above, equation (\ref{eq:ydiff}) becomes
\begin{eqnarray}
\frac{(1-\alpha)\pi(S)\frac{w(y)}{w(S)}}{\pi(\emptyset)}&=&\underset{B:B\subseteq (S\setminus \xy)}{\sum}(-1)^{|(S\setminus \xy)\setminus B|}[O(y,B\cup \xy)-O(y,B\cup y)]  \nonumber \\
&=& \underset{T:T\in S_{xy}}{\sum}(-1)^{|S\setminus T|}[O(y,T)-O(y,T\setminus x)], \label{eq:ydiff3}
\end{eqnarray}
where $S_{xy}$ in equation (\ref{eq:ydiff3}) is the set of all subsets of $S$ that include both $x$ and $y$. This equation uses the fact that for all $T\in S_{xy}$, there exists a unique $B\subseteq S\setminus \xy$ such that $T=B\cup \xy$. We handle the sum $\underset{T:T\in S_{xy}}{\sum}(-1)^{|S\setminus T|}[O(y,T)-O(y,T\setminus x)]$ by using the following Lemma.

\begin{lemma} \label{lemma:identity}For two arbitrary elements $z,t$ and any set $S\supseteq \{z,t\}$ 
\begin{equation} \label{eq:identity}
       \underset{T: T\in S_t}{\sum}(-1)^{|S\setminus T|}O(t,T)= \underset{T:T\in S_{zt}}{\sum}(-1)^{|S\setminus T|}[O(t,T)-O(t,T\setminus z)],
\end{equation}
where $S_t$ is the set of all subsets of $S$ that contain $t$ and $S_{zt}$ is the set of all subsets of $S$ that contain both $z$ and $t$.
\end{lemma}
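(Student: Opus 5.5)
The identity (\ref{eq:identity}) is purely combinatorial: it uses nothing about the structure of $O$ beyond its being a function of the pair $(t,T)$ with $t\in T$. The plan is to split the left-hand sum over $S_t$ according to whether $T$ contains $z$, and then match terms.

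First, the split. Every $T\in S_t$ either contains $z$, in which case $T\in S_{zt}$, or does not. Write $S_t\setminus S_{zt}$ for the family of subsets of $S$ that contain $t$ but not $z$. The map $T\mapsto T\setminus z$ is a bijection from $S_{zt}$ onto $S_t\setminus S_{zt}$. Its inverse is $T'\mapsto T'\cup z$. It is well defined because $t\neq z$, so removing $z$ keeps $t$ in the set.

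Second, the signs. Since $z\in T$, we have $|S\setminus (T\setminus z)| = |S\setminus T|+1$, so
\[
(-1)^{|S\setminus (T\setminus z)|}=-(-1)^{|S\setminus T|}.
\]
Reindexing the sum over $S_t\setminus S_{zt}$ by $T'=T\setminus z$ with $T\in S_{zt}$ therefore gives
\[
\sum_{T'\in S_t\setminus S_{zt}}(-1)^{|S\setminus T'|}O(t,T') = -\sum_{T\in S_{zt}}(-1)^{|S\setminus T|}O(t,T\setminus z).
\]

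Finally, add this to the $S_{zt}$ part of the left-hand side, $\sum_{T\in S_{zt}}(-1)^{|S\setminus T|}O(t,T)$. The two sums combine term by term into the right-hand side of (\ref{eq:identity}).

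The only delicate point is the edge case $T=\{z,t\}$, where $T\setminus z=\{t\}$. This is a singleton, so $O(t,\{t\})$ must be defined. It is: its value $\rho(t,\{t\})/\rho(a^*,\{t\})$ is given by the DST-PA formula. No other obstacle is expected, since the identity holds for an arbitrary function $O$ and needs neither M\"obius inversion nor the DST structure.
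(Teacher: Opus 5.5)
Your proof is correct and is essentially the paper's argument. The paper likewise partitions $S_t$ into $S_{zt}$ and $(S\setminus z)_t$, reindexes the latter via the bijection $T\mapsto T\setminus z$, and uses the sign flip $(-1)^{|S\setminus(T\setminus z)|}=-(-1)^{|S\setminus T|}$; like you, it implicitly takes $z\neq t$ and treats $O$ as an arbitrary function.
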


\begin{proof} Partition the set of all subsets of $S$ that contain $t$ into two subsets. One includes all subsets that include $z$ and the other includes all subsets that do not include $z$. The former is $S_{zt}$ and the latter is $(S\setminus z)_t$. Mathematically, $S_t=S_{zt}\cup (S\setminus z)_t$ and $S_{zt}\cap (S\setminus z)_t=\emptyset$. Thus,
\[
\underset{T: T\in S_t}{\sum}(-1)^{|S\setminus T|}O(t,T)=\underset{T: T\in S_{zt}}{\sum}(-1)^{|S\setminus T|}O(t,T)+\underset{T: T\in (S\setminus z)_t}{\sum}(-1)^{|S\setminus T|}O(t,T)
\]
For every $T\in (S\setminus z)_t$, there exists a unique $T^\prime \in S_{zt}$ such that $T=T^\prime \setminus z$. Additionally, for every $T^\prime \in S_{zt}$, there exists unique $T\in (S\setminus z)_t$ such that $T=T^\prime \setminus z$. Therefore, 
\[
\underset{T: T\in (S\setminus z)_t}{\sum}(-1)^{|S\setminus T|}O(t,T)=-\underset{T':T'\in S_{zt}}{\sum}(-1)^{|S\setminus T'|}O(t,T'\setminus z).
\]
Hence,
\begin{eqnarray*}
    \underset{T: T\in S_t}{\sum}(-1)^{|S\setminus T|}O(t,T)&=&\underset{T: T\in S_{zt}}{\sum}(-1)^{|S\setminus T|}O(t,T)-\underset{T':T'\in S_{zt}}{\sum}(-1)^{|S\setminus T'|}O(t,T'\setminus z)\\
    &=& \underset{T:T\in S_{zt}}{\sum}(-1)^{|S\setminus T|}[O(t,T)-O(t,T\setminus z)].
\end{eqnarray*}
This proves the identity. $\blacksquare$

\smallskip
\noindent Now, from equations (\ref{eq:ydiff3}) and (\ref{eq:identity}), 
\begin{alignat*}{2}
    \frac{\pi(S)}{\pi(\emptyset)}(1-\alpha)\frac{w(y)}{w(S)}&=\underset{T:T\in S_y}{\sum}(-1)^{|S\setminus T|}O(y,T) &&\text{ , if $y$ is not $\succ$-best in $S$}\\
    \Rightarrow \rho^*(y,S)&= (1-\alpha)\frac{w(y)}{w(S)} &&\text{ , if $y$ is not $\succ$-best in $S$,}
\end{alignat*}
where the second equation comes from the definition of $\rho^*$. This completes the proof of the Proposition. $\blacksquare$ 
\end{proof}

%%%%%%%%%%%%%%%%%%%%%%%%%%%%%%%%%%%%%%%%%%%%%%%%%%%%%%%%%%%%%
%%%%%%%%%%%%%%%%%%%%%%%%% New Proof %%%%%%%%%%%%%%%%%%%%%%%%
%%%%%%%%%%%%%%%%%%%%%%%%%%%%%%%%%%%%%%%%%%%%%%%%%%%%%%%%%%%%%

\subsection{Proof of Theorem \ref{thm:characterization DST-PA}}
For the only-if part, see the proof of Proposition \ref{prop:associated choice function} for details. For the if part, suppose there exists a DST $(\alpha,\succ,w)$ representation of $\rho^*$. This means 
\begin{alignat*}{3}
    \rho^*_{\alpha,\succ,w}(x,S)&= (1-\alpha)\frac{w(x)}{w(S)}+\alpha &&\text{ , if $x\in S$ and $x$ is $\succ$-best in $S$} \\
    \rho^*_{\alpha,\succ,w}(y,S)&= (1-\alpha)\frac{w(y)}{w(S)} &&\text{ , if $y\in S$ and $y$ is not $\succ$-best in $S$}
\end{alignat*}
Define mapping $\pi\colon \mathcal{D}\cup\{\emptyset\} \to \mathbb{R}_{++}$ as
\[
    \pi(\emptyset)=\Bigg[\underset{S\in \mathcal{D}\cup \{\emptyset\}}{\sum}\underset{T\subseteq S}{\sum}\frac{(-1)^{|S\setminus T|}}{\rho(a^*,T)}\Bigg]^{-1} \quad \text{ and } \quad\pi(S)=\pi(\emptyset)\underset{T\subseteq S}{\sum}\frac{(-1)^{|S\setminus T|}}{\rho(a^*,T)} \quad \text{ for all $S\in \mathcal{D}$.}
\]
By Block-Marschak inequality on the default option (Axiom \ref{axiom:BM}), $\pi(S)>0$ for all $S \in \mathcal{D}\cup\{\emptyset\}$. Thus, by construction $\pi(S)\in (0,1)$ for all $S\in \mathcal{D}\cup\{\emptyset\}$ and it is straightforward to see that $\underset{S\in \mathcal{D}\cup\{\emptyset\}}{\sum}\pi(S)=1$. Additionally,
\begin{equation}\label{eq:characterization 1}
    \frac{\pi(S)}{\pi(\emptyset)}= \underset{T\subseteq S}{\sum}\frac{(-1)^{|S\setminus T|}}{\rho(a^*,T)} \quad \text{for all $S\in \mathcal{D}\cup\{\emptyset\}$}.
\end{equation}
and by M\"obius inversion,
\begin{equation}\label{eq:characterization 2}
\frac{1}{\rho(a^*,S)}=\frac{\sum_{T:T\subseteq S}\pi(T)}{\pi(\emptyset)} \quad \text{for all $S\in \mathcal{D}\cup\{\emptyset\}$}.
\end{equation}
By assumption, $\rho^*$ has a DST representation $(\alpha,\succ,w)$. Hence, 
\begin{eqnarray}
  \rho^*(x,S)&= \alpha+(1-\alpha)\frac{w(x)}{w(S)} \quad &\text{ if $x\in S$ and $x$ is $\succ$-best in $S\in \mathcal{D}$,} \label{eq:characterization 3} \\
    \rho^*(y,S)&= (1-\alpha)\frac{w(y)}{w(S)} \quad &\text{ if $y\in S$ and $y$ is not $\succ$-best in $S\in \mathcal{D}$.} \label{eq:characterization 4}
\end{eqnarray}
Using the definition of $\rho^*$
\begin{equation}\label{eq:characterization 5}
\rho^*(z,S)=\Bigg[\underset{T\subseteq S}{\sum}\frac{(-1)^{|S\setminus T|}}{\rho(a^*,T)}\Bigg]^{-1}\underset{T:T\in S_z}{\sum}(-1)^{|S\setminus T|}O(z,T) \quad \text{ for all $z\in S\in \mathcal{D}$.}
\end{equation}
Using equations (\ref{eq:characterization 1}), (\ref{eq:characterization 3}), (\ref{eq:characterization 4}), and (\ref{eq:characterization 5}), we have
\begin{alignat*}{2}
 \underset{T:T\in S_x}{\sum}(-1)^{|S\setminus T|}O(x,T)&= \frac{\pi(S)}{\pi(\emptyset)}\Bigg[\alpha+(1-\alpha)\frac{w(x)}{w(S)}\Bigg] &&\text{ if $x\in S$ and $x$ is $\succ$-best in $S\in \mathcal{D}$}\\
 \underset{T:T\in S_y}{\sum}(-1)^{|S\setminus T|}O(y,T)&= \frac{\pi(S)}{\pi(\emptyset)}\Bigg[(1-\alpha)\frac{w(y)}{w(S)}\Bigg] &&\text{ if $y\in S$ and $y$ is not $\succ$-best in $S\in \mathcal{D}$}
\end{alignat*}
From here, follow the two steps below.

\noindent \textit{\underline{\textbf{Step 1: Representation of choice of $x$ in $S$.}}} Note that if $x$ is $\succ$-best in $S$ then it is also $\succ$-best in every set $T\in S_x$. Thus, using a M\"obius-inversion argument similar to that in the proof of Proposition \ref{prop:associated choice function}, we have
\begin{alignat*}{2}
   O(x,S)&=\frac{1}{\pi(\emptyset)}\underset{T:T\in S_x}{\sum}\Bigg[\alpha\pi(T)+(1-\alpha)\pi(T)\frac{w(x)}{w(T)}\Bigg] &&\text{ if $x\in S$ and $x$ is $\succ$-best in $S$},\\
\Rightarrow \rho(x,S)&=\frac{1}{\underset{T:T\subseteq S}{\sum}\pi(T)}\Bigg[\alpha\underset{T:T\in S_x}{\sum}\pi(T)+(1-\alpha)\underset{T:T\in S_x}{\sum}\pi(T)\frac{w(x)}{w(T)}\Bigg] &&\text{ if $x\in S$ and $x$ is $\succ$-best in $S$}.
\end{alignat*}

\noindent \textit{\underline{\textbf{Step 2: Representation of choice of $y$ in $S$.}}} The M\"obius inversion technique used in Step 1 is not directly applicable to the situation in which $y$ is not $\succ$-best in $S$ because there exists at least one $T\in S_y$ such that $y$ is $\succ$-best in $T$ (for example, $T=\{y\}$). When $y$ is not $\succ$-best in $S$, $\exists x \in S$ such that $x\succ y$. The implication is that $y$ is also not $\succ$-best in every set $T\in S_{xy}$. Using Lemma \ref{lemma:identity} to write
\[
        \underset{T:T\in S_{xy}}{\sum}(-1)^{|S\setminus T|}[O(y,T)-O(y,T\setminus x)]=\underset{T: T\in S_y}{\sum}(-1)^{|S\setminus T|}O(y,T)=\frac{\pi(S)}{\pi(\emptyset)}\Bigg[(1-\alpha)\frac{w(y)}{w(S)}\Bigg]
\]
and applying the M\"obius inversion similar to that in the proof of Proposition \ref{prop:associated choice function},
\begin{equation}\label{eq:yrelationship}
        O(y,S)-O(y,S\setminus\x)=\frac{1}{\pi(\emptyset)} \Bigg[(1-\alpha) \underset{T:T\in S_{xy}}{\sum}\pi(T)\frac{w(y)}{w(T)} \Bigg].
\end{equation}
Using this relationship, it can be shown by induction that when $y$ is not $\succ$-best in $S$
\begin{equation}\label{eq:yform}
    O(y,S)=\frac{1}{\pi(\emptyset)}\Bigg[\underset{T:T\in S_{y,\text{best}}}{\sum}\alpha\pi(T)+(1-\alpha)\underset{T: T\in S_y}{\sum}\pi(T)\frac{w(y)}{w(T)}\Bigg], 
\end{equation}
so $\rho(y,S)=\rho_{\alpha,\succ,w,\pi}(y,S)$.

\noindent \underline{\textit{Step 2.1: Binary menus.}} Consider a binary menu $\xy$ with $x\succ y$. From Step 1, we know that
\begin{equation}\label{eq:characterization 6}
O(x,\xy)=\frac{1}{\pi(\emptyset)}\Bigg[\alpha(\pi(x)+\pi(\xy))+(1-\alpha)\Bigg(\pi(x)+\pi(\xy)\frac{w(x)}{w(x)+w(y)}\Bigg)\Bigg].
\end{equation}
By definition
\begin{eqnarray*}
O(x,\xy)+O(y,\xy)&=&\frac{\rho(x,\xy)+\rho(y,\xy)}{\rho(a^*,\xy)}=\frac{1-\rho(a^*,\xy)}{\rho(a^*,\xy)} \\
\Rightarrow O(y,\xy)&=& \frac{1-\rho(a^*,\xy)}{\rho(a^*,\xy)}- O(x,\xy) \\
&=&\frac{1}{\pi(\emptyset)}\Bigg[\alpha\pi(y)+(1-\alpha)\Bigg(\pi(y)+\pi(\xy)\frac{w(y)}{w(x)+w(y)}\Bigg)\Bigg],
\end{eqnarray*}
where the last equation uses (\ref{eq:characterization 2}) and (\ref{eq:characterization 6}). The last equation above implies that $O(y,\xy)$ has the form in equation (\ref{eq:yform}).

\noindent \underline{\textit{Step 2.2: Menus of size $3$ or above.}} Now, consider a menu $S\in \mathcal{D}$ with at least three alternatives where $y$ is not $\succ$-best in $S$. Suppose that $O(y,A)$ has the form in equation (\ref{eq:yform}) for any proper subset $A$ of $S$. Let $x\in S$ be an arbitrary alternative that is $\succ$-better than $y$. 
Note that $O(y,S\setminus\x)$ has the form in equation (\ref{eq:yform}) by assumption. Thus, using (\ref{eq:yrelationship})
\begin{eqnarray*}
    O(y,S)&=&O(y,S\setminus\x)+\frac{1}{\pi(\emptyset)} \Bigg[(1-\alpha) \underset{T:T\in S_{xy}}{\sum}\pi(T)\frac{w(y)}{w(T)} \Bigg]\\
    &=&\frac{\alpha}{\pi(\emptyset)}\underset{T:T\in S_{y,\text{best}}}{\sum}\pi(T)+\frac{1-\alpha}{\pi(\emptyset)}\underset{T:T\in (S\setminus x)_{y}}{\sum}\pi(T)\frac{w(y)}{w(T)} + \frac{1-\alpha}{\pi(\emptyset)} \underset{T:T\in S_{xy}}{\sum}\pi(T)\frac{w(y)}{w(T)}\\
    &=&\frac{\alpha}{\pi(\emptyset)}\underset{T:T\in S_{y,\text{best}}}{\sum}\pi(T)+\frac{1-\alpha}{\pi(\emptyset)}\underset{T:T\in S_{y}}{\sum}\pi(T)\frac{w(y)}{w(T)}.
\end{eqnarray*}
Here, the second equation uses the assumption that $O(y,S\setminus\x)$ follows (\ref{eq:yform}) and $(S\setminus x)_{y,\text{best}}=S_{y,\text{best}}$ since $x\succ y$. The last equation uses $S_y=(S\setminus x)_{y}\cup S_{xy}$ and $(S\setminus x)_y\cap S_{xy}=\emptyset$. Therefore, 
\[
\rho(y,S)=\alpha \underset{T:T\in S_{y,\text{best}}}{\sum} \frac{\pi(T)}{\underset{A:A\subseteq S}{\sum}\pi(A)} +(1-\alpha) \underset{T:T\in S_{y}}{\sum} \frac{\pi(T)}{\underset{A:A\subseteq S}{\sum}\pi(A)} \frac{w(y)}{w(T)}
\]
if $y$ is not $\succ$-best in $S$. This completes the proof. $\blacksquare$

%%%%%%%%%%%%%%%%%%%%%%%%%%%%%%%%%%%%%%%%%%%%%%%%%%%%%%%%%%%%%%%%%%%
%%%%%%%%%%%%%%%%%%%%%%%%%% A new section %%%%%%%%%%%%%%%%%%%%%%%%%%%
%%%%%%%%%%%%%%%%%%%%%%%%%%%%%%%%%%%%%%%%%%%%%%%%%%%%%%%%%%%%%%%%%%%
\subsection{Proof of Proposition \ref{prop:DST-PA-MM}} The necessity is straightforward and thus is omitted. For the sufficiency, as $\rho$ satisfies Axiom \ref{axiom:BM} and its associated random choice function $\rho^*$ has a DST representation, Theorem \ref{thm:characterization DST-PA} implies that it has a DST-PA representation. It is then sufficient to prove that there exists $\phi\colon X\to (0,1)$ such that 
$\frac{\pi(T)}{\sum_{A\subseteq S}\pi(A)}
=
\prod_{x\in T}\phi(x)
\prod_{y\in S\setminus T}\bigl(1-\phi(y)\bigr)$ for all $S\in\mathcal D$ and $T\subseteq S$. By Axiom \ref{axiom:mido}, for all $x\in X$ and $S,S'$ such that $x\in S,S'\in \mathcal{D}$, we have
\[
\frac{\rho(a^*,S)}{\rho(a^*,S\setminus x)}=\frac{\rho(a^*,S')}{\rho(a^*,S'\setminus x)}.
\]
In the equation above, let $S'=\{x\}$. Then $\rho(a^*,S'\setminus x)=\rho(a^*,\emptyset)=1$ and it follows that $\rho(a^*,S)=\rho(a^*,S\setminus x)\rho(a^*,\{x\})$ for all pairs $(x,S)$ such that $x\in S\in \mathcal{D}$. Using the representation of $\rho$ that $\rho(a^*,S)=\frac{\pi(\emptyset)}{\sum_{A\subseteq S}\pi(A)}$, the equation $\rho(a^*,S)=\rho(a^*,S\setminus x)\rho(a^*,\{x\})$ is equivalent to
\begin{equation}\label{eq:MM-1}
\pi(\emptyset)\sum_{A\subseteq S}\pi(A)=[\pi(\emptyset)+\pi(\{x\})]\sum_{B\subseteq S\setminus x}\pi(B) \quad \text{ for all $x\in S\in \mathcal{D}$.}
\end{equation}
We will use (\ref{eq:MM-1}) to show
\begin{equation}\label{eq:MM-2}
    \displaystyle \pi(S) \pi(\emptyset)^{|S|-1}=\prod_{x\in S}\pi(\{x\}) \quad \text{ for all } S\in \mathcal{D}
\end{equation}
by induction based on the number of options in $S$. 

\noindent \textbf{\underline{\textit{Step 1:}}} Note that (\ref{eq:MM-2}) trivially holds when $|S|=1$ (and when $S=\emptyset$ by the standard empty-product convention). We show that (\ref{eq:MM-2}) holds when $|S|=2$. In equation (\ref{eq:MM-1}), replace $S=\xy$, we have 
\[
\pi(\emptyset)[\pi(\emptyset)+\pi(\{x\})+\pi(\{y\})+\pi(\xy)]=[\pi(\emptyset)+\pi(\{x\})][\pi(\emptyset)+\pi(\{y\})],
\]
which implies $\pi(\xy)\pi(\emptyset) =\pi(\{x\})\pi(\{y\})$. 

\noindent \textbf{\underline{\textit{Step 2:}}} Consider arbitrary $S\in \mathcal{D}$ with at least three alternatives. Suppose equation (\ref{eq:MM-2}) holds for all subsets $S'$ of $S$ such that $S'\ne S$. We will show that equation (\ref{eq:MM-2}) also holds at $S$. Take $x\in S$. From (\ref{eq:MM-1})
\begin{eqnarray}
\pi(\emptyset)\sum_{A\subseteq S}\pi(A)&=&[\pi(\emptyset)+\pi(\{x\})]\sum_{B\subseteq S\setminus x}\pi(B) \nonumber \\
\pi(\emptyset)\bigg[\sum_{B\subseteq S\setminus x}(\pi(B)+\pi(B\cup x))\bigg]&=&[\pi(\emptyset)+\pi(\{x\})]\sum_{B\subseteq S\setminus x}\pi(B) \nonumber \\   
\pi(\emptyset)\sum_{B\subseteq S\setminus x}\pi(B)+\pi(\emptyset)\sum_{B\subseteq S\setminus x}\pi(B\cup x) &=&\pi(\emptyset)\sum_{B\subseteq S\setminus x}\pi(B)+\pi(\{x\})\sum_{B\subseteq S\setminus x}\pi(B) \nonumber \\
\pi(\emptyset)\sum_{B\subseteq S\setminus x}\pi(B\cup x) &=&\pi(\{x\})\sum_{B\subseteq S\setminus x}\pi(B) \label{eq:MM-3}. 
\end{eqnarray}
Now, let $B'$ be an arbitrary proper subset of $S\setminus x$, i.e., $B' \subseteq S\setminus x$ and $B' \ne S\setminus x$. Then both $B'$ and $B'\cup x$ are proper subsets of $S$. By induction hypothesis, $\pi(B')$ and $\pi(B'\cup x)$ have the form in equation (\ref{eq:MM-2}). This means
\[
\pi(B'\cup x) \pi(\emptyset)^{|B'|}=\prod_{y\in B'\cup x}\pi(\{y\}) \quad \text{ and } \quad \pi(B') \pi(\emptyset)^{|B'|-1}=\prod_{z\in B'}\pi(\{z\}),
\]
which implies $\pi(\emptyset)\pi(B'\cup x)=\pi(\{x\})\pi(B')$. Take the summation across all possible $B'$
\begin{equation}\label{eq:MM-4}
\pi(\emptyset)\sum_{B':B'\subseteq S\setminus x, B'\ne S\setminus x} \pi(B'\cup x)=\pi(\{x\})\sum_{B':B'\subseteq S\setminus x, B'\ne S\setminus x} \pi(B')
\end{equation}
Equations (\ref{eq:MM-3}) and (\ref{eq:MM-4}) then imply that $\pi(\emptyset)\pi(S)=\pi(\{x\})\pi(S\setminus x)$. As $\pi(S\setminus x)$ has the form in equation (\ref{eq:MM-2}) by induction hypothesis, we have
\[
\pi(\emptyset)\pi(S)=\pi(\{x\})\pi(S\setminus x)=\pi(\{x\})\frac{\prod_{y\in S\setminus x} \pi(\{y\})}{\pi(\emptyset)^{|S\setminus x|-1}} \Rightarrow  \pi(S) \pi(\emptyset)^{|S|-1}=\prod_{z\in S}\pi(\{z\}).
\]
Hence, (\ref{eq:MM-2}) also holds at $S$. By induction, (\ref{eq:MM-2}) holds for all non-empty $S\in \mathcal{D}$. 

Now, let $\phi\colon X \rightarrow (0,1)$ such that $\phi(x)=\frac{\pi(\{x\})}{\pi(\{x\})+\pi(\emptyset)}$ for all $x\in X$. Equivalently, $\frac{\phi(x)}{1-\phi(x)}=\frac{\pi(\{x\})}{\pi(\emptyset)}$. Equation (\ref{eq:MM-2}) then implies $ \frac{\pi(T)}{\pi(\emptyset)}=\prod_{x\in T} \frac{\phi(x)}{1-\phi(x)}$ for $T\in \mathcal{D}\cup \{\emptyset\}$, where we adopt the convention that a product taken over the empty set is $1$. Hence, for all non-empty $S\in \mathcal{D}$, we have
\[
\frac{\sum_{T\subseteq S}\pi(T)}{\pi(\emptyset)}= \sum_{T\subseteq S} \prod_{x\in T} \frac{\phi(x)}{1-\phi(x)},
\]
The RHS of the equation above is indeed
\[
\prod_{x\in S} \Bigg[\frac{\phi(x)}{1-\phi(x)}+1\Bigg]= \prod_{x\in S} \frac{1}{1-\phi(x)}=\frac{1}{\prod_{x\in S}(1-\phi(x))}.
\]
Hence, $\frac{\sum_{T\subseteq S}\pi(T)}{\pi(\emptyset)}= \frac{1}{\prod_{x\in S}(1-\phi(x))}$. It follows that for all $T\subseteq S$ and $S\in \mathcal{D}$
\[
\frac{\pi(T)}{\sum_{T'\subseteq S}\pi(T')}=\frac{\pi(T)}{\pi(\emptyset)}\cdot \frac{\pi(\emptyset)}{\sum_{T'\subseteq S}\pi(T')}= \prod_{x\in T} \frac{\phi(x)}{1-\phi(x)} \cdot \prod_{x\in S}(1-\phi(x))=\prod_{x\in T}\phi(x) \prod_{y\in S\setminus T} (1-\phi(y)).
\]
This completes our proof. $\blacksquare$

%%%%%%%%%%%%%%%%%%%%%%%%%%%%%%%%%%%%%%%%%%%%%%%%%%%%%%%%%%%%%%%%%%%
%%%%%%%%%%%%%%%%%%%%%%%%%% Prof of Remark %%%%%%%%%%%%%%%%%%%%%%%%%%%
%%%%%%%%%%%%%%%%%%%%%%%%%%%%%%%%%%%%%%%%%%%%%%%%%%%%%%%%%%%%%%%%%%%
\subsection{Proof of Remark \ref{rem:d-DST}}
The necessity is straightforward. For the sufficiency, suppose $\rho$ is positive and violates IIA. Without loss of generality, suppose 
\[
\frac{\rho(z,\xyz)}{\rho(y,\xyz)}>\frac{\rho(z,\yz)}{\rho(y,\yz)}.
\]
Consider a preference $x\succ y\succ z$. Choose $\alpha_X$ arbitrarily such that $0<\alpha_X<\rho(x,\xyz)$ and 
\begin{equation}\label{eq:d-DST-1}
\min\bigg\{\frac{\rho(x,\xy)}{\rho(y,\xy)}, \frac{\rho(x,\xz)}{\rho(z,\xz)} \bigg\} > \max\bigg\{\frac{\rho(x,\xyz)-\alpha_X}{\rho(y,\xyz)},\frac{\rho(x,\xyz)-\alpha_X}{\rho(z,\xyz)}\bigg\} 
\end{equation}
One can always choose such $\alpha_X$ because the RHS of (\ref{eq:d-DST-1}) goes to $0$ when $\alpha_X\to \rho(x,\xyz)$. Let $w(y)=\frac{\rho(y,\xyz)}{1-\alpha_X}$, $w(z)=\frac{\rho(z,\xyz)}{1-\alpha_X}$, and $w(x)=1-w(y)-w(z)$. Then $\rho$ has a representation at $X$. Let
\[
\alpha_{\xy} = 1-\rho(y,\xy)\frac{w(x)+w(y)}{w(y)};  \alpha_{\xz} = 1-\rho(z,\xz)\frac{w(x)+w(z)}{w(z)}
\]
and 
\[
\alpha_{\yz} = 1-\rho(z,\yz)\frac{w(y)+w(z)}{w(z)}.
\]
Given (\ref{eq:d-DST-1}), it is straightforward to verify that $\alpha_{\xy}\in (0,1)$ and $\alpha_{\xz}\in (0,1)$. The fact that $\alpha_{\yz}\in (0,1)$ comes from our assumption that $\rho(z,\xyz)/\rho(y,\xyz)>\rho(z,\yz)/\rho(y,\yz)$. By definition of $\alpha_{\xy}$, $\alpha_{\xz}$, $\alpha_{\yz}$, $\rho$ has a representation at $\xy$, $\xz$, and $\yz$. Hence, $\rho$ has a d-DST representation.

%%%%%%%%%%%%%%%%%%%%%%%%%%%%%%%%%%%%%%%%%%%%%%%%%%%%%%%%%%%%%%%%%%%
%%%%%%%%%%%%%%%%%%%%%%%%%% Prof of Proposition %%%%%%%%%%%%%%%%%%%%
%%%%%%%%%%%%%%%%%%%%%%%%%%%%%%%%%%%%%%%%%%%%%%%%%%%%%%%%%%%%%%%%%%%
\subsection{Proof of Proposition \ref{prop:d-DST_characterization}}
The necessity is straightforward. For the sufficiency, let $x^*$ and $z^*$ be $\succ_{R,d}$-best and $\succ_{R,d}$-worst in $X$, respectively. Let $w(z^*)=1$. Define
\[
w(y)=w(z^*)\frac{\rho(y,{\{x^*,y,z^*\}})}{\rho(z^*,{\{x^*,y,z^*\}})} \text{ for all $y$ such that $x^*\succ_{R,d} y \succ_{R,d} z^*$}.
\]
Let $w(x^*)$ be arbitrary such that 
\[
0<w(x^*)<\min_{(z,S): \{x^*,z\}\subseteq S, z\ne x^*}\frac{\rho(x^*,S)}{\rho(z,S)}w(z).
\]
We first show 
\begin{equation}\label{eq:dDST-2}
\frac{w(y)}{w(z)}=\frac{\rho(y,\xyz)}{\rho(z,\xyz)} \quad \text{for arbitrary $x,y,z$ such that $x\succ_{R,d} y \succ_{R,d} z$}.
\end{equation}
With arbitrary $x,y,z$ such that $x\succ_{R,d} y \succ_{R,d} z$, for $z$ and $z^*$ not necessarily distinct, we have
\[
\frac{w(y)}{w(z)}=\frac{w(y)}{w(z^*)}\cdot \frac{w(z^*)}{w(z)}=\frac{\rho(y,\{x^*,y,z^*\})}{\rho(z^*,\{x^*,y,z^*\})}\frac{\rho(z^*,\{x^*,z,z^*\})}{\rho(z,\{x^*,z,z^*\})}=\frac{\rho(y,\xyz)}{\rho(z,\xyz)},
\]
where the second equation comes from the definition of $w(y),w(z)$ and the last equation comes from Axiom \ref{axiom:2-dDST}.

Now, let $S\in \mathcal{D}$ be an arbitrary menu with at least two options. Define 
\[
\alpha_S= 1- \rho(l_{\succ_{R,d}}(S),S)\frac{w(S)}{w(l_{\succ_{R,d}}(S))}.
\]
Clearly, $\alpha_S<1$. We want to show $\alpha_S>0$, which is equivalent to $1/\rho(l_{\succ_{R,d}}(S),S)> w(S)/w(l_{\succ_{R,d}}(S))$. Consider the following cases: 

\noindent \textbf{Case 1:} $x^*\in S$. We first show that $\alpha_S$ is positive when there are two elements in $S$. Suppose not. Suppose $\alpha_S\le 0$. Using $\rho(x^*,S)=1-\rho(l_{\succ_{R,d}}(S),S)$ and the definition of $\alpha_S$
\[
\frac{\rho(x^*,S)}{\rho(l_{\succ_{R,d}}(S),S)}= \frac{\alpha_S + (1-\alpha_S)\frac{w(x^*)}{w(S)}}{(1-\alpha_S)\frac{w(l_{\succ_{R,d}}(S))}{w(S)}}\le  \frac{(1-\alpha_S)\frac{w(x^*)}{w(S)}}{(1-\alpha_S)\frac{w(l_{\succ_{R,d}}(S))}{w(S)}}=\frac{w(x^*)}{w(l_{\succ_{R,d}}(S))},
\]
which is a contradiction to the definition of $w(x^*)$.

Now, suppose $S$ has at least three alternatives. As $x^*\in S$, we have $x^*=b_{\succ_{R,d}}(S)$. For all $y\in S$ such that $y\ne l_{\succ_{R,d}}(S), x^*$, using Axiom \ref{axiom:2-dDST}
\begin{equation}\label{eq:dSFT-3}
 \frac{\rho(y,S)}{\rho(l_{\succ_{R,d}}(S),S)}=\frac{\rho(y,\{x^*,y,l_{\succ_{R,d}}(S)\})}{\rho(l_{\succ_{R,d}}(S),\{x^*,y,l_{\succ_{R,d}}(S)\})}=\frac{w(y)}{w(l_{\succ_{R,d}}(S))}.   
\end{equation}
By definition of $w(x^*)$
\begin{equation}\label{eq:dSFT-4}
\frac{\rho(x^*,S)}{\rho(l_{\succ_{R,d}}(S),S)}>\frac{w(x^*)}{w(l_{\succ_{R,d}}(S))}.
\end{equation}
Using equations (\ref{eq:dSFT-3}) and (\ref{eq:dSFT-4})
\begin{eqnarray*}
1+ \frac{\rho(x^*,S)}{\rho(l_{\succ_{R,d}}(S),S)}+ \sum_{y:y\in S, y\ne x^*,l_{\succ_{R,d}}(S)}\frac{\rho(y,S)}{\rho(l_{\succ_{R,d}}(S),S)} > 1+ \frac{w(x^*)}{w(l_{\succ_{R,d}}(S))}+  \sum_{y:y\in S, y\ne x^*,l_{\succ_{R,d}}(S)}\frac{w(y)}{w(l_{\succ_{R,d}}(S))}.
\end{eqnarray*}
Equivalently,
\[
\frac{1}{\rho(l_{\succ_{R,d}}(S),S)} > \frac{w(S)}{w(l_{\succ_{R,d}}(S))}.
\]
Therefore, $\alpha_S>0$. Now, from the definition of $\alpha_S$
\[
\alpha_S=1- \rho(l_{\succ_{R,d}}(S),S)\frac{w(S)}{w(l_{\succ_{R,d}}(S))} \Rightarrow \rho(l_{\succ_{R,d}}(S),S)=(1-\alpha_S)\frac{w(l_{\succ_{R,d}}(S))}{w(S)}.
\]
From here, for all $y\in S$ such that $y\ne x^*, l_{\succ_{R,d}}(S)$, using Axiom \ref{axiom:2-dDST}
\[
\frac{\rho(y,S)}{\rho(l_{\succ_{R,d}}(S),S)}=\frac{\rho(y,\{x^*,y,l_{\succ_{R,d}}(S)\})}{\rho(l_{\succ_{R,d}}(S),\{x^*,y,l_{\succ_{R,d}}(S)\})}=\frac{w(y)}{w(l_{\succ_{R,d}}(S))} \Rightarrow \rho(y,S)=(1-\alpha_S)\frac{w(y)}{w(S)}.
\]
It follows $\rho(x^*,S)=1-\sum_{y\in S, y\ne x^*}\rho(y,S)=\alpha_S+ (1-\alpha_S)\frac{w(x^*)}{w(S)}$. Hence, $\rho$ has a representation at $S$. 

\smallskip
\noindent \textbf{Case 2:} $x^*\not \in S$. Then $x^*\ne b_{\succ_{R,d}}(S)$. Again, for all $y\in S$ such that $y\ne l_{\succ_{R,d}}(S), b_{\succ_{R,d}}(S)$, using Axiom \ref{axiom:2-dDST}
\[
\frac{\rho(y,S)}{\rho(l_{\succ_{R,d}}(S),S)}=\frac{\rho(y,\{b_{\succ_{R,d}}(S),y,l_{\succ_{R,d}}(S)\})}{\rho(l_{\succ_{R,d}}(S),\{b_{\succ_{R,d}}(S),y,l_{\succ_{R,d}}(S)\})}=\frac{w(y)}{w(l_{\succ_{R,d}}(S))}.
\]
By Axiom \ref{axiom:1-dDST}
\[
\frac{\rho(b_{\succ_{R,d}}(S),S)}{\rho(l_{\succ_{R,d}}(S),S)}> \frac{\rho(b_{\succ_{R,d}}(S),\{x^*,b_{\succ_{R,d}}(S),l_{\succ_{R,d}}(S)\})}{\rho(l_{\succ_{R,d}}(S),\{x^*,b_{\succ_{R,d}}(S),l_{\succ_{R,d}}(S)\})}=\frac{w(b_{\succ_{R,d}}(S))}{w(l_{\succ_{R,d}}(S))},
\]
where the equation comes from (\ref{eq:dDST-2}). From here, follow the same steps as in Case 1. 

Finally, by definition of $w$, it is necessary that $\sum_{x\in X}w(x)>1$. By defining $w'(x)=\frac{w(x)}{w(X)}$ for all $x\in X$, $\rho$ has a d-DST representation $((\alpha_S), \succ_{R,d}, w')$. This completes our proof. $\blacksquare$

%%%%%%%%%%%%%%%%%%%%%%%%%%%%%%%%%%%%%%%%%%%%%%%%%%%%%%%%%%%%%%%%%%%
%%%%%%%%%%%%%%%%%%%%%%%%%% Prof of Proposition %%%%%%%%%%%%%%%%%%%%
%%%%%%%%%%%%%%%%%%%%%%%%%%%%%%%%%%%%%%%%%%%%%%%%%%%%%%%%%%%%%%%%%%%

\subsection{Proof of Proposition \ref{prop:d-DST_characterization_consistency}}
The necessity is straightforward. The sufficiency proof is similar to that in the proof of Proposition \ref{prop:d-DST_characterization}. Let $x^*$ and $z^*$ be $\succ_{R,d^c}$-best and $\succ_{R,d^c}$-worst in $X$, respectively. Let $w(z^*)=1$. Define
\[
w(y)=w(z^*)\frac{\rho(y,{\{x^*,y,z^*\}})}{\rho(z^*,{\{x^*,y,z^*\}})} \text{ for all $y$ such that $x^*\succ_{R,d^c} y \succ_{R,d^c} z^*$}.
\]
and
\begin{equation}\label{eq:d-DST_choosing_x*}
 \max_{y\in X\setminus x^*} w(y)<w(x^*)<\min_{(z,S): \{x^*,z\}\subseteq S, z\ne x^*}\frac{\rho(x^*,S)}{\rho(z,S)}w(z).   
\end{equation}
The proof then proceeds as in the proof of Proposition \ref{prop:d-DST_characterization}. The last thing to show is that $w(x)>w(y)$ whenever $x \succ_{R,d^c} y$ so that the two systems are consistent. We proceed in two steps.

\noindent \underline{\textit{Step 1: Showing $w(y)>w(z)$ whenever $x^*\succ_{R,d^c} y \succ_{R,d^c} z$.}} As in the proof of Proposition \ref{prop:d-DST_characterization} (see (\ref{eq:dDST-2})), we have
\begin{equation}\label{eq:dDST-2_consistency}
\frac{w(y)}{w(z)}=\frac{\rho(y,\{x^*,y,z\})}{\rho(z,\{x^*,y,z\})} \quad \text{for arbitrary $y,z$ such that $x^*\succ_{R,d^c} y \succ_{R,d^c} z$}.
\end{equation}
By definition of $\succ_{R,d^c}$, this would imply that $w(y)>w(z)$.

\noindent \underline{\textit{Step 2: Showing that choosing $w(x^*)$ is feasible.}} Let $x^*\succ_{R,d^c} y^* \succ_{R,d^c} \text{all other options}$. It is feasible to choose $w(x^*)$ as in (\ref{eq:d-DST_choosing_x*}) if for arbitrary $z\in S\setminus x^*$ and $S\supseteq \{x^*,z\}$, with $z$ and $y^*$ not necessarily distinct
\[
\frac{w(y^*)}{w(z)}<\frac{\rho(x^*,S)}{\rho(z,S)} \Leftrightarrow \frac{\rho(y^*,\{x^*,y^*,z\})}{\rho(z,\{x^*,y^*,z\})}<\frac{\rho(x^*,S)}{\rho(z,S)},
\]
where the second inequality uses (\ref{eq:dDST-2_consistency}). The second inequality holds due to Axiom \ref{axiom:1.5-dDST} whenever $z\ne y^*$. When $z=y^*$, the inequality follows from the definition of $\succ_{R,d^c}$, as $x^*\succ_{R,d^c} y^*$ by definition of $x^*$ and $y^*$. Steps 1 and 2 then imply that $w(x)>w(y)$ whenever $x \succ_{R,d^c} y$. This completes our proof. $\blacksquare$

%%%%%%%%%%%%%%%%%%%%%%%%%%%%%%%%%%%%%%%%%%%%%%%%%%%%%%%%%%%%%%%%%%%
%%%%%%%%%%%%%%%%%%%%%%%%%% Prof of Proposition %%%%%%%%%%%%%%%%%%%%
%%%%%%%%%%%%%%%%%%%%%%%%%%%%%%%%%%%%%%%%%%%%%%%%%%%%%%%%%%%%%%%%%%%
\subsection{Proof of Proposition \ref{prop:h-DST}} The proof is divided into Lemmata \ref{lema:h-DST-1}-\ref{lema:h-DST-3}.

\begin{lemma}\label{lema:h-DST-1} Any point in $\PRUM$ is arbitrarily close to some point in $\PhDST$.
\end{lemma}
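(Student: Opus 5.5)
The plan is to use the standard description of $\PRUM$ as the convex hull of deterministic rational choice functions: by definition (in the setting $\mathcal{D}=\mathcal{X}$ of this appendix), every $\rho\in\PRUM$ can be written as $\rho=\sum_{j=1}^{m}\mu_j\,\delta_{\succ_j}$, where $\mu$ is a probability vector over finitely many linear orders $\succ_1,\dots,\succ_m$ on $X$. Here $\delta_{\succ}(x,S)=\mathbbm{1}(x \text{ is } \succ\text{-best in } S)$ is the deterministic choice function of $\succ$. The idea is to approximate each $\delta_{\succ_j}$ by a DST and then mix with the same weights.

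First, fix any weight function, say the uniform one $w(x)=1/|X|$. For $\varepsilon\in(0,1)$ set $\alpha=1-\varepsilon$. Then $\rho_{1-\varepsilon,\succ_j,w}$ is a DST with representation $(1-\varepsilon,\succ_j,w)$. From the DST formula,
\[
\rho_{1-\varepsilon,\succ_j,w}(x,S)-\delta_{\succ_j}(x,S)=\varepsilon\Bigl(\tfrac{w(x)}{w(S)}-\mathbbm{1}(x\text{ is }\succ_j\text{-best in }S)\Bigr),
\]
which is bounded in absolute value by $\varepsilon$ uniformly over all $(x,S)$.

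Second, define $\rho^{\varepsilon}=\sum_j\mu_j\,\rho_{1-\varepsilon,\succ_j,w}$. This is an h-DST by Definition \ref{defn:heterogeneous_DST}, taking the triples $(1-\varepsilon,\succ_j,w)$ with shares $\mu_j$ (and dropping any $j$ with $\mu_j=0$). By the triangle inequality, $|\rho^\varepsilon(x,S)-\rho(x,S)|\le\varepsilon$ for all $(x,S)$. Since $X\times\mathcal{X}$ is finite, this gives convergence $\rho^\varepsilon\to\rho$ in any norm on the finite-dimensional affine hull, and in particular in its relative topology. Letting $\varepsilon\to0$ proves the lemma.

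There is essentially no hard step. The only point needing care is to invoke the right description of $\PRUM$, namely finite mixtures of linear-order maximizers, and not a characterization via Block--Marschak inequalities. With that description in hand, the whole argument reduces to the uniform $\varepsilon$-bound above together with the restriction $\alpha<1$, which is why only approximation, not exact membership, is obtained.
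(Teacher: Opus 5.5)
Your proof is correct. It shares the paper's skeleton: write the RUM as a finite mixture of linear-order maximizers, replace each component by a DST, keep the mixing weights, and apply the triangle inequality. The difference is which DST parameter you send to the limit.

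\begin{itemize}
\item \textbf{The paper's route.} It fixes an arbitrary $\alpha\in(0,1)$, common to all types, and sharpens System~1. It chooses Luce weights $w_i$ with $\bigl|w_i(x)/w_i(S)-\mathbbm{1}(x\text{ is }\succ_i\text{-best in }S)\bigr|\le\varepsilon$, for example $w_i=\exp(\lambda u_i)$ with $\lambda$ large. The resulting error is $(1-\alpha)\varepsilon$. This needs a small, unstated limit argument to confirm that such weights exist uniformly over all $(x,S)$.
\item \textbf{Your route.} You fix the weights (uniform) and send $\alpha=1-\varepsilon\to1$. The error bound is then immediate and exact: $\varepsilon$ times a difference of two numbers in $[0,1]$, with no limit argument on the weights. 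This is the intuition the paper itself states informally after the proposition (``by setting $\alpha$ sufficiently close to 1''), although its appendix proof does not use it.
\end{itemize}

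The two arguments also show density of different subclasses of h-DST in $\mathrm{P}_{\mathrm{RUM}}$. Yours shows that mixtures sharing a single uniform Luce weight, i.e.\ trembling-hand-type DSTs, are already dense. The paper's shows that density holds at any fixed cognitive weight. In that version the population's degree of deliberation is held constant, and the approximation comes entirely from System~1, in the spirit of mixed-logit approximation.
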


\begin{proof}
Let $\bar{\rho}\in \PRUM$. Then there exists preferences $\succ_i$ and associated weight $\eta_i\in (0,1]$, for $i=1,\dots,q$ such that $\sum_{i=1}^q \eta_i=1$ and
\[
\bar{\rho}(x,S)= \sum_{i=1}^q \eta_i \mathbbm{1}(x \text{ is $\succ_i$-best in $S$}) \quad \text{ for all $x\in S$ and $S\in \mathcal{X}$.}
\]
Fix $\varepsilon>0$ arbitrarily small. Call $w_i\colon X\to \mathbb{R}_{++}$ an approximation of $\succ_i$ if 
\[
\bigg|\frac{w_i(x)}{w_i(S)} -  \mathbbm{1}(x \text{ is $\succ_i$-best in $S$})\bigg|\le \varepsilon \quad \text{ for all $x\in S$ and $S\in \mathcal{X}$.}
\]
One example is $w_i(x)=\exp(\lambda u_i(x))$ for all $x\in X$, where $u_i$ is a utility function representing $\succ_i$ and $\lambda>0$ sufficiently large. Let $\rho'\in \PhDST$ such that
\[
\rho'(x,S)= \sum_{i=1}^q \eta_i \rho_{\alpha,\succ_i,w_i}(x,S) \quad \text{ for all $x\in S$ and $S\in \mathcal{X}$,}
\]
where $\rho_{\alpha,\succ_i,w_i}$ is a random choice function having DST representation $(\alpha,\succ_i,w_i)$ with $w_i$ being an approximation of $\succ_i$. For all $x\in S\in \mathcal{X}$, $|\rho'(x,S)-\bar{\rho}(x,S)|$ is equal to
\begin{eqnarray*}
&&\bigg|\sum_{i=1}^q\eta_i\bigg(\alpha  \mathbbm{1}(x \text{ is $\succ_i$-best in $S$}) + (1-\alpha)\frac{w_i(x)}{w_i(S)} \bigg)- \sum_{i=1}^q \eta_i \mathbbm{1}(x \text{ is $\succ_i$-best in $S$})\bigg| \\
&=&(1-\alpha) \bigg|\sum_{i=1}^q \eta_i \bigg(\frac{w_i(x)}{w_i(S)} - \mathbbm{1}(x \text{ is $\succ_i$-best in $S$})\bigg) \bigg| \\
&\le &(1-\alpha)\varepsilon \sum_{i=1}^q \eta_i = (1-\alpha)\varepsilon < \varepsilon.
\end{eqnarray*}
In the last line, the first inequality follows from the fact that $w_i$ approximates $\succ_i$. As $|\rho'(x,S)-\bar{\rho}(x,S)|<\varepsilon$ for all $x\in S\in \mathcal{X}$, $\rho'$ lies sufficiently close to $\bar\rho$. $\blacksquare$
\end{proof}

\begin{lemma}\label{lema:h-DST-2} $\PhDST$ does not intersect the facets of $\PRUM$. 
\end{lemma}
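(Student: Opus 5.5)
We use the Falmagne description of $\PRUM$ for $\mathcal{D}=\mathcal{X}$. A random choice function lies in $\PRUM$ if and only if all its Block--Marschak polynomials are non-negative:
\[
K(x,A)=\sum_{B:\,A\subseteq B\subseteq X}(-1)^{|B\setminus A|}\rho(x,B)\ \ge 0 \quad \text{for all } x\in A\subseteq X .
\]
The facets of this polytope lie in the hyperplanes where some $K(x,A)=0$. It therefore suffices to show that every $\rho\in\PhDST$ satisfies $K(x,A)>0$ for every pair $(x,A)$ with $x\in A$. We include $A=X$, where $K(x,X)=\rho(x,X)$.

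Block--Marschak polynomials are linear in $\rho$. Since an h-DST is a convex combination $\sum_i\eta_i\rho_{\alpha_i,\succ_i,w_i}$, we have $K_\rho(x,A)=\sum_i\eta_i K_{\rho_i}(x,A)$. This expression is strictly positive as soon as each summand is non-negative and at least one is strictly positive. So the task reduces to one DST $\rho=\rho_{\alpha,\succ,w}$ with $\alpha\in(0,1)$.

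For a single DST, linearity again splits the polynomial as
\[
K_\rho(x,A)=\alpha K_{\succ}(x,A)+(1-\alpha)K_{w}(x,A),
\]
where $K_\succ$ belongs to deterministic maximization of $\succ$ and $K_w$ to the Luce rule with weights $w$.

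The deterministic part is a RUM with a point-mass ranking, so $K_\succ(x,A)\ge0$. In fact $K_\succ(x,A)$ equals $1$ exactly when $x$ is $\succ$-best in $A$ and every element of $X\setminus A$ is ranked above everything in $A$, and $0$ otherwise. The Luce part is a RUM by Block--Marschak, so $K_w(x,A)\ge0$. Hence $K_\rho\ge0$, and the whole argument reduces to showing strict positivity of $K_w(x,A)$.

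The key step, and the one needing care, is to prove $K_w(x,A)>0$ for every $x\in A\subseteq X$ and every strictly positive $w$. We plan to do this through the Gumbel random-utility representation. Set $U_y=\log w(y)+\varepsilon_y$ with the $\varepsilon_y$ i.i.d.\ Gumbel and full support on $\mathbb{R}$. The polynomial $K_w(x,A)$ equals the probability of the event
\[
U_z>U_x>U_y \quad \text{for all } z\in X\setminus A \text{ and all } y\in A\setminus x .
\]
This event is a nonempty open set of noise realizations, so it has positive probability. If one prefers to avoid the probabilistic identification, an alternative is the explicit inclusion--exclusion formula
\[
K_w(x,A)=w(x)\sum_{C\subseteq X\setminus A}(-1)^{|C|}\frac{1}{w(A\cup C)} .
\]
By $1/t=\int_0^\infty e^{-st}\,ds$, this becomes
\[
K_w(x,A)=w(x)\int_0^\infty e^{-s\,w(A)}\prod_{z\notin A}\bigl(1-e^{-s\,w(z)}\bigr)\,ds,
\]
whose integrand is strictly positive for $s>0$.

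Combining the steps, $K_\rho(x,A)\ge(1-\alpha)K_w(x,A)>0$ for every DST, and therefore for every h-DST. Thus no element of $\PhDST$ lies on any facet of $\PRUM$. The same computation also gives $\PhDST\subseteq\PRUM$, which is consistent with the other parts of Proposition \ref{prop:h-DST}.
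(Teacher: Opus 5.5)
Your proof is correct and takes essentially the same route as the paper. Both arguments use linearity of the Block--Marschak polynomials over the mixture, then non-negativity of the deterministic component, then strict positivity of the Luce component through the same Laplace-transform integral $\int_0^\infty e^{-s\,w(A)}\prod_{z\notin A}\bigl(1-e^{-s\,w(z)}\bigr)\,ds$. The differences are cosmetic: the paper argues by contradiction and handles the deterministic term with an explicit case split on whether $x^*$ is best in $S^*$, whereas you argue directly and use the RUM reading of $K_\succ$, and your Gumbel-event argument is a valid alternative for the positivity step.
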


\begin{proof} Proof is by a contradiction. Suppose $\rho\in \PhDST$ and $\rho$ lies on one of the facets of $\PRUM$. Note $\PRUM$ is a closed polytope characterized by Block-Marschak inequalities \citep{falmagne1978representation}:
\[
  \sum_{S': S^*\subseteq S'\subseteq X} (-1)^{|S'\setminus S^*|}\rho(x,S')\ge 0 \quad \text{ for all $x\in S^*$ and $S^*\in \mathcal{X}$.}  
\]
The fact that $\rho$ lies on one of the facets of $\PRUM$ implies that there exists $(x^*,S^*)$ with $x^*\in S^*\in \mathcal{X}$ such that 
\begin{equation}\label{eq:h-DST-0}
  \sum_{S': S^*\subseteq S'\subseteq X} (-1)^{|S'\setminus S^*|}\rho(x^*,S')=0.  
\end{equation}
As $\rho\in \PhDST$, by definition, there exists weights $(\eta_1,\dots,\eta_k)\in (0,1)^k$ summing to $1$ and RCF $\rho_{\alpha_i,\succ_i,w_i}$ having DST representation $(\alpha_i,\succ_i,w_i)$ such that
\begin{equation}\label{eq:h-DST-1}
\rho(x,S)=\sum_{i=1}^k \eta_i \rho_{\alpha_i,\succ_i,w_i}(x,S) \quad \text{ for all } x\in S, \text{ for all } S\in \mathcal{X}.
\end{equation}
As each $\rho_{\alpha_i,\succ_i,w_i}$ also admits a RUM representation, it satisfies the Block-Marschak inequalities at $(x^*,S^*)$:
\begin{equation}\label{eq:h-DST-2}
  \sum_{S': S^*\subseteq S'\subseteq X} (-1)^{|S'\setminus S^*|}\rho_{\alpha_i,\succ_i,w_i}(x^*,S')\ge 0.
\end{equation}
It follows from (\ref{eq:h-DST-0}), (\ref{eq:h-DST-1}), and (\ref{eq:h-DST-2}) that inequality (\ref{eq:h-DST-2}) must hold with equality. Let $Y = X \setminus S^*$. Any set $S'$ satisfying $S^* \subseteq S' \subseteq X$ can be uniquely represented as $S' = S^* \cup C$ for some subset $C \subseteq Y$. Under this substitution, the term $|S' \setminus S^*|$ becomes $|(S^* \cup C) \setminus S^*| = |C|$. Therefore, the fact that (\ref{eq:h-DST-2}) holds with equality is equivalent to
\begin{equation}\label{eq:h-DST-3}
  \sum_{C\subseteq Y} (-1)^{|C|}\rho_{\alpha_i,\succ_i,w_i}(x^*,S^* \cup C)= 0.
\end{equation}
We will obtain a contradiction to equation (\ref{eq:h-DST-3}) by showing that
\[
\sum_{C\subseteq Y} (-1)^{|C|}\rho_{\alpha_i,\succ_i,w_i}(x^*,S^* \cup C)\ge (1-\alpha_i)w_i(x^*)\sum_{C\subseteq Y}\frac{(-1)^{|C|}}{w_i(S^*)+w_i(C)}>0.
\]
In Step 1 below, we prove the first inequality. The second inequality is proved in Step 2.

\smallskip
\textbf{Step 1:} First, suppose $x^*$ is not the $\succ_i$-best element in $S^*$. Then it is not the best option in $S^*\cup C$. Using the DST representation of $\rho_{\alpha_i,\succ_i,w_i}$, for all $C\subseteq Y$, $\rho_{\alpha_i,\succ_i,w_i}(x^*,S^* \cup C)=(1-\alpha_i)\frac{w_i(x^*)}{w_i(S^*)+w_i(C)}$. It follows
\begin{equation}\label{eq:h-DST-4}
\sum_{C\subseteq Y} (-1)^{|C|}\rho_{\alpha_i,\succ_i,w_i}(x^*,S^* \cup C)= (1-\alpha_i)w_i(x^*)\sum_{C\subseteq Y}\frac{(-1)^{|C|}}{w_i(S^*)+w_i(C)}.
\end{equation}
Now, suppose $x^*$ is the $\succ_i$-best element in $S^*$. Partition $Y=Y^-\cup Y^+$ such that $x^*$ is the $\succ_i$-best element in $S^*\cup C$ for any $C\subseteq Y^-$, but it is not the best element in $S^*\cup C$ for any non-empty $C\subseteq Y^+$. Using the DST representation of $\rho_{\alpha_i,\succ_i,w_i}$,
\begin{alignat*}{3}
    \rho_{\alpha_i,\succ_i,w_i}(x^*,S^* \cup C)&=\alpha_i+ (1-\alpha_i)\frac{w_i(x^*)}{w_i(S^*)+w_i(C)} \quad &&\text{ for all $C\subseteq Y^-$}, \\
    \rho_{\alpha_i,\succ_i,w_i}(x^*,S^* \cup C)&= (1-\alpha_i)\frac{w_i(x^*)}{w_i(S^*)+w_i(C)} \quad &&\text{ for all $C \subseteq Y$ such that $C \cap Y^+ \neq \emptyset$}.
\end{alignat*}
It follows,
\begin{equation}\label{eq:h-DST-5}
\sum_{C\subseteq Y} (-1)^{|C|}\rho_{\alpha_i,\succ_i,w_i}(x^*,S^* \cup C)= \alpha_i \sum_{C\subseteq Y^-}(-1)^{|C|} + (1-\alpha_i)w_i(x^*)\sum_{C\subseteq Y}\frac{(-1)^{|C|}}{w_i(S^*)+w_i(C)}
\end{equation}
When $Y^-=\emptyset$, we have $\alpha_i \sum_{C\subseteq Y^-}(-1)^{|C|}=\alpha_i>0$. When $Y^-\ne \emptyset$, the binomial theorem says that $\alpha_i \sum_{C\subseteq Y^-}(-1)^{|C|}=0$. Therefore, combining (\ref{eq:h-DST-4}) and (\ref{eq:h-DST-5}), we have
\[
\sum_{C\subseteq Y} (-1)^{|C|}\rho_{\alpha_i,\succ_i,w_i}(x^*,S^* \cup C)\ge (1-\alpha_i)w_i(x^*)\sum_{C\subseteq Y}\frac{(-1)^{|C|}}{w_i(S^*)+w_i(C)}.
\]

\smallskip
\textbf{Step 2:} We first show that the following identity holds:
$$
\Gamma= \sum_{C\subseteq Y}\frac{(-1)^{|C|}}{w_i(S^*)+w_i(C)} = \int_0^\infty \exp (-tw_i(S^*)) \prod_{y \in Y} \left(1 - \exp(-t w_i(y))\right) dt.
$$
Since $w_i(x) > 0$ for all $x \in X$, $w_i(S^*) + w_i(C)>0$ for all $C\subseteq Y$. For any positive number $z$, the Laplace transform of the constant function 1 implies $1/z = \int_0^\infty \exp(-tz) dt$. Applying this identity, we obtain:
$$
\Gamma = \sum_{C \subseteq Y} (-1)^{|C|} \int_0^\infty \exp\left(-t  (w_i(S^*) + w_i(C))\right) dt.
$$
Since the summation is over a finite number of terms, we can interchange the order of summation and integration by the linearity of the integral:
\begin{align*}
\Gamma &= \int_0^\infty \sum_{C \subseteq Y} (-1)^{|C|} \exp\left(-t  (w_i(S^*) + w_i(C))\right) dt = \int_0^\infty \sum_{C \subseteq Y} (-1)^{|C|} \exp (-tw_i(S^*)) \exp (-tw_i(C)) dt
\end{align*}
The term $\exp (-tw_i(S^*))$ is independent of the summation variable $C$ and can be factored out of the sum:
$$
\Gamma = \int_0^\infty \exp (-tw_i(S^*)) \left( \sum_{C \subseteq Y} (-1)^{|C|} \exp (-tw_i(C)) \right) dt
$$
We now focus on simplifying the inner sum. The exponential of a sum is the product of exponentials: $\exp (-tw_i(C)) = \prod_{x \in C} \exp(-t w_i(x))$. The inner sum becomes:
$$
\sum_{C \subseteq Y} (-1)^{|C|} \prod_{x \in C} \exp(-t w_i(x)) = \sum_{C \subseteq Y} \prod_{x \in C} [-\exp(-t w_i(x))]=\prod_{y \in Y} [1 - \exp(-t w_i(y))]
$$
Substituting this product back into the integral expression for $\Gamma$ yields:
$$
\Gamma = \int_0^\infty \exp (-tw_i(S^*)) \prod_{y \in Y} \left(1 - \exp(-t w_i(y))\right) dt
$$
This integral representation shows that $\Gamma$ is strictly positive. This is because the domain of integration is $(0, \infty)$ and because the expression inside the integral is strictly positive as (1) $\exp\left(-t w_i(S^*)\right)$ is always strictly positive; and (2) for all $y\in X\setminus S^*$, the condition $w_i(y) > 0$ implies that $tw_i(y) > 0$. It follows that $0 < \exp(-t w_i(y)) < 1$, which ensures that each factor $(1 - \exp(-t w_i(y)))$ is strictly positive.

Steps 1 and 2 together generate a contradiction to equation (\ref{eq:h-DST-3}). Hence, the initial assumption is wrong and $\PhDST$ does not contain any points lying on a facet of $\PRUM$. $\blacksquare$
\end{proof}

\begin{lemma}\label{lema:h-DST-3} $\PRUM=\clPhDST$. 
\end{lemma}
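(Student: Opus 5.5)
The plan is to prove the two inclusions $\clPhDST\subseteq\PRUM$ and $\PRUM\subseteq\clPhDST$ separately. For the first inclusion, I will show $\PhDST\subseteq\PRUM$ and then use that $\PRUM$ is closed.

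To see that each DST lies in $\PRUM$, write $\rho_{\alpha,\succ,w}$ as the convex combination $\alpha\rho_\succ+(1-\alpha)\rho_w$. Here $\rho_\succ$ is deterministic maximization of $\succ$, which is a RUM with a degenerate distribution on the single order $\succ$. The Luce rule $\rho_w$ is a RUM by \cite{block_marschak1960}; equivalently, it is logit with i.i.d.\ Gumbel shocks, inducing a distribution over linear orders. A mixture of two distributions over linear orders is again a distribution over linear orders, so every DST is a RUM.

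Since $\PhDST$ is the convex hull of $\PDST$ and $\PRUM$ is convex (it is the set of mixtures of deterministic rational choice functions), we get $\PhDST\subseteq\PRUM$. Finally, $\PRUM$ is closed, being the convex hull of finitely many points (one for each linear order on the finite set $X$), or equivalently the polytope cut out by the Block--Marschak inequalities \citep{falmagne1978representation}. Taking closures therefore gives $\clPhDST\subseteq\PRUM$.

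For the reverse inclusion, I will invoke Lemma \ref{lema:h-DST-1}. For every $\bar\rho\in\PRUM$ and every $\varepsilon>0$ it produces some $\rho'\in\PhDST$ with $\sup_{x\in S\in\mathcal{X}}|\rho'(x,S)-\bar\rho(x,S)|<\varepsilon$. Because the space of random choice functions is finite-dimensional, all norms are equivalent, so this sup-norm approximation is convergence in the relative topology of the affine hull mentioned in the footnote. Hence $\bar\rho\in\clPhDST$. Combining the two inclusions yields $\PRUM=\clPhDST$.

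No step is really an obstacle. The only point needing care is making the RUM membership of the Luce component explicit, by citing Block--Marschak or the logit construction, and noting that the topology is the relative one, so closedness and approximation are both measured in the same finite-dimensional space.
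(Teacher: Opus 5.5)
Your proposal is correct and follows the same route as the paper. The paper also uses $\mathrm{P}_{\mathrm{h\text{-}DST}}\subseteq\mathrm{P}_{\mathrm{RUM}}$ together with closedness of the polytope $\mathrm{P}_{\mathrm{RUM}}$ for one inclusion, and Lemma~\ref{lema:h-DST-1} for the other; you additionally spell out why every DST is a RUM, which the paper only asserts, citing the Block--Marschak result for Luce, convexity of $\mathrm{P}_{\mathrm{RUM}}$, and the definition of h-DST as finite mixtures.
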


\begin{proof}
To show $\PRUM=\clPhDST$, we show $\PRUM\subseteq\clPhDST$ and $\clPhDST\subseteq \PRUM$. Note that $\PhDST\subseteq \PRUM$. Additionally, $\PRUM$ is a closed polytope. Hence, $\clPhDST\subseteq \PRUM$. For the other direction, Lemma \ref{lema:h-DST-1} shows that any point in $\PRUM$ is arbitrarily close to some point in $\PhDST$. Hence, $\PRUM\subseteq\clPhDST$ by the definition of the closure of a set. From here, we can conclude that $\PRUM=\clPhDST$. $\blacksquare$
\end{proof}

\end{appendices}
\end{document}